\begin{document}
\title{Fast Polarization for Processes with Memory} 
\date{}
\author{\IEEEauthorblockN{Boaz~Shuval, Ido~Tal\\
Department of Electrical Engineering,\\
Technion, Haifa 32000, Israel.\\
Email: \{\texttt{bshuval@campus}, \texttt{idotal@ee}\}\texttt{.technion.ac.il}}
\thanks{An abbreviated version of this article will be submitted to ISIT 2018.}}

\maketitle

\begin{abstract}
	Fast polarization is crucial for the performance guarantees of polar codes. In the memoryless
    setting, the rate of polarization is known to be exponential in the square root of the block
    length. A complete characterization of the rate of polarization for models with memory has been
    missing. Namely, previous works have not addressed fast polarization of the high entropy set under memory.
    We consider polar codes for processes with memory that are characterized by an underlying ergodic finite-state Markov chain. We show that the rate of polarization for these processes is the same as in the memoryless setting, 
	both for the high and for the low entropy sets. 
\end{abstract}
\iftoggle{IEEEtran}{
\begin{IEEEkeywords}
	Polar codes, rate of polarization, fast polarization, channels with memory, Markov processes
\end{IEEEkeywords}}{}

\section{Introduction}
\iftoggle{IEEEtran}{\IEEEPARstart{M}{emory}}{Memory} 
is prevalent in many communication scenarios. Examples include finite-state channels (FSCs) such as intersymbol interference channels and correlated fading channels, and coding for input-constrained systems. 
In this research we show that polar codes can be used directly 
for a large class of scenarios with memory. This allows one to leverage the attractive properties of
polar codes --- such as low complexity encoding and decoding, explicit construction, and sound
theoretical basis --- for scenarios with memory. 

A fundamental problem of information theory is estimating a block $\rv{X}_1^N = (\rv{X}_1, \rv{X}_2,
\ldots, \rv{X}_N)$ from observations $\rv{Y}_1^N = (\rv{Y}_1,\rv{Y}_2,\ldots, \rv{Y}_N$). In a channel-coding scenario, $\rv{X}_1^N$ may be the input to a channel and $\rv{Y}_1^N$ its output. In a source-coding scenario, $\rv{X}_1^N$ may be an information source to be compressed and $\rv{Y}_1^N$ observations available to the decompressor. In either case, there is redundancy in $\rv{X}_1^N$: added redundancy in channel coding, or removed redundancy in source coding. A good channel code needs to add the least amount of redundancy while still allowing for correct decoding, whereas a good source code eliminates as much redundancy as possible while still allowing reconstruction subject to a distortion criterion. 

Polar codes~\cite{Arikan_2009} were first developed for binary-input, symmetric, memoryless,
channels. They provide a systematic framework to handle this fundamental problem. They are block
codes, whose encoding operation consists of an explicit invertible transformation between
$\rv{X}_1^N$ and $\rv{U}_1^N$. A portion of $\rv{U}_1^N$ is revealed to the decoder or decompressor.
The decoder employs \emph{successive cancellation} (SC) decoding, recovering $\rv{U}_1^N$
incrementally: first $\rv{U}_1$, then $\rv{U}_2$, and so on. Each successive decoding operation uses
the observations $\rv{Y}_1^N$ and the outcome of the previous decoding operations as well as the
revealed portion of $\rv{U}_1^N$. The polarization phenomenon implies that for large enough $N$, the decoding operations polarize to two sets: a `low entropy' set and a `high entropy' set. These sets can be determined beforehand, and prescribe which portion of $\rv{U}_1^N$ to reveal to the decoder or decompressor. 

The \emph{rate} of polarization is particularly important for the analysis of polar codes. Their error-free performance at any achievable rate is due to polarization happening sufficiently fast. Fast polarization to the low entropy set for the memoryless setting was established in~\cite[Theorem 2]{Arikan_2009},~\cite{Arikan_Telatar_2009}.  

Remarkably, polar codes were extended to a plethora of other memoryless scenarios, including non-binary channels~\cite{sasoglu_nonbinary,sasoglu_thesis}, source coding~\cite{Korada_source,arikan_2010_source}, wiretap channels~\cite{Hof_wiretap,Mahdavifar_wiretap}, asymmetric channels and sources~\cite{Honda_Yamamoto_2013}, and more. See the survey paper~\cite[Section IV]{Arikan_challenges} 
for a large list of extensions and applications. 
Many of these applications are contingent upon fast polarization to the high-entropy set; for memoryless settings, this was established in~\cite{Korada_source}. 

The main tools used for polar code analysis in the memoryless case are the focus of \Cref{sec_polar toolbox}. In particular, we present Ar\i{}kan's probabilistic approach, which is at the heart of many polarization results. It is this approach that we extend to settings with memory.

The study of polar codes for scenarios with memory began with~\cite[Chapter 5]{sasoglu_thesis}.
\c{S}a\c{s}o\u{g}lu  was able to show that polarization indeed occurs for a certain class of
processes with memory. In the subsequent work~\cite{sasoglu_2016} (see also the journal
version,~\cite{sasoglu_Tal_mem}), the authors were
able to prove polarization for a more general class of processes with memory.  
One  advancement made in that paper was regarding the rate of polarization under memory. The authors showed that polarization to the low entropy set is fast even for processes with memory.  Fast polarization to the high entropy set was not addressed.

A practical decoding algorithm for polar codes for FSCs was suggested in~\cite{Wang_2015} (see also~\cite{wang2014joint} for an earlier version, specific to intersymbol interference channels).
This algorithm is an extension of SC decoding, taking into account the underlying state structure. Its increase in complexity relative to the complexity of SC decoding is polynomial with the number of states. Thus, it is practical for a moderate number of states. The authors also showed~\cite[Theorem 3]{Wang_2015} that their elegant scheme from~\cite{Honda_Yamamoto_2013} can be applied to models with memory. To this end, they required the additional assumption of fast polarization both to the low and high entropy sets. 

This paper completes the picture. We show that for a large class of processes with memory, polarization is fast both to the low entropy and high entropy sets. Fast polarization to the low entropy set will follow from a specialization of~\cite{sasoglu_2016}. Fast polarization to the high entropy set, \Cref{thm_fast polarization of Z to 1}, is the main result of this paper. 
 Consequently, polar codes can be used in settings with memory  
with vanishing error probability.

Specifically, we consider stationary processes whose memory can be encompassed by an underlying
finite-state ergodic\footnote{I.e., aperiodic and irreducible.} Markov chain. This Markov chain
governs the joint distribution of $\rv{X}_1^N$ and $\rv{Y}_1^N$, and is assumed to be hidden. The
model is described in detail in \Cref{sec_FAIM}. This family of processes includes, as special
cases, finite-state Markov channels~\cite[Chapter 4.6]{Gallager} with an ergodic state sequence,
discrete ergodic sources with finite memory, and many input-constrained systems (e.g.,
$(d,k)$-runlength limited (RLL) constraint~\cite{marcus_roth_siegel}, with and without noise).

The tools we develop for this family of processes with memory are the subject of \Cref{sec_fast polarization}.
Our tools mirror those used in the memoryless construction. Thus, we expect that this addition to the `polar toolbox' will enable natural adaptation of many polar coding results to settings with memory.

\section{Notation}\label{sec_notation conventions and reminders}
A set of elements is denoted as a list in braces, e.g., $\{1,2,\ldots,L\}$. The number of elements in a set $A$ is denoted by $|A|$. The disjoint union of two sets $A_0,A_1$ is denoted by $A_0 \cupdot A_1$. To use this notation, $A_0$ and $A_1$ must indeed be disjoint.  
Open and closed intervals are denoted by $(a,b)$ and $[a,b]$, respectively. 

We denote $y_j^k = \begin{bmatrix} y_j &y_{j+1}& \cdots & y_k \end{bmatrix}$ for $j<k$. For an arbitrary set of indices $F$ we denote $y_F = \{y_j, j \in F\}$. 

In a summation involving multiple variables, if only one variable is being summed, we will make this explicit by underlining it. For example, in $\sum_{\underline{a}\neq b}f(a,b)$ we sum over the values of $a$ that are different than $b$, and $b$ is fixed. In particular, $\sum_{a\neq b}f(a,b) = \sum_b \sum_{\underline{a}\neq b} f(a,b)$. 

For a sequence of binary numbers $B_1,B_2,\ldots, B_n$ we define $(B_1B_2\cdots B_n)_2 \triangleq
\sum_{j=1}^n B_j 2^{n-j}$. Thus, the rightmost digit $B_n$ is the least significant bit. Addition of binary numbers is assumed to be an XOR operation (i.e., modulo-2 addition).

The probability of an event $A$ is denoted by $\Probi{A}$. Random variables are denoted using a sans-serif font, e.g., $\rv{X}$ and their realizations using lower-case letters, e.g., $x$. The distribution of random variable $\rv{X}$ is denoted by $\pr{X}{} = \pr{X}{x}$. When marginalizing distributions, we will sometimes use the shorthand  $\sum_x \pr{X,Y}{} \equiv \sum_x \pr{X,Y}{x,y}$; the summation variable will denote which random variable is being marginalized. The expectation of $\rv{X}$ is denoted by $\Exp{\rv{X}}$.

\section{The Polar Toolbox}\label{sec_polar toolbox}

\subsection{Various Parameters of Distributions} \label{sec_distribution parameters}
In this section we introduce several parameters that may be computed from the joint distribution of two random variables:  probability of error, Bhattacharyya parameter, conditional entropy, and total variation distance. These parameters are useful for the analysis of polar codes. These parameters are \emph{not} random variables; they are deterministic quantities computed from the joint distribution. 

Consider a pair of random variables $(\rv{U},\rv{Q})$ with  joint distribution $\pr{U,Q}{u,q} = \pr{Q}{q}\pr{U|Q}{u|q}$. The random variable $\rv{U}$ is binary\footnote{This assumption is for the sake of simplicity. See \Cref{rem_non binary extension} at the end of this subsection for a discussion of the implications of non-binary $\rv{U}$.} and $\rv{Q}$ is some observation dependent on $\rv{U}$ that takes values in a finite alphabet $\mathcal{Q}$.

\begin{definition}[Probability of error]
The \emph{probability of error} $\Pe{\rv{U}|\rv{Q}}$ of optimally estimating $\rv{U}$ from the observation $\rv{Q}$, in the sense of minimizing the probability of error, is given by
\begin{equation*}\begin{split} \Pe{\rv{U}|\rv{Q}}  &= \sum_q \min\{\pr{U,Q}{0,q}, \pr{U,Q}{1,q}\} \\
	&= 	\sum_q \pr{Q}{q} \min\{\pr{U|Q}{0|q}, \pr{U|Q}{1|q}\}. 
\end{split} \label{eq_def of Pe} \end{equation*}
\end{definition}

\begin{definition}[Bhattacharyya parameter] The \emph{Bhattacharyya parameter} of $\rv{U}$ given $\rv{Q}$, $\BP{\rv{U}|\rv{Q}}$, is defined as
\begin{equation} \begin{split} \BP{\rv{U}|\rv{Q}} &= 2 \sum_q \sqrt{\pr{U,Q}{0,q}\pr{U,Q}{1,q}} \\
	&= 	 2 \sum_q \pr{Q}{q}\sqrt{\pr{U|Q}{0|q}\pr{U|Q}{1|q}}.
\end{split} \label{eq_def of Z} \end{equation}
\end{definition}

\begin{definition}[Total Variation Distance]
	The \emph{total variation distance} of $\rv{U}$ given $\rv{Q}$, $\TV{\rv{U}|\rv{Q}}$, is defined as 
	\begin{equation} 
		\begin{split}
		\TV{\rv{U}|\rv{Q}} &= \sum_q \left| \pr{U,Q}{0,q} - \pr{U,Q}{1,q} \right|	  \\
							&= \sum_q \pr{Q}{q} \left| \pr{U|Q}{0|q} - \pr{U|Q}{1|q} \right|.  
		\end{split} \label{eq_def of TV}
		\end{equation}
	\end{definition}

    The parameters defined above all required that $\rv{U}$ be binary. They can be extended to the
    non-binary case, as described in Appendix~\ref{app_non binary extension}. A final parameter we will use is the conditional entropy. Unlike the other parameters, the conditional entropy is also defined when $\rv{U}$ takes values in an arbitrary finite alphabet $\mathcal{U}$, not necessarily binary. 
\begin{definition}[Conditional Entropy]
The \emph{conditional entropy} of $\rv{U}$ given $\rv{Q}$, $\ENT{\rv{U}|\rv{Q}}$, is defined as 
\begin{equation}\begin{split} 
\ENT{\rv{U}|\rv{Q}} &= -\sum_{q} \sum_u \pr{U,Q}{u,q}\log_2\frac{\pr{U,Q}{u,q}}{\sum_u\pr{U,Q}{u,q}} \\&= -\sum_{q} \pr{Q}{q} \sum_u \pr{U|Q}{u|q} \log_2 \pr{U|Q}{u|q}.
\end{split} \label{eq_def of condent}	
\end{equation}

\end{definition}

It is easily seen that all four parameters take values in $[0,1]$ when $\rv{U}$ is binary. They are all related, as established in the following lemma. 	
\begin{lemma} \label{lem_TV distance bounds}
	The total variation distance, probability of error, conditional entropy, and Bhattacharyya parameter  are related by
\begin{subequations}\label{eq_TV bounds}
    \begin{align} \TV{\rv{U}|\rv{Q}} &= 1-2\Pe{\rv{U}|\rv{Q}} \geq 1- \ENT{\rv{U}|\rv{Q}}, \label{eq_TV relation 1} \\ 
        \BP{\rv{U}|\rv{Q}}^2 &\leq \ENT{\rv{U}|\rv{Q}} \leq \BP{\rv{U}|\rv{Q}}, \label{eq_bounds on
        BP}  \\
	 \TV{\rv{U}|\rv{Q}} &\leq  \sqrt{1-\BP{\rv{U}|\rv{Q}}^2} \leq \sqrt{1-\ENT{\rv{U}|\rv{Q}}^2}. \label{eq_bounds on TV upper}	
	\end{align}
\end{subequations}
\end{lemma}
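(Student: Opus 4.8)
The plan is to prove each of the three displayed relations by reducing everything to elementary inequalities about a single Bernoulli pair, then integrating (summing) against the observation distribution $\pr{Q}{q}$. For a fixed $q$, write $p = \pr{U|Q}{0|q}$ and $1-p = \pr{U|Q}{1|q}$; each of the four parameters is an average over $q$ (with weights $\pr{Q}{q}$) of a concave or convex function of $p$: the conditional entropy integrand is the binary entropy $h(p)$, the total variation integrand is $|1-2p|$, the error integrand is $\min\{p,1-p\}$, and the Bhattacharyya integrand is $2\sqrt{p(1-p)}$. So the strategy throughout is: establish the pointwise inequality in $p$, then take $\sum_q \pr{Q}{q}(\cdot)$, using Jensen's inequality where a convexity/concavity argument is needed to move the average inside or outside a nonlinear outer function.

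For \eqref{eq_TV relation 1}: the identity $\TV{\rv{U}|\rv{Q}} = 1 - 2\Pe{\rv{U}|\rv{Q}}$ is immediate from $|a-b| = a+b-2\min\{a,b\}$ applied to $a=\pr{U,Q}{0,q}$, $b=\pr{U,Q}{1,q}$ and summing, since $\sum_q(\pr{U,Q}{0,q}+\pr{U,Q}{1,q})=1$. The inequality $1-2\Pe{\rv{U}|\rv{Q}} \ge 1 - \ENT{\rv{U}|\rv{Q}}$ is equivalent to $\ENT{\rv{U}|\rv{Q}} \ge 2\Pe{\rv{U}|\rv{Q}}$, which follows pointwise from $h(p) \ge 2\min\{p,1-p\}$ (both sides vanish at $p\in\{0,1\}$ and at $p=1/2$ we have $1 \ge 1$; concavity of $h$ versus the piecewise-linear lower bound clinches it), then sum against $\pr{Q}{q}$.

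For \eqref{eq_bounds on BP}: the right inequality $\ENT{\rv{U}|\rv{Q}} \le \BP{\rv{U}|\rv{Q}}$ comes from the pointwise bound $h(p) \le 2\sqrt{p(1-p)}$ — a standard estimate, provable e.g.\ by comparing both sides via calculus or by the inequality $h(p) \le \log_2(1 + 2\sqrt{p(1-p)}) \le (2\sqrt{p(1-p)})/\ln 2$... actually cleaner is the known chain through $-\log_2$; either way it is a one-variable fact — and then summing with weights $\pr{Q}{q}$. The left inequality $\BP{\rv{U}|\rv{Q}}^2 \le \ENT{\rv{U}|\rv{Q}}$ is the one that genuinely needs Jensen: pointwise one has a bound relating $2\sqrt{p(1-p)}$ and $h(p)$ in the other direction only after squaring, so I would write $\BP{\rv{U}|\rv{Q}} = \sum_q \pr{Q}{q}\cdot 2\sqrt{p_q(1-p_q)}$, use concavity of $t\mapsto\sqrt{t}$ type reasoning, or more directly invoke that $\BP{\rv{U}|\rv{Q}}^2 \le \Pe{\rv{U}|\rv{Q}}\cdot(\text{something})$ — the slick route is: $2\sqrt{p(1-p)} \le \sqrt{h(p)}$ pointwise (equivalently $4p(1-p)\le h(p)$, again one-variable), hence $\BP{\rv{U}|\rv{Q}} = \sum_q \pr{Q}{q}\sqrt{h(p_q)} \le \sqrt{\sum_q \pr{Q}{q} h(p_q)} = \sqrt{\ENT{\rv{U}|\rv{Q}}}$ by Jensen applied to the concave square root, and square both sides.

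For \eqref{eq_bounds on TV upper}: the second inequality $\sqrt{1-\BP{\rv{U}|\rv{Q}}^2} \le \sqrt{1-\ENT{\rv{U}|\rv{Q}}^2}$ is just the monotonicity of $x\mapsto\sqrt{1-x^2}$ on $[0,1]$ combined with $\ENT{\rv{U}|\rv{Q}} \le \BP{\rv{U}|\rv{Q}}$ from \eqref{eq_bounds on BP}. The first inequality $\TV{\rv{U}|\rv{Q}} \le \sqrt{1-\BP{\rv{U}|\rv{Q}}^2}$ is the crux and also needs Jensen: pointwise, $|1-2p| = \sqrt{1-4p(1-p)} = \sqrt{1-(2\sqrt{p(1-p)})^2}$, i.e.\ the per-$q$ total variation exactly equals $\sqrt{1-z_q^2}$ where $z_q = 2\sqrt{p_q(1-p_q)}$ is the per-$q$ Bhattacharyya term. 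So $\TV{\rv{U}|\rv{Q}} = \sum_q \pr{Q}{q}\sqrt{1-z_q^2} \le \sqrt{1 - \big(\sum_q \pr{Q}{q} z_q\big)^2} = \sqrt{1-\BP{\rv{U}|\rv{Q}}^2}$, where the inequality is Jensen for the concave function $z\mapsto\sqrt{1-z^2}$ together with the fact that $\sum \pr{Q}{q} z_q = \BP{\rv{U}|\rv{Q}}$ — more precisely one uses that $\Exp{\sqrt{1-\rv{Z}^2}} \le \sqrt{1-(\Exp \rv{Z})^2}$, which is concavity of $\sqrt{1-z^2}$ plus $\Exp{\rv Z^2} \ge (\Exp \rv Z)^2$; spelling that out is the one place to be a little careful.

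I expect the main obstacle to be organizing the one-variable inequalities $4p(1-p)\le h(p) \le 2\sqrt{p(1-p)} \le 1$ cleanly and applying Jensen in the correct direction for each claim — the bookkeeping of "average inside vs.\ outside the nonlinearity" is where a proof like this typically slips. Everything else (the TV–$\Pe$ identity, the monotonicity step in \eqref{eq_bounds on TV upper}) is routine.
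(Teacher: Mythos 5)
Your architecture is exactly the paper's: parametrize each observation $q$ by a single Bernoulli parameter, reduce every claim to a pointwise scalar inequality among $k(p)=|1-2p|$, $h(p)$, and $z(p)=2\sqrt{p(1-p)}$, and then average against $\pr{Q}{q}$ with Jensen applied in the correct direction ($x\mapsto x^2$ convex, $\sqrt{\cdot}$ and $\sqrt{1-x^2}$ concave). Your handling of \eqref{eq_TV relation 1}, of the Jensen step for the left inequality of \eqref{eq_bounds on BP}, and of both inequalities in \eqref{eq_bounds on TV upper} (via the exact identity $k^2(p)+z^2(p)=1$) matches the paper's proof step for step, up to cosmetic reorganization.

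The one genuine gap is that you never actually prove the two scalar inequalities $z^2(p)\leq h(p)$ and $h(p)\leq z(p)$, which are precisely the content of the paper's \Cref{lem_h2 k2 leq 1} and are the only nontrivial analytic work in the whole proof (the paper needs convexity of $h(\theta)/\theta$ via a second-derivative computation for the first, and a third-derivative/stationary-point argument for the second). Worse, the one concrete route you sketch for $h(p)\leq z(p)$ --- passing through $h(p)\leq\log_2(1+z(p))\leq z(p)/\ln 2$ --- does not close, because $1/\ln 2>1$, so it only yields $h\leq 1.44\,z$; this is exactly why the paper treats \eqref{eq_arikan bounds between ent and TV} and \eqref{eq_bounds on BP} as distinct (the former is the $\log_2(1+Z)$ bound, the latter is strictly tighter and needs a separate argument). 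Both scalar facts are true and standard, so the proposal is salvageable by citing or proving them, but as written the proof of the right-hand side of \eqref{eq_bounds on BP} is incomplete.
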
 
The proof of \Cref{lem_TV distance bounds} is relegated to Appendix~\ref{app_proof of TV lemma}. We
note that the right-most inequality of~\eqref{eq_bounds on BP} was also shown
in~\cite[Proposition 2]{arikan_2010_source} and the left-most inequality
of~\eqref{eq_bounds on TV upper} was also shown in~\cite[Appendix A]{Arikan_2009}; our proof of the
latter is more general. 
Due to~\eqref{eq_TV relation 1}, we shall concentrate in the sequel on $\TV{\rv{U}|\rv{Q}}$ rather than $\Pe{\rv{U}|\rv{Q}}$.

In~\cite{arikan_2010_source}, Ar\i{}kan  used the inequality
\begin{equation}	\BP{\rv{U}|\rv{Q}}^2 \leq \ENT{\rv{U}|\rv{Q}} \leq \log_2 (1+\BP{\rv{U}|\rv{Q}})\label{eq_arikan bounds between ent and TV} \end{equation}
  to show that if the Bhattacharyya parameter approaches $0$ or $1$ then the conditional entropy approaches $0$ or $1$ as well and vice versa.  
  An alternative proof of this can be had by~\eqref{eq_bounds on BP}. 
  This yields 
 \[\BP{\rv{U}|\rv{Q}}^2 \leq \ENT{\rv{U}|\rv{Q}} \leq \BP{\rv{U}|\rv{Q}} \leq \sqrt{\ENT{\rv{U}|\rv{Q}}},\]
which indeed implies that the Bhattacharyya parameter and conditional entropy approach $0$ and $1$ in tandem. This inequality is tighter than~\eqref{eq_arikan bounds between ent and TV}; however, as discussed in Appendix~\ref{app_non binary extension}, an advantage of inequality~\eqref{eq_arikan bounds between ent and TV} is that it has a natural  extension to the case where $\rv{U}$ is non-binary.    

An additional consequence of \Cref{lem_TV distance bounds} is that (a) if $\BP{\rv{U}|\rv{Q}} \to 0$ or $\ENT{\rv{U}|\rv{Q}} \to 0$ then $\TV{\rv{U}|\rv{Q}} \to 1$ and (b) if $\BP{\rv{U}|\rv{Q}} \to 1$ or $\ENT{\rv{U}|\rv{Q}} \to 1$ then $\TV{\rv{U}|\rv{Q}} \to 0$. 

\begin{remark}\label{rem_TV and Pe}
    By combining~\eqref{eq_TV relation 1} and~\eqref{eq_bounds on BP} we obtain 
    
    \[1-2 \Pe{\rv{U}|\rv{Q}} \geq 1- \ENT{\rv{U}|\rv{Q}} \geq 1- \BP{\rv{U}|\rv{Q}}.\]
    Rearranging, we obtain the well-known bound, $\Pe{\rv{U}|\rv{Q}} \leq \BP{\rv{U}|\rv{Q}}/2$. 
\end{remark}

The definitions above naturally extend to the case where instead of $\rv{Q}$ there are multiple random variables related to $\rv{U}$. For example, consider a triplet of random variables $(\rv{U},\rv{Q},\rv{S})$ with joint distribution $\pr{U,Q,S}{u,q,s}$ such that $\rv{U}$ is binary and $\rv{Q},\rv{S}$ take values in finite alphabets $\mathcal{Q},\mathcal{S}$. We call $\rv{S}$ the `state'. Then, 
\[ \TV{\rv{U}|\rv{Q},\rv{S}} = \sum_{q,s} |\pr{U,Q,S}{0,q,s} - \pr{U,Q,S}{1,q,s}|;\] 
 the remaining parameters are similarly extended. We say that $\TV{\rv{U}|\rv{Q},\rv{S}}$ is a \emph{state-informed} (SI) version of $\TV{\rv{U}|\rv{Q}}$. 

How do the SI parameters compare to their non-SI counterparts? 
For the entropy, the answer lies in~\cite[Theorem 2.6.5]{cover_thomas}, the well known property that conditioning reduces entropy. In the following lemma, proved in Appendix~\ref{app_proof of TV lemma}, we consider the other parameters as well.
\begin{lemma}\label{lem_effect of conditioning}
Let $(\rv{U},\rv{Q},\rv{S})$ be a triplet of	 random variables with joint distribution $\pr{U,Q,S}{u,q,s}$. Then 
\begin{subequations}\label{eq_conditioning reduces}
\begin{align}
	\TV{\rv{U}|\rv{Q}} &\leq \TV{\rv{U}|\rv{Q},\rv{S}}, \label{eq_TV conditioning} \\
	\BP{\rv{U}|\rv{Q}} &\geq \BP{\rv{U}|\rv{Q},\rv{S}}, \label{eq_BP conditioning}\\ 
	\ENT{\rv{U}|\rv{Q}} &\geq \ENT{\rv{U}|\rv{Q},\rv{S}}. \label{eq_ENT conditioning}
\end{align}
\end{subequations}
\end{lemma}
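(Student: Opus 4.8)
The plan is to treat the three inequalities of \Cref{lem_effect of conditioning} separately, each by a short direct manipulation of the defining sums after marginalizing the state $\rv{S}$ out of the joint distribution. Throughout I would work with the \emph{joint}-distribution form of each parameter (the first line of~\eqref{eq_def of TV}, \eqref{eq_def of Z}, \eqref{eq_def of condent}), so as to avoid any division-by-zero bookkeeping when $\pr{Q}{q}$ or $\pr{Q,S}{q,s}$ vanishes, and use $\pr{U,Q}{u,q} = \sum_s \pr{U,Q,S}{u,q,s}$.

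For~\eqref{eq_TV conditioning} I would start from $\TV{\rv{U}|\rv{Q}} = \sum_q \bigl|\pr{U,Q}{0,q} - \pr{U,Q}{1,q}\bigr|$, substitute the marginalization identity, and apply the triangle inequality to pull the sum over $s$ outside the absolute value: $\sum_q \bigl|\sum_s (\pr{U,Q,S}{0,q,s} - \pr{U,Q,S}{1,q,s})\bigr| \le \sum_{q,s}\bigl|\pr{U,Q,S}{0,q,s} - \pr{U,Q,S}{1,q,s}\bigr| = \TV{\rv{U}|\rv{Q},\rv{S}}$. For~\eqref{eq_BP conditioning} the square root makes things go the other way: writing $a_{q,s}=\pr{U,Q,S}{0,q,s}$ and $b_{q,s}=\pr{U,Q,S}{1,q,s}$, it suffices to show, for each fixed $q$, that $\sqrt{\bigl(\sum_s a_{q,s}\bigr)\bigl(\sum_s b_{q,s}\bigr)} \ge \sum_s \sqrt{a_{q,s}b_{q,s}}$, which is exactly the Cauchy--Schwarz inequality applied to the vectors $\bigl(\sqrt{a_{q,s}}\bigr)_s$ and $\bigl(\sqrt{b_{q,s}}\bigr)_s$; summing over $q$ and multiplying by $2$ gives $\BP{\rv{U}|\rv{Q}} \ge \BP{\rv{U}|\rv{Q},\rv{S}}$. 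Finally,~\eqref{eq_ENT conditioning} requires no new work: matching~\eqref{eq_def of condent} to the usual conditional entropy $H(\rv{U}\mid\rv{Q},\rv{S})$ versus $H(\rv{U}\mid\rv{Q})$, the claim is precisely ``conditioning reduces entropy,'' \cite[Theorem 2.6.5]{cover_thomas}, as already noted in the text preceding the lemma.

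I do not expect a genuine obstacle: all three parts reduce to one-line invocations of elementary inequalities (triangle inequality, Cauchy--Schwarz, and conditioning-reduces-entropy). The only points demanding care are notational — keeping the joint and conditional forms of the parameters straight, and making sure the Cauchy--Schwarz step is applied per-$q$ rather than globally — neither of which affects the structure of the argument.
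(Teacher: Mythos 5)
Your proposal is correct and matches the paper's proof essentially step for step: the triangle inequality for the total variation distance, the Cauchy--Schwarz inequality (applied per $q$ to the square-root vectors) for the Bhattacharyya parameter, and the standard ``conditioning reduces entropy'' result of \cite[Theorem 2.6.5]{cover_thomas} for the conditional entropy. No differences worth noting.
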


\begin{remark}\label{rem_non binary extension}
	In this paper, we assume for simplicity that $\rv{U}$ is binary. It is possible to extend our results to the non-binary case. To this end, a suitable extension of the distribution parameters is required. The key properties that need to be preserved are (a) that they be bounded between $0$ and $1$; (b) that they approach their extreme values in tandem; and (c) that they satisfy \Cref{lem_effect of conditioning}. 
	In Appendix~\ref{app_non binary extension} we suggest a suitable extension that satisfies these requirements. 
\end{remark}

\subsection{Polarization}\label{sec_polarization}
We review some basics of polarization in this section. The concepts introduced here will be useful in the sequel.

\subsubsection{General Definitions}
Consider a strictly stationary process $(\rv{X}_j,\rv{Y}_j)$, $j=1,2,\ldots$ with a known joint distribution. We assume that $\rv{X}_j$ are binary and $\rv{Y}_j \in \mathcal{Y}$, where $\mathcal{Y}$ is a finite alphabet. The random variables $\rv{X}_j$ are to be estimated from the observations $\rv{Y}_j$. In a channel coding setting, $\rv{X}_j$ is the input to a channel and $\rv{Y}_j$ its output. In a lossless source coding setting~\cite{arikan_2010_source}, $\rv{X}_j$ is a data sequence to be compressed and $\rv{Y}_j$ is side information available to the decompressor. In a lossy compression setting~\cite{Korada_source}, the compressor takes a source sequence and distorts it to obtain a sequence $\rv{X}_1^N$ that is ultimately recovered by the decompressor.\footnote{In fact, in a lossy compression setting, with side information known to both compressor and decompressor, the process is $(\rv{X}_j, \rv{Y}_j)$, where $\rv{Y}_j = (\rv{Y}'_j, \rv{Y}''_j)$. The random variables $\rv{Y}'$ are the sequence to be compressed and the random variables $\rv{Y}''$ are the side information.} 

We denote Ar\i{}kan's polarization matrix by $G_N = B_N G_2^{\otimes n}$, where $N = 2^n$, $B_N$ is the $N\times N$ bit-reversal matrix, and $G_2 = \begin{bmatrix} 1 & 0 \\ 1 & 1 \end{bmatrix}$. Recall that $G_N^{-1} = G_N$. Following~\cite{sasoglu_2016}, we define 
\begin{subequations} \label{eq_defs of UVQR}
\begin{align} 
	\rv{U}_1^N &= \rv{X}_1^N G_N, \label{eq_defs of UVQR U}\\
	\rv{V}_1^N &= \rv{X}_{N+1}^{2N} G_N, \label{eq_defs of UVQR V}\\ 
	\rv{Q}_i   &= (\rv{U}_{1}^{i-1}, \rv{Y}_1^N), \label{eq_defs of UVQR Q} \\ 
	\rv{R}_i   &= (\rv{V}_1^{i-1}, \rv{Y}_{N+1}^{2N}), \label{eq_defs of UVQR R} 
\end{align}
\end{subequations}
where $i=1,2,\ldots, N$. 

Due to the recursive nature of polar codes, the above equations will be key for passing from a block of length $N$ to a block of length $2N$. First, however, let us concentrate on a length-$N$ block. For such a block, equations~\eqref{eq_defs of UVQR U} and~\eqref{eq_defs of UVQR Q} are pertinent. Although we have described several different communication scenarios, they all share the same succinct description that follows.

A certain subset of indices $F \subset \{1,2,\ldots,N\}$ is preselected according to some rule; the set $F$ dictates the performance of the code. When encoding (compressing), one produces a sequence $\rv{U}_1^N$. The relationship between the sequence $\rv{U}_1^N$ and the sequence $\rv{X}_1^N$ is given by~\eqref{eq_defs of UVQR U}.  Then, $\rv{U}_F$ is made available  to the decoder.\footnote{Depending on the application, this can be done either explicitly, by shared randomness, or both.} The decoding (decompressing) operation is iterative. For $i=1,2,\ldots$, the decoder estimates $\rv{U}_i$ from $\rv{Q}_i$; it uses its previous estimates of $\rv{U}_1^{i-1}$ to form $\rv{Q}_i$. Whenever it encounters an index in $F$, it returns as its estimate the relevant value from $\rv{U}_F$.  After estimating $\rv{U}_1^N$, the decoder recovers $\rv{X}_1^N$ via~\eqref{eq_defs of UVQR U}. 

The polarization phenomenon is that for large enough $n$, the fraction of indices with moderate conditional entropy, $|\{i: \ENT{\rv{U}_i|\rv{Q}_i} \in (\epsilon, 1-\epsilon)\}|/N$, becomes negligibly small for any $\epsilon>0$. One approach~\cite{Arikan_2009,arikan_2010_source} to derive such results is probabilistic. Rather than counting the number of indices with moderate conditional entropy, a sequence of random variables $\rv{H}_n$, $n=1,2,\ldots$ is defined. The random variable $\rv{H}_n$ assumes the value $\ENT{\rv{U}_i|\rv{Q}_i}$, with $i$ selected uniformly from $\{1,2,\ldots,N\}$. Thus, the probability that $\rv{H}_n$ lies in a certain range equals the fraction of indices whose conditional entropies lie in this range. 

The recursive nature of the polarization transform is at the heart of the probabilistic approach. 
Concretely, let $\rv{B}_1,\rv{B}_2, \ldots$ be a sequence of independent and identically distributed (i.i.d.) Bernoulli-$1/2$ random variables.  
We set $i-1=(\rv{B}_1 \rv{B}_2 \cdots \rv{B}_n)_2$; indeed, $i$ assumes any value in $\{1,2,\ldots,N\}$ with equal probability. Define the random variables
\begin{equation}
\begin{split}
\rv{K}_n &= \TV{\rv{U}_i |\rv{U}_1^{i-1}, \rv{Y}_1^N} = \TV{\rv{U}_i | \rv{Q}_i}, \\ 
\rv{Z}_n &= \BP{\rv{U}_i |\rv{U}_1^{i-1}, \rv{Y}_1^N } = \BP{\rv{U}_i | \rv{Q}_i},\\
\rv{H}_n &= \ENT{\rv{U}_i |\rv{U}_1^{i-1}, \rv{Y}_1^N } = \ENT{\rv{U}_i | \rv{Q}_i} 	
\end{split} \label{eq_defs of Kn Zn Hn}
\end{equation}
whenever $(i-1)=(\rv{B}_1 \rv{B}_{2} \cdots \rv{B}_n)_2$. That is, they denote the relevant distribution parameters for a uniformly chosen index after $n$ polarization steps. 
We call $\rv{K}_n, \rv{Z}_n,$ and $\rv{H}_n$, $n=1,2,\ldots$ the \emph{total variation distance process}, the \emph{Bhattacharyya process}, and the \emph{conditional entropy process}, respectively. 

When passing from a length-$N$ block to a block of length $2N$, by the properties of $G_N$~\cite[Section VII]{Arikan_2009}, 
\begin{equation} \rv{K}_{n+1} = \begin{cases}
 	\TV{\rv{U}_i + \rv{V}_i | \rv{Q}_i, \rv{R}_i} & \text{if } \rv{B}_{n+1} = 0 \\ 	
 	\TV{\rv{V}_i | \rv{U}_i + \rv{V}_i, \rv{Q}_i, \rv{R}_i} & \text{if } \rv{B}_{n+1} = 1.
 \end{cases}\label{eq_single step polarization for K} \end{equation}
Similar relationships hold for $\rv{H}_{n+1}$ and $\rv{Z}_{n+1}$. 
We shall use the mnemonics $\rv{K}_n^-$ and $\rv{K}_n^+$ to denote $\TV{\rv{U}_i + \rv{V}_i | \rv{Q}_i, \rv{R}_i}$ and $\TV{\rv{V}_i | \rv{U}_i+\rv{V}_i, \rv{Q}_i, \rv{R}_i}$, respectively. I.e., $\rv{K}_{n+1}$ assumes the value $\rv{K}_n^-$ when $\rv{B}_{n+1} = 0$ and the value $\rv{K}_n^+$ when $\rv{B}_{n+1} = 1$. We shall use similar mnemonics for $\rv{H}_n$ and $\rv{Z}_n$.

The probability law of  $(\rv{U}_i,\rv{V}_i,\rv{Q}_i,\rv{R}_i)$ can be obtained from the probability law of $(\rv{X}_1^{2N},\rv{Y}_1^{2N})$ using~\eqref{eq_defs of UVQR}. Moreover, for fixed $i$, there exists a function $f$, which depends solely on $i$, such that
\begin{equation}
\begin{split}
(\rv{U}_i,\rv{Q}_i) &= f(\rv{X}_1^N,\rv{Y}_1^N), \\ 
(\rv{V}_i,\rv{R}_i) &= f(\rv{X}_{N+1}^{2N},\rv{Y}_{N+1}^{2N}).
\end{split}
\label{eq_P(UVQR) from P(XY)}
\end{equation}
This can be seen by comparing \eqref{eq_defs of UVQR U} and \eqref{eq_defs of UVQR Q} with \eqref{eq_defs of UVQR V} and \eqref{eq_defs of UVQR R}. Due to stationarity, $\prrv{\rv{U}_i,\rv{Q}_i}{} = \prrv{\rv{V}_i,\rv{R}_i}{}$.

Denote $\rv{T}_i = \rv{U}_i +\rv{V}_i$, as in \Cref{fig_illustration of a polarization transform}. The mapping $(\rv{U}_i,\rv{V}_i) \mapsto (\rv{T}_i,\rv{V}_i)$ is one-to-one and onto. Hence, 
\begin{equation} \prrv{\rv{T}_i,\rv{V}_i,\rv{Q}_i,\rv{R}_i}{t,v,q,r} = \prrv{\rv{U}_i,\rv{V}_i,\rv{Q}_i,\rv{R}_i}{t+v,v,q,r}.\label{eq_P(STQR) from P(UVQR)}
\end{equation}

We  now   formally define polarization and fast polarization. 
\begin{definition}
Let $\rv{A}_n$, $n=1,2,\ldots$ be a sequence of random variables that take values in $[0,1]$. 
\begin{enumerate}
	\item The sequence $\rv{A}_n$ \emph{polarizes} if it converges almost surely to a $\{0,1\}$-random variable $\rv{A}_{\infty}$ as $n\to\infty$. We will sometimes abbreviate this by saying that ``$\rv{A}_n$ polarizes to $\rv{A}_{\infty}$.'' 	
	\item The sequence $\rv{A}_n$ \emph{polarizes fast to $0$} with $\beta >0$ if it polarizes and  
	\[\lim_{n\to\infty} \Prob{\rv{A}_n < 2^{-2^{n\beta}}} = \Prob{\rv{A}_{\infty} = 0}.\] 
	\item The sequence $\rv{A}_n$ \emph{polarizes fast to $1$} with $\beta >0$ if it polarizes and  
	\[\lim_{n\to\infty} \Prob{\rv{A}_n > 1 - 2^{-2^{n\beta}}} = \Prob{\rv{A}_{\infty} = 1}.\]
\end{enumerate}	
When the precise value of $\beta$ is either obvious from the context or not needed, we will write that $\rv{A}_n$ polarizes fast to, say, $0$, without mentioning the value of $\beta$. 
\end{definition}

The following lemma, first obtained by Ar\i{}kan and Telatar in~\cite{Arikan_Telatar_2009} and later adapted to the general case by \c{S}a\c{s}o\u{g}lu in~\cite{sasoglu_thesis}, is an important tool for establishing fast polarization for a sequence of random variables.  
\begin{lemma}\label{lem_simple proof}\cite{Arikan_Telatar_2009},\cite[Lemma 4.2]{sasoglu_thesis} 
Let $\rv{B}_n$, $n=1,2,\ldots$ be an i.i.d. Bernoulli-$1/2$ process and $\rv{A}_n$, $n=1,2,\ldots$ be a $[0,1]$-valued process that polarizes to a $\{0,1\}$-random variable $\rv{A}_{\infty}$. Assume that there exist $k \geq 1$ and $d_0,d_1 > 0$ such that for $i=0,1$, \[ \rv{A}_{n+1} \leq k \rv{A}_n^{d_i}\quad \text{if } \rv{B}_{n+1} = i.\] Then, for any $0 < \beta < E = (\log_2d_0 + \log_2 d_1)/2$, we have 
\begin{equation} 
\lim_{n\to\infty} \Prob{\rv{A}_n < 2^{-2^{n\beta}}} = \Prob{\rv{A}_{\infty} = 0}. \label{eq_simple proof result}
\end{equation}
\end{lemma}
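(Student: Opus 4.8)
The plan is to reduce the statement about $\lim_n \Prob{\rv{A}_n < 2^{-2^{n\beta}}}$ to an almost-sure statement on the event $\{\rv{A}_{\infty} = 0\}$. Concretely, it suffices to prove that, almost surely on $\{\rv{A}_{\infty} = 0\}$, one has $\rv{A}_n < 2^{-2^{n\beta}}$ for all sufficiently large $n$. Granting this, the ``eventually'' event $\{\rv{A}_n < 2^{-2^{n\beta}} \text{ for all large } n\}$ has probability exactly $\Prob{\rv{A}_{\infty}=0}$: it is contained in $\{\rv{A}_{\infty}=0\}$ because $2^{-2^{n\beta}} \to 0$ forces $\rv{A}_n \to 0$, and it contains $\{\rv{A}_{\infty}=0\}$ up to a null set by the claim. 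Fatou's lemma for events then gives $\liminf_n \Prob{\rv{A}_n < 2^{-2^{n\beta}}} \ge \Prob{\rv{A}_{\infty}=0}$; for the matching upper bound, fix any $\delta \in (0,1)$, note $2^{-2^{n\beta}} < \delta$ for $n$ large, and observe $\mathbf{1}[\rv{A}_n < \delta] \to \mathbf{1}[\rv{A}_{\infty}=0]$ almost surely (since $\rv{A}_{\infty}$ is $\{0,1\}$-valued), so bounded convergence yields $\limsup_n \Prob{\rv{A}_n < 2^{-2^{n\beta}}} \le \lim_n \Prob{\rv{A}_n < \delta} = \Prob{\rv{A}_{\infty}=0}$. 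Together these prove \eqref{eq_simple proof result}.

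For the almost-sure claim, fix $\beta'$ with $\beta < \beta' < E$, let $b := \min\{d_0,d_1\} > 0$, and work on $\Omega_0 := \{\rv{A}_{\infty}=0\} \cap \{\, n^{-1}\sum_{j=1}^{n} \log_2 d_{\rv{B}_j} \to E \,\}$, which by the strong law of large numbers has probability $\Prob{\rv{A}_{\infty}=0}$ (the increments $\log_2 d_{\rv{B}_j}$ are i.i.d.\ with mean $E$). Put $L_n := \log_2(1/\rv{A}_n) \in (0,\infty]$. Taking base-$2$ logarithms in $\rv{A}_{n+1} \le k\rv{A}_n^{d_{\rv{B}_{n+1}}}$ gives the affine recursion $L_{n+1} \ge d_{\rv{B}_{n+1}} L_n - \log_2 k$ (here $\log_2 k \ge 0$). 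On $\Omega_0$ we have $\rv{A}_n \to 0$, hence $L_n \to \infty$. Fix $\eta \in (0,1)$ small enough that $E + \log_2(1-\eta) > \beta'$, and a constant $\ell \ge \max\{1,\, (\log_2 k)/(\eta b)\}$; then there is an almost-surely finite $m = m(\omega)$ with $L_n \ge \ell$ for every $n \ge m$. For such $n$, $\log_2 k \le \eta b \ell \le \eta d_{\rv{B}_{n+1}} L_n$, so the recursion improves to the purely multiplicative bound $L_{n+1} \ge (1-\eta) d_{\rv{B}_{n+1}} L_n$.

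Iterating from $m$ gives $L_n \ge (1-\eta)^{n-m} \big(\prod_{j=m+1}^{n} d_{\rv{B}_j}\big) L_m$ for $n > m$, hence $\log_2 L_n \ge n\log_2(1-\eta) + \sum_{j=m+1}^{n} \log_2 d_{\rv{B}_j} + c(\omega)$, where $c(\omega)$ collects $m$-dependent but $n$-independent terms. On $\Omega_0$ the sum equals $nE(1+o(1))$, so $\log_2 L_n \ge n\big(E + \log_2(1-\eta)\big) + o(n) \ge \beta' n$ for all $n$ large, by the choice of $\eta$. Therefore $\rv{A}_n = 2^{-L_n} \le 2^{-2^{\beta' n}} < 2^{-2^{\beta n}}$ for all $n$ large (the last inequality because $\beta < \beta'$), which is precisely the claim on $\Omega_0$.

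The delicate point --- and the reason the argument avoids any separate large-deviation estimate on long runs of $\rv{B}_j = 0$ --- is the order in which the hypotheses are used: one first passes to the pathwise ``eventually'' event, and only then exploits that $\rv{A}_n \to 0$ almost surely on $\{\rv{A}_{\infty}=0\}$ to place $L_n$ above the fixed threshold $\ell$; this is what lets the additive constant $\log_2 k$ be absorbed into the multiplicative factor $(1-\eta)$, and it is why neither monotonicity of $\rv{A}_n$ nor any assumption such as $d_i \ge 1$ is required. Everything else is the standard strong-law computation in the doubly-logarithmic domain.
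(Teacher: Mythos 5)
Your proof is correct, but note that the paper does not prove \Cref{lem_simple proof} at all---it imports it from~\cite{Arikan_Telatar_2009} and~\cite[Lemma 4.2]{sasoglu_thesis}, whose argument is quite different from yours: those proofs are two-stage ``bootstrapping'' arguments carried out in probability, first conditioning on the event $\{\rv{A}_{n_0}\leq\zeta\}$ for a small fixed $\zeta$, then using Chernoff-type concentration of the counts of $\rv{B}_j=i$ to upgrade to single-exponential and finally doubly-exponential decay, with explicit $\epsilon$-bookkeeping at each stage. Your argument is instead pathwise: you intersect $\{\rv{A}_{\infty}=0\}$ with the strong-law event for $\log_2 d_{\rv{B}_j}$, and the one genuinely non-trivial step---absorbing the additive $\log_2 k$ into a multiplicative factor $(1-\eta)$ once $L_n=\log_2(1/\rv{A}_n)$ has permanently exceeded a fixed threshold $\ell$, which is legitimate precisely because $L_n\to\infty$ on that event---replaces all the large-deviation estimates. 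This is essentially the route of~\cite{Tal_2017_simple}, which the paper mentions in the remark immediately following the lemma; indeed your argument proves the stronger assertion stated there, namely $\lim_{n_0\to\infty}\Probi{\rv{A}_n\leq 2^{-2^{n\beta}}\ \text{for all } n\geq n_0}=\Probi{\rv{A}_{\infty}=0}$, since you establish that the ``eventually'' event coincides with $\{\rv{A}_{\infty}=0\}$ up to a null set. The only blemishes are cosmetic: $L_n$ lies in $[0,\infty]$ rather than $(0,\infty]$ (irrelevant once $n\geq m$), and the case $\rv{A}_n=0$, where $L_n=\infty$, should be dispatched separately (it is trivial, as $\rv{A}$ then remains $0$). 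What the classical proof buys is a quantitative, finite-$n$ bound; what yours buys is brevity, the stronger uniform-in-$n$ conclusion, and the explicit observation that neither monotonicity of $\rv{A}_n$ nor $d_i\geq 1$ is needed.
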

\begin{remark}
It was shown in~\cite{Tal_2017_simple} that~\Cref{lem_simple proof} can be strengthened. Namely,  equation~\eqref{eq_simple proof result} can be replaced with the stronger assertion $\lim_{n_0 \to \infty} \Probi{\rv{A}_n \leq 2^{-2^{n\beta}} \; \text{for all } n \geq n_0} = \Probi{\rv{A}_{\infty}=0}$. Hence, any result based on \Cref{lem_simple proof}, such as \Cref{th_sasoglu_tal_2016,thm_fast polarization of Z to 1}, can be strengthened similarly. 
\end{remark}

\subsubsection{The Memoryless Case}
The memoryless case is characterized by $\prrv{\rv{X}_1^{N},\rv{Y}_1^{N}}{x_1^{N},y_1^N} = \prod_{j=1}^N \prrv{\rv{X}, \rv{Y}}{x_j,y_j}$.  
Ar\i{}kan showed in~\cite{Arikan_2009} that in the memoryless case the process $\rv{H}_n$ polarizes. Consequently, when $n$ is large enough, for all but a negligible fraction of indices $i$, $\ENT{\rv{U}_i|\rv{U}_1^{i-1},\rv{Y}_1^N}$ is either very close to $0$ or very close to $1$.  

To achieve this, Ar\i{}kan had shown that the sequence $\rv{H}_n$, $n=1,2,\ldots$ is a bounded martingale sequence and thus converges almost surely to some random variable $\rv{H}_{\infty}$. By showing that $\rv{H}_{\infty}$ can only assume the values $0$ and $1$, polarization is obtained. 

The Bhattacharyya process, in the memoryless case, is a bounded supermartingale that converges almost surely to a $\{0,1\}$-random variable $\rv{Z}_{\infty}$. The process $\rv{Z}_n$ satisfies \Cref{lem_simple proof} with $E=1/2$ by virtue of~\cite[Proposition 5]{Arikan_2009}, by which
	\[
		\rv{Z}_{n+1} = \begin{cases} \leq 2\rv{Z}_n & \text{if } \rv{B}_{n+1} = 0 \\ 
 									 \rv{Z}_n^2  & \text{if } \rv{B}_{n+1} = 1. 
 									 \end{cases}
	\]
	Thus, the Bhattacharyya process polarizes fast to $0$ with any $\beta < 1/2$. 
 
Fast polarization of the Bhattacharyya parameter is important for the performance analysis of polar codes. In particular, this was instrumental in Ar\i{}kan's proof that polar codes are capacity-achieving for binary-input,  memoryless, symmetric, channels~\cite{Arikan_2009}. Ar\i{}kan had upper-bounded the probability of error of polar codes by the union-Bhattacharyya bound. Thanks to fast polarization of the Bhattacharyya process to $0$, the bound converges to $0$.

The additional requirement of fast polarization of $\rv{Z}_n$ to $1$ is important for many applications of polar codes. For example, it is integral to source coding applications~\cite{Korada_source} and to channel coding without symmetry assumptions~\cite{Honda_Yamamoto_2013}.  
In~\cite[Theorem 16]{Korada_source}, 
this fast polarization  was established by showing that the process $\tilde{\rv{Z}}_n = 1-\rv{Z}_n^2$ polarizes fast to $0$ with $\beta<1/2$. Another way to see this, which we  pursue in the sequel, is via the total variation process $\rv{K}_n$. 

A consequence of \Cref{lem_TV distance bounds} is that if $\rv{K}_n$ polarizes fast to $0$ then $\rv{Z}_n$ must polarize fast to $1$. The total variation process $\rv{K}_n$ can be shown to polarize (we show this in \Cref{cor_Zn Kn converge} for a more general setting). Fast polarization of $\rv{K}_n$ to $0$ is obtained from \Cref{lem_simple proof} and the following proposition. 
	
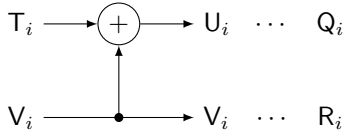
\begin{figure}
\begin{center}
\begin{tikzpicture}[>=latex]
	\node at(0,0) (CH1) {$ \cdots \quad \rv{Q}_i$};
	\node[ below = 0.7 of CH1] (CH2) {$ \cdots \quad \rv{R}_i$};
	\node[circle, draw, left = 1.4 of CH1, inner sep = 2pt] (plus) {$+$}; 
	\draw[<-] (plus) -- +(-1,0) node[left] (Si){$\rv{T}_i$}; 
	\draw[->] (plus) -- +(1,0) node(X){} node[right] {$\rv{U}_i$};
	\draw[->] (Si.east|-CH2) node[left] (Ti) {$\rv{V}_i$} -- (X|-Ti) node[right]{$\rv{V}_i$}  ;
	\draw[->] (Ti-|plus) -- (plus);
	\node[circle, fill, draw, minimum size = 3pt, inner sep = 0] at (Ti-|plus) {};   
\end{tikzpicture}
	\end{center}
	\caption{Illustration of a polarization transform. Random variables $(\rv{U}_i,\rv{Q}_i)$ have joint distribution $\prrv{\rv{U}_i,\rv{Q}_i}{}$ and random variables $(\rv{V}_i,\rv{R}_i)$ have joint distribution $\prrv{\rv{V}_i,\rv{R}_i}{}$. }\label{fig_illustration of a polarization transform} 
\end{figure}

\begin{proposition} \label{prop_K is a supermartingale} 
	Assume that $(\rv{X}_j,\rv{Y}_j)$, $j\in \mathbb{Z}$ is a memoryless process, where $\rv{X}_j$ is binary and $\rv{Y}_j \in \mathcal{Y}$.  
	Then, 
	\begin{equation}
		\rv{K}_{n+1} = \begin{cases} \rv{K}_n^2 & \text{if } \rv{B}_{n+1} = 0 \\ 
 									 \leq 2\rv{K}_n  & \text{if } \rv{B}_{n+1} = 1. 
 									 \end{cases}	\label{eq_polarization bounds for TV}
	\end{equation} 
\end{proposition}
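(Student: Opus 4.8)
The plan is to analyze the two cases $\rv{B}_{n+1}=0$ and $\rv{B}_{n+1}=1$ separately, using the single-step characterization of $\rv{K}_{n+1}$ in~\eqref{eq_single step polarization for K} together with the distributional facts~\eqref{eq_P(UVQR) from P(XY)} and~\eqref{eq_P(STQR) from P(UVQR)}. In the memoryless case the pair $(\rv{U}_i,\rv{Q}_i)$ and the pair $(\rv{V}_i,\rv{R}_i)$ are \emph{independent} (they are functions of $(\rv{X}_1^N,\rv{Y}_1^N)$ and $(\rv{X}_{N+1}^{2N},\rv{Y}_{N+1}^{2N})$ respectively, and these two blocks are independent), and moreover identically distributed by stationarity. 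This independence is the key structural feature that drives both bounds, so I would state it explicitly at the outset as $\prrv{\rv{U}_i,\rv{V}_i,\rv{Q}_i,\rv{R}_i}{u,v,q,r} = \prrv{\rv{U}_i,\rv{Q}_i}{u,q}\,\prrv{\rv{U}_i,\rv{Q}_i}{v,r}$.

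\textbf{The minus branch ($\rv{B}_{n+1}=0$): $\rv{K}_n^- = \rv{K}_n^2$.} Here I want to show $\TV{\rv{T}_i\mid\rv{Q}_i,\rv{R}_i} = \TV{\rv{U}_i\mid\rv{Q}_i}^2$, where $\rv{T}_i=\rv{U}_i+\rv{V}_i$. Using $\TV{\rv{U}|\rv{Q}} = \sum_q|\pr{U,Q}{0,q}-\pr{U,Q}{1,q}|$ and writing $\delta(q) \triangleq \pr{U,Q}{0,q}-\pr{U,Q}{1,q}$, I would compute the joint law of $(\rv{T}_i,\rv{Q}_i,\rv{R}_i)$ by summing over $\rv{V}_i$: since $\rv{T}_i=0$ means $\rv{U}_i=\rv{V}_i$ and $\rv{T}_i=1$ means $\rv{U}_i\neq\rv{V}_i$, independence gives $\pr{T,Q,R}{0,q,r}-\pr{T,Q,R}{1,q,r} = \bigl(\pr{U,Q}{0,q}-\pr{U,Q}{1,q}\bigr)\bigl(\pr{U,Q}{0,r}-\pr{U,Q}{1,r}\bigr) = \delta(q)\delta(r)$. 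Summing absolute values over $(q,r)$ factors as $\bigl(\sum_q|\delta(q)|\bigr)\bigl(\sum_r|\delta(r)|\bigr) = \TV{\rv{U}_i\mid\rv{Q}_i}\cdot\TV{\rv{V}_i\mid\rv{R}_i} = \rv{K}_n^2$, the last equality by the identical distribution of the two pairs. This branch should be a clean, exact identity.

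\textbf{The plus branch ($\rv{B}_{n+1}=1$): $\rv{K}_n^+ \leq 2\rv{K}_n$.} Here I need $\TV{\rv{V}_i\mid\rv{T}_i,\rv{Q}_i,\rv{R}_i}\leq 2\,\TV{\rv{V}_i\mid\rv{R}_i}$. The natural route is to expand the left side via~\eqref{eq_def of TV}, splitting the sum over the two values $t\in\{0,1\}$ of $\rv{T}_i$ and over $(q,r)$, and to substitute $\pr{T,V,Q,R}{t,v,q,r} = \pr{U,V,Q,R}{t+v,v,q,r} = \pr{U,Q}{t+v,q}\pr{V,R}{v,r}$ using~\eqref{eq_P(STQR) from P(UVQR)} and independence. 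For fixed $(q,r)$ the inner contribution is $\sum_t\bigl|\pr{U,Q}{t,q}\pr{V,R}{0,r} - \pr{U,Q}{t+1,q}\pr{V,R}{1,r}\bigr|$, and after bounding each term using $|ac-bd|\le |a|\,|c-d| + |d|\,|a-b|$ (or directly expanding the four resulting terms), summing over $t$ and then over $(q,r)$, the $\sum_q\pr{U,Q}{\cdot,q}$-type sums collapse to $1$ while the differences collapse to $\TV{\rv{V}_i\mid\rv{R}_i}$, yielding at most $2\,\rv{K}_n$. I expect this plus-branch inequality to be the main obstacle: unlike the minus branch it is not an equality, the bookkeeping of absolute values over the pair $(t, q, r)$ is fiddly, and care is needed to apply the triangle inequality in the direction that makes the marginal sums telescope correctly rather than producing a useless bound. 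It may be cleaner to mimic Ar\i{}kan's argument for the analogous Bhattacharyya bound $\rv{Z}_n^+ \le 2\rv{Z}_n^-$ (or to first prove $\rv{K}_n^+ + \rv{K}_n^- \ge$ something and combine with the minus identity), but I would first attempt the direct absolute-value estimate above.
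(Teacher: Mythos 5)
Your overall strategy is the paper's: fix the index via $\rv{B}_1,\ldots,\rv{B}_n$, use independence and identical distribution of $(\rv{U}_i,\rv{Q}_i)$ and $(\rv{V}_i,\rv{R}_i)$ in the memoryless case, get an exact factorization for the minus branch, and a triangle-inequality estimate for the plus branch. Your minus branch is correct and coincides with the paper's computation: the difference $\pr{T,Q,R}{0,q,r}-\pr{T,Q,R}{1,q,r}$ factors as $\delta(q)\delta(r)$ and the absolute sum splits into $\rv{K}_n^2$.

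The plus branch, however, has a genuine gap as written. Your splitting $|ac-bd|\le |a|\,|c-d|+|d|\,|a-b|$, applied with $a=P(t,q)$, $b=P(t+1,q)$, $c=P(0,r)$, $d=P(1,r)$ and summed over $t\in\{0,1\}$ and $(q,r)$, gives
\begin{equation*}
\sum_{t,q,r}P(t,q)\,|P(0,r)-P(1,r)| \;+\; \sum_{t,q,r}P(1,r)\,|P(t,q)-P(t+1,q)|
= \rv{K}_n + 2\,\Probi{\rv{V}_i=1}\,\rv{K}_n,
\end{equation*}
because the first sum collapses to $1\cdot\rv{K}_n$ while in the second the sum over $t$ doubles the difference term and the marginal collapses only to $\Probi{\rv{V}_i=1}$, not to $1/2$. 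Without a symmetry assumption this is only $\le 3\rv{K}_n$, not $2\rv{K}_n$. (The same happens with the mirror splitting $c(a-b)+b(c-d)$, giving $\rv{K}_n+2\Probi{\rv{V}_i=0}\rv{K}_n$.) The missing idea is to symmetrize: the paper uses the exact identity $(ab-cd)=\tfrac{1}{2}\big[(a+c)(b-d)+(b+d)(a-c)\big]$, which is precisely the average of your two asymmetric splittings. After the triangle inequality, the coefficients become the \emph{full} marginals $P(q)=P(0,q)+P(1,q)$ and $P(r)=P(0,r)+P(1,r)$, each of which sums to $1$, and the two resulting terms each evaluate to exactly $\rv{K}_n$, giving $2\rv{K}_n$. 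You correctly anticipated that the direction of the triangle inequality is the delicate point, and a constant $3$ would still suffice for invoking \Cref{lem_simple proof} downstream, but the bound $\rv{K}_{n+1}\le 2\rv{K}_n$ claimed in the proposition is not established by the inequality you chose.
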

In the sequel, we shall generalize this proposition to a non-memoryless case. The proof for the memoryless case serves as preparation for the more general case, which uses similar techniques. For an extension of \Cref{prop_K is a supermartingale} to the case where $\rv{X}_j$ is non-binary, see Appendix~\ref{app_non binary extension}. 
\begin{IEEEproof}
Fix $\rv{B}_1,\ldots, \rv{B}_{n}$ and let $i-1 = (\rv{B}_1 \rv{B}_{2} \cdots \rv{B}_n)_2$.   
This also fixes the value of $\rv{K}_n$.  
Using~\eqref{eq_P(UVQR) from P(XY)} and the memoryless assumption, we denote $P \equiv \prrv{\rv{U}_i,\rv{Q}_i}{} =  \prrv{\rv{V}_i,\rv{R}_i}{}$, by which  
 \[\prrv{\rv{U}_i,\rv{V}_i,\rv{Q}_i,\rv{R}_i}{u,v,q,r} = P(u,q)P(v,r).\]  
Note that $\rv{K}_n = \TV{\rv{U}_i|\rv{Q}_i} = \TV{\rv{V}_i|\rv{R}_i}$.
 
 Set $\rv{T}_i = \rv{U}_i+\rv{V}_i$; by~\eqref{eq_P(STQR) from P(UVQR)},  
 \[  \prrv{\rv{T}_i,\rv{V}_i,\rv{Q}_i,\rv{R}_i}{t,v,q,r} = P(t+v,q)P(v,r), \]
and $\prrv{\rv{T}_i,\rv{Q}_i,\rv{R}_i}{t,q,r} = \sum_{v=0}^1 \prrv{\rv{T}_i,\rv{V}_i,\rv{Q}_i,\rv{R}_i}{t,v,q,r}$.
 A single-step polarization from $\rv{K}_n$ to $\rv{K}_{n+1}$, \eqref{eq_single step polarization for K}, becomes
\begin{equation} \rv{K}_{n+1} = \begin{cases} \TV{\rv{T}_i|\rv{Q},\rv{R}_i} & \text{if } \rv{B}_{n+1} = 0 \\
 	\TV{\rv{V}_i|\rv{T}_i,\rv{Q}_i,\rv{R}_i} & \text{if } \rv{B}_{n+1} = 1. 
 \end{cases} \label{eq_Kn single step polarization}
 \end{equation}

Assume first that $\rv{B}_{n+1} = 0$. Then 
\begin{align*}
\rv{K}_{n+1} &= \sol{\sum_{q,r}} \left| \prrv{\rv{T}_i,\rv{Q}_i,\rv{R}_i}{0,q,r} - \prrv{\rv{T}_i,\rv{Q}_i,\rv{R}_i}{1,q,r} \right| \\ 
&= \sol{\sum_{q,r}} \left| \sum_{v=0}^1 P(v,r) (P(v,q)  - P(v+1,q))\right|	\\
&= \sol{\sum_{q,r}} \bigg|\Big(P(0,q)-P(1,q)\Big)\Big(P(0,r)-P(1,r)\Big)\bigg| \\ 
&\eqann{a} \sol{\sum_{q,r}} \left|P(0,q)-P(1,q)\right|\cdot\left|P(0,r)-P(1,r)\right| \\ 
&= \sum_q \left|P(0,q)-P(1,q)\right| \cdot \sum_r \left|P(0,r)-P(1,r)\right| \\ 
&= \rv{K}_n^2, 
\end{align*}
where \eqannref{a} is because $|ab|=|a|\cdot|b|$ for any two numbers $a,b$. 
Next, assume that $\rv{B}_{n+1} = 1$. Observe that for any four numbers $a,b,c,d$, 
\begin{equation} (ab-cd) = \frac{(a+c)(b-d)+(b+d)(a-c)}{2}.\label{eq_abcd equality} \end{equation}
With a slight abuse of notation, we denote $P(q) = \prrv{\rv{Q}_i}{q} = P(0,q) + P(1,q)$. Then,  $P(r) = \prrv{\rv{R}_i}{r}=P(0,r) + P(1,r)$. Thus, 
\begin{align*}
\rv{K}_{n+1} &= \sol{\sum_{t,q,r}} \left| \prrv{\rv{T}_i,\rv{V}_i,\rv{Q}_i,\rv{R}_i}{t,0,q,r} - \prrv{\rv{T}_i,\rv{V}_i,\rv{Q}_i,\rv{R}_i}{t,1,q,r} \right| \\ 
&= \sol{\sum_{t,q,r}} \left| P(t,q)P(0,r) - P(t+1,q)P(1,r)\right| \\ 
&\leq \frac{1}{2} \sol{\sum_{t,q,r}}P(q)\left| P(0,r) -  P(1,r)\right|\\ &\quad + \frac{1}{2}  \sol{\sum_{t,q,r}}P(r)\left| P(t,q) -  P(t+1,q)\right| 	\\
&= \frac{1}{2} \sol{\sum_{t,r}}\left| P(0,r) -  P(1,r)\right|\\ &\quad + \frac{1}{2}  \sol{\sum_{t,q}}\left| P(t,q) -  P(t+1,q)\right| \\
&= 2\rv{K}_n, 
\end{align*}
where the inequality is due to a combination of~\eqref{eq_abcd equality} with the triangle inequality.

We have shown that $\rv{K}_{n+1} = \rv{K}_n^2$ if $\rv{B}_{n+1} = 0$ and $\rv{K}_{n+1} \leq 2\rv{K}_n$ if $\rv{B}_{n+1} = 1$, completing the proof.
\end{IEEEproof} 

\begin{remark}
Several other authors have independently looked at the polarization of the total variation distance. For example,~\cite[Proposition 5.1]{alsan2015} derives relations similar to~\eqref{eq_polarization bounds for TV}; the top equality of~\eqref{eq_polarization bounds for TV} is also shown in~\cite[Equation 12]{Dumer_stepped}.  
Those results were derived for binary-input, memoryless, and symmetric channels. 
Our \Cref{prop_K is a supermartingale}, on the other hand, does not require symmetry. We note in passing that it is also easily extended to a non-stationary case (similar to~\cite[Appendix 2.A]{korada2009} for the Bhattacharyya process), but that is outside the scope of this paper. 
\end{remark}

\section{Finite-State Aperiodic Irreducible Markov Processes} \label{sec_FAIM}
In this section we introduce a class of processes with memory that we call \emph{ Finite-state Aperiodic Irrecducible Markov processes} (FAIM processes). This is the class of processes for which we establish polarization and fast polarization. 

These processes are described using an underlying state sequence. Often, however, the state sequence is hidden. The polarization results we obtain apply to processes with a hidden state sequence. 

\subsection{Definition}

Let $(\rv{X}_j,\rv{Y}_j, \rv{S}_j)$, $j\in \mathbb{Z}$ be a strictly stationary process, where $\rv{X}_j$ is binary, $\rv{Y}_j \in \mathcal{Y}$, and $\rv{S}_j \in \mathcal{S}$. The alphabets $\mathcal{Y}$ and $\mathcal{S}$ are finite; in particular, $\mathcal{S}=\{1,2,\ldots,|\mathcal{S}|\}$. 
We call $\rv{S}_j, j \in \mathbb{Z}$ the \emph{state sequence}; it governs the distribution of sequences $\rv{X}_j$ and $\rv{Y}_j$, $j \in \mathbb{Z}$.

We may think of $\rv{X}_j$ as a state-dependent input to a state-dependent channel with output $\rv{Y}_j$. Alternatively, $\rv{X}_j$ may be some state-dependent source to be compressed, and $\rv{Y}_j$ an observation that the decoder may use as a decompression aid. The state sequence encompasses the memory of the process.  

The process is described by the conditional probability $\prrv{\rv{X}_j,\rv{Y}_j,\rv{S}_j|\rv{S}_{j-1}}{}$, which, by the stationarity assumption, is independent of $j$. We assume a Markov property: conditioned on $\rv{S}_{j-1}$, the random variables $\rv{X}_k,\rv{Y}_k,\rv{S}_k$ are independent of $\rv{X}_l,\rv{Y}_l, \rv{S}_{l-1}$ for any $l < j \leq k$. Thus, for any $N>M>0$, 
\begin{equation} \label{eq_recursive computation of joint prob}
\begin{split}
	& \prrv{\rv{X}_1^{N},\rv{Y}_1^{N}, \rv{S}_{N} | \rv{S}_0}{} \\
	& \quad = \sum_{\mstate} \prrv{\rv{X}_1^{M},\rv{Y}_1^{M}, \rv{S}_{M}, \rv{X}_{M+1}^N,\rv{Y}_{M+1}^N, \rv{S}_N | \rv{S}_0}{} \\ 
	& \quad = \sum_{\mstate} \prrv{\rv{X}_{M+1}^N,\rv{Y}_{M+1}^N, \rv{S}_{N}|\rv{S}_M ,\rv{X}_1^{M},\rv{Y}_1^{M}, \rv{S}_0}{} \cdot \prrv{\rv{X}_1^{M},\rv{Y}_1^{M},\rv{S}_M | \rv{S}_0}{} \\
	& \quad = \sum_{\mstate} \prrv{\rv{X}_{M+1}^N,\rv{Y}_{M+1}^N, \rv{S}_{N}|\rv{S}_M}{} \cdot \prrv{\rv{X}_1^{M},\rv{Y}_1^{M},\rv{S}_M | \rv{S}_0}{}, 
\end{split} 
\end{equation}
where $\mstate$ in the sum represents the value of the middle state $\rv{S}_M$.  

The state sequence is a finite-state homogeneous Markov chain. We denote its marginal distribution by $\pi$, and use the  shorthand
\begin{equation}
\begin{split}
	\pi_N(\istate) &= \prrv{\rv{S}_N}{\istate} \\ 	
	\pi_{N|M}(\mstate|\istate) &= \prrv{\rv{S}_N|\rv{S}_M}{\mstate|\istate} \\
	\pi_{N,M}(\mstate,\istate) &= \prrv{\rv{S}_N,\rv{S}_M}{\mstate,\istate}, 
\end{split} \label{eq_def of Q0 Qmn}
\end{equation}
where $N>M$. 
Note that $\pi_N(\istate) = \pi_0(\istate)$ and $\pi_{N|M}(\mstate|\istate) = \pi_{N-M|0}(\mstate|\istate)$. 

A finite-state homogeneous Markov chain is aperiodic and irreducible (ergodic) if and only if there
is some $N_0 > 0$ such that for any $N\geq N_0$, $\pi_{N|0}(\mstate|\istate) >0$ for any
$\istate,\mstate \in \mathcal{S}$. It can be shown that it has a unique stationary distribution
$\pi_0$  and $\pi_0(\istate)>0$ for any $\istate \in \mathcal{S}$. Moreover,
$\pi_{N|0}(\mstate|\istate) \to \pi_0(\mstate)$ exponentially fast as $N\to \infty$ for any
$\istate,\mstate \in \mathcal{S}$. See, e.g.,~\cite[Section 8]{billingsley1995probability}.

The process $(\rv{X}_j,\rv{Y}_j, \rv{S}_j)$, $j\in \mathbb{Z}$ is called a \emph{finite-state aperiodic irreducible Markov} process if the underlying Markov process $\rv{S}_j, j \in \mathbb{Z}$ is homogenous, finite-state, strictly stationary, aperiodic, and irreducible.\footnote{We remark that the process $(\rv{X}_j,\rv{Y}_j)$, $j \in \mathbb{Z}$ is not necessarily Markov.}  
In the sequel, we assume that $(\rv{X}_j,\rv{Y}_j, \rv{S}_j)$, $j\in \mathbb{Z}$ is a FAIM process.

At this point, the reader may wonder why we have imposed aperiodicity and irreducibility. In~\cite[Theorem 3]{sasoglu_2016}, it was demonstrated that periodic processes may not polarize. We assume aperiodicity to ensure that polarization indeed happens.
As for irreducibility, note that since the number of states is finite, the state sequence $\rv{S}_j, j \in \mathbb{Z}$ must reach an irreducible sink after sufficient time. Hence, the irreducibility assumption is equivalent to assuming that the state sequence begins in some irreducible sink.

Our model applies to many problems in information theory that can be described using states. For
example, compression of finite memory sources and coding for input constrained channels.
Additionally, our model may be applied to finite-state channels; in this case, the FAIM state
sequence describes both the channel state and input state. That is, FAIM processes enable us to
model non-i.i.d. input sequences. 

One famous example of a finite state model is the indecomposable FSC model considered in \cite[Section 4.6]{Gallager}. There are some differences between this model and ours. Most importantly, a FAIM process has a specified input distribution, whereas an indecomposable FSC is devoid of such specification. Instead, an indecomposable FSC imposes conditions that should hold for all input sequences. That said, once a hidden Markov input distribution has been specified, we can define a process in which the state space is the Cartesian product of the state spaces of the input distribution and the channel. In many important cases, e.g.\ a Gilbert-Elliot channel~\cite{mushkin_ge_channel}, this combined process falls under the FAIM framework.

\subsection{Blocks of a FAIM Process}
Typically, the state sequence is not observed. The joint distribution of $(\rv{X}_1^N,\rv{Y}_1^N)$ is given by 
\[ \prrv{\rv{X}_1^N,\rv{Y}_1^N}{x_1^N,y_1^N} = \sum_{\mstate,\istate} \prrv{\rv{X}_1^N,\rv{Y}_1^N, \rv{S}_N|\rv{S}_0}{x_1^N,y_1^N,\mstate|\istate} \pi_0(\istate),\]
where $\pi_0$ is the stationary distribution of the initial state.  

\begin{definition}[Block]
	Let $(\rv{X}_j,\rv{Y}_j, \rv{S}_j)$, $j\in \mathbb{Z}$ be a FAIM process and assume $M>L$. We call $(\rv{X}_{L+1}^M, \rv{Y}_{L+1}^M)$ a \emph{block} of the FAIM process. Its length is $M-L$.
	
	State $\rv{S}_{L}$ is called the \emph{initial} state of the block. State $\rv{S}_M$ is called the \emph{final} state of the block.
\end{definition}
We emphasize that the initial state of the block $(\rv{X}_{L+1}^M, \rv{Y}_{L+1}^M)$ is $\rv{S}_L$ and \emph{not} $\rv{S}_{L+1}$.  

The following lemma holds for any two non-overlapping blocks of a FAIM process. It establishes that FAIM processes are a special case of the family of processes considered in~\cite{sasoglu_2016}. 
\begin{lemma}\label{lem_XY is psi-mixing}
Assume that 
$(\rv{X}_j,\rv{Y}_j, \rv{S}_j)$, $j\in \mathbb{Z}$ is a FAIM process. Then, there exists a non-increasing sequence $\psi(N)$, $\psi(N) \to 1$ as $N \to \infty$, such that for any $N > M \geq L \geq 1$, 
\begin{equation} \prrv{\rv{X}_1^L,\rv{Y}_1^L , \rv{X}_{M+1}^N, \rv{Y}_{M+1}^N}{} \leq \psi(M-L) \cdot \prrv{\rv{X}_1^L,\rv{Y}_1^L}{} \cdot \prrv{\rv{X}_{M+1}^N, \rv{Y}_{M+1}^N}{},\label{eq_psi mixing for xy} \end{equation}
and $\psi(0)<\infty$. 
\end{lemma}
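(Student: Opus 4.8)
The plan is to reduce the statement to the ergodicity estimate already quoted for the underlying Markov chain, namely that $\pi_{N|0}(\mstate|\istate) \to \pi_0(\mstate)$ exponentially fast and is strictly positive for $N \geq N_0$. First I would write both sides of \eqref{eq_psi mixing for xy} by conditioning on states and using the Markov splitting identity \eqref{eq_recursive computation of joint prob}. For the left-hand side, insert the states $\rv{S}_0, \rv{S}_L, \rv{S}_M, \rv{S}_N$ (only $\rv{S}_L$ and $\rv{S}_M$ actually matter after the conditional independence is invoked):
\[
\prrv{\rv{X}_1^L,\rv{Y}_1^L,\rv{X}_{M+1}^N,\rv{Y}_{M+1}^N}{} = \sum_{s_L, s_M} \prrv{\rv{X}_1^L,\rv{Y}_1^L,\rv{S}_L}{ \cdot\, , s_L}\, \pi_{M|L}(s_M|s_L)\, \prrv{\rv{X}_{M+1}^N,\rv{Y}_{M+1}^N|\rv{S}_M}{\cdot\,|s_M},
\]
where I have used stationarity and the fact that, conditioned on $\rv{S}_M$, the block $(\rv{X}_{M+1}^N,\rv{Y}_{M+1}^N)$ is independent of everything with index $\leq M$. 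Doing the same bookkeeping for the two factors on the right-hand side (each carries its own independent copy of the relevant state marginal $\pi_0$), the claimed inequality becomes a statement comparing $\pi_{M|L}(s_M|s_L)$ with $\pi_0(s_M)$, weighted by nonnegative quantities.

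The key step is then: by ergodicity there are constants $C>0$ and $\rho\in(0,1)$ with $|\pi_{M|L}(s_M|s_L) - \pi_0(s_M)| \leq C\rho^{M-L}$ for all states, and $\pi_0(s_M) \geq \pi_{\min} > 0$. Hence
\[
\pi_{M|L}(s_M|s_L) \leq \pi_0(s_M)\Bigl(1 + \frac{C\rho^{M-L}}{\pi_{\min}}\Bigr) =: \pi_0(s_M)\,\psi(M-L).
\]
Substituting this bound into the state-decomposed left-hand side and pulling $\psi(M-L)$ out of the (nonnegative) sum gives exactly $\psi(M-L)$ times the product of the two marginals, because once the factor $\pi_{M|L}$ is replaced by $\pi_0(s_M)$ the double sum factors into $\bigl(\sum_{s_L}\prrv{\rv{X}_1^L,\rv{Y}_1^L,\rv{S}_L}{\cdot,s_L}\bigr)\cdot\bigl(\sum_{s_M}\pi_0(s_M)\prrv{\rv{X}_{M+1}^N,\rv{Y}_{M+1}^N|\rv{S}_M}{\cdot|s_M}\bigr) = \prrv{\rv{X}_1^L,\rv{Y}_1^L}{}\cdot\prrv{\rv{X}_{M+1}^N,\rv{Y}_{M+1}^N}{}$. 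This $\psi$ is manifestly non-increasing (after possibly replacing it by its running maximum from the right, or just noting $\rho^{M-L}$ is decreasing) and tends to $1$; for the edge case $M=L$ one sets $\psi(0)$ to be any finite bound, e.g.\ $1/\pi_{\min}$, which is finite since all stationary probabilities are positive.

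The main obstacle is purely notational bookkeeping: making sure the conditional-independence reduction is carried out correctly so that the left-hand side really does collapse to the clean two-state sum above, and in particular that the block $(\rv{X}_{M+1}^N,\rv{Y}_{M+1}^N)$ depends on the past only through $\rv{S}_M$ while $(\rv{X}_1^L,\rv{Y}_1^L)$ determines (jointly) $\rv{S}_L$. There is no real analytic difficulty — the exponential mixing of the finite-state ergodic chain does all the work — but one must be careful that the sums over the intermediate states $\rv{S}_L^{\phantom{1}}$ through $\rv{S}_M$ not appearing in the final blocks have already been marginalized out, which is exactly what \eqref{eq_recursive computation of joint prob} permits. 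I would also remark that this makes FAIM processes a special case of the $\psi$-mixing family of \cite{sasoglu_2016}, so that their fast-polarization-to-$0$ result applies verbatim.
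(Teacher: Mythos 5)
Your proposal is correct and follows essentially the same route as the paper: insert the boundary states, invoke the Markov property so that the only coupling between the two blocks is the factor $\pi_{M|L}(s_M|s_L)$, bound that factor by $\psi(M-L)\,\pi_0(s_M)$, and let the double sum factor into the product of marginals. The only cosmetic difference is that the paper takes $\psi(N)=\max_{a,b}\pi_{N|0}(b|a)/\pi_0(b)$ directly (citing a standard result for its monotonicity and convergence to $1$), whereas you use the geometric envelope $1+C\rho^{N}/\pi_{\min}$, which is an upper bound on the same quantity and equally valid for the existence claim.
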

We relegate the proof to Appedix~\ref{app_proof of XY is psi mixing}. We remark, however, that
\begin{equation}
\psi(N) = \begin{dcases} \max_{\istate,\mstate} \frac{\pi_{N|0}(\mstate|\istate)}{\pi_0(\mstate)} & \text{if } N>0 \\[0.1cm]
 \max_{\istate} \frac{1}{\pi_0(\istate)} &\text{if } N = 0. 	
 \end{dcases}
\label{eq_def of psi(n)} 	
\end{equation}
I.e., $\psi(\cdot)$ is  completely determined by the distribution of the underlying state sequence. Indeed, $\psi(N) \to 1$ as $N \to \infty$.  

A process satisfying~\eqref{eq_psi mixing for xy} with $\psi(N) \to 1$ as $N \to \infty$ is called \emph{$\psi$-mixing}.\footnote{In some literature, e.g.~\cite{bradley2007}, the term used is $\psi^*$-mixing.} The function $\psi(\cdot)$ is called the \emph{mixing coefficient}.  
The operational meaning of~\eqref{eq_psi mixing for xy} is that as $L$ and $M$ becomes more separated in time, the blocks  $(\rv{X}_1^L, \rv{Y}_1^L)$ and  $(\rv{X}_{M+1}^N, \rv{Y}_{M+1}^N)$ become almost independent.\footnote{Let $\mathcal{A}$ and $\mathcal{B}$ be two $\sigma$-algebras. If for any two events $A \in \mathcal{A}$ and $B \in \mathcal{B}$ we have $\Probi{A \cap B} \leq \Probi{A} \Probi{B}$ then $\Probi{A\cap B} = \Probi{A} \Probi{B}$. Assume to the contrary that for some events $A_0,B_0$, $\Probi{A_0 \cap B_0} < \Probi{A_0} \Probi{B_0}$. Denote the complement of $A_0$ by $\bar{A}_0$. Since $\bar{A}_0 \in \mathcal{A}$, we obtain a contradiction: $\Probi{B_0} = \Probi{\bar{A}_0 \cap B_0} + \Probi{A_0 \cap B_0} < \Probi{A_0} \Probi{B_0} + \Probi{\bar{A}_0}\Probi{B_0} = \Probi{B_0}$.}

Two adjacent blocks of the process share a state. The final state of the first block is the initial state of the second block. Given the shared state, the two blocks are independent. We capture this in the following lemma. 
\begin{lemma}\label{lem_two adjacent blocks independent given shared state}
For any $N>M\geq 1$, 
\begin{subequations} \label{eq_block independence given state}
\begin{align}	
\prrv{\rv{X}_1^M,\rv{Y}_1^M, \rv{X}_{M+1}^N,\rv{Y}_{M+1}^N|\rv{S}_M}{} &= \prrv{\rv{X}_1^M,\rv{Y}_1^M|\rv{S}_M}{} \prrv{\rv{X}_{M+1}^N,\rv{Y}_{M+1}^N|\rv{S}_M}{}, \label{eq_block independence given state N}\\
 \prrv{\rv{X}_1^M,\rv{Y}_1^M, \rv{X}_{M+1}^N,\rv{Y}_{M+1}^N|\rv{S}_0,\rv{S}_M,\rv{S}_N}{} &= \prrv{\rv{X}_1^M,\rv{Y}_1^M|\rv{S}_0,\rv{S}_M}{} \prrv{\rv{X}_{M+1}^N,\rv{Y}_{M+1}^N|\rv{S}_M,\rv{S}_N}{}. \label{eq_block independence given state 0 N M}
 \end{align}
\end{subequations}
\end{lemma}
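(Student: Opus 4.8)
The plan is to derive both identities from the defining Markov property of the process: conditioned on $\rv{S}_{j-1}$, the triple $\rv{X}_k,\rv{Y}_k,\rv{S}_k$ for $k\geq j$ is independent of everything indexed below $j$. The statement \eqref{eq_block independence given state 0 N M} is really the primitive one, and \eqref{eq_block independence given state N} will follow from it by summing out the boundary states; alternatively one can prove \eqref{eq_block independence given state N} directly and then \eqref{eq_block independence given state 0 N M} by the same argument carried out with $\rv{S}_0$ and $\rv{S}_N$ pinned down. I would lead with the direct computation for \eqref{eq_block independence given state N}.

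\textbf{Step 1 (setup).} Fix $N>M\geq 1$. Write
\[
\prrv{\rv{X}_1^M,\rv{Y}_1^M,\rv{X}_{M+1}^N,\rv{Y}_{M+1}^N \mid \rv{S}_M}{} = \frac{\prrv{\rv{X}_1^M,\rv{Y}_1^M,\rv{X}_{M+1}^N,\rv{Y}_{M+1}^N,\rv{S}_M}{}}{\prrv{\rv{S}_M}{}},
\]
and use the chain rule to split the numerator as
\[
\prrv{\rv{X}_{M+1}^N,\rv{Y}_{M+1}^N \mid \rv{S}_M,\rv{X}_1^M,\rv{Y}_1^M}{}\cdot\prrv{\rv{X}_1^M,\rv{Y}_1^M,\rv{S}_M}{}.
\]
(Conditioning on $\rv{S}_N$ as well, when needed, can be inserted at this point; it only adds another variable to carry along.)

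\textbf{Step 2 (apply the Markov property).} This is the crux. The future block $(\rv{X}_{M+1}^N,\rv{Y}_{M+1}^N)$ is built from $\rv{X}_k,\rv{Y}_k$ with $k\geq M+1 > M$, while $(\rv{X}_1^M,\rv{Y}_1^M)$ is built from indices $\leq M$. By the Markov assumption applied with $j=M+1$ — exactly as it was used to justify the third line of \eqref{eq_recursive computation of joint prob} — conditioning the future block on $\rv{S}_M$ renders it independent of $\rv{X}_1^M,\rv{Y}_1^M$ (and of $\rv{S}_0$). Hence
\[
\prrv{\rv{X}_{M+1}^N,\rv{Y}_{M+1}^N \mid \rv{S}_M,\rv{X}_1^M,\rv{Y}_1^M}{} = \prrv{\rv{X}_{M+1}^N,\rv{Y}_{M+1}^N \mid \rv{S}_M}{}.
\]
Substituting back and dividing by $\prrv{\rv{S}_M}{}$ gives \eqref{eq_block independence given state N}. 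The only thing to be careful about is that a conditional distribution given $\rv{S}_M$ (respectively given $\rv{S}_0,\rv{S}_M,\rv{S}_N$) is well-defined, which holds because $\pi_0(\istate)>0$ for every state of an ergodic chain, so the conditioning events have positive probability; I would note this once.

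\textbf{Step 3 (the two-sided version).} For \eqref{eq_block independence given state 0 N M}, repeat Step 2 but keep $\rv{S}_0$ and $\rv{S}_N$ as additional conditioning variables throughout. The future block, conditioned on $\rv{S}_M$, is independent of $(\rv{X}_1^M,\rv{Y}_1^M,\rv{S}_0)$ by the same instance of the Markov property; and it depends on $\rv{S}_N$ only through $(\rv{X}_{M+1}^N,\rv{Y}_{M+1}^N,\rv{S}_N)$ being the relevant ``forward'' segment, so $\prrv{\rv{X}_{M+1}^N,\rv{Y}_{M+1}^N\mid \rv{S}_0,\rv{S}_M,\rv{S}_N}{}=\prrv{\rv{X}_{M+1}^N,\rv{Y}_{M+1}^N\mid \rv{S}_M,\rv{S}_N}{}$, while the past block given $\rv{S}_0,\rv{S}_M,\rv{S}_N$ reduces to being given $\rv{S}_0,\rv{S}_M$. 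Alternatively — and this is the cleaner write-up — derive \eqref{eq_block independence given state N} first, apply it to the extended process to get independence given $\rv{S}_M$, and then condition further on $\rv{S}_0$ (a function of the past block's generating randomness together with earlier states) and $\rv{S}_N$ (a function of the future block) and invoke the elementary fact that if $A$ and $B$ are conditionally independent given $C$, they remain so given $(f(A,C),g(B,C),C)$.

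\textbf{Expected main obstacle.} There is no deep difficulty; the work is entirely bookkeeping. The one place to be precise is the invocation of the Markov property in Step 2: one must check that the hypothesis ``conditioned on $\rv{S}_{j-1}$, the variables $\rv{X}_k,\rv{Y}_k,\rv{S}_k$ for $k\geq j$ are independent of $\rv{X}_l,\rv{Y}_l,\rv{S}_{l-1}$ for $l<j$'' genuinely yields conditional independence of the \emph{entire} blocks (not just single coordinates), which follows by taking $j=M+1$ and letting $k$ range over $\{M+1,\dots,N\}$ and $l$ over $\{1,\dots,M\}$. Everything else is the chain rule and cancellation of $\prrv{\rv{S}_M}{}$.
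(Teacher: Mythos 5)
Your proof follows the same route as the paper for both identities: chain rule on the joint distribution, then the Markov property applied at $j=M+1$ to drop the past block (and $\rv{S}_0$) from the conditioning of the future block. The one place where your justification is wrong is the claim that conditioning on $(\rv{S}_0,\rv{S}_M,\rv{S}_N)$ is unproblematic because $\pi_0(\istate)>0$ for every state: positivity of the marginals does not give positivity of the joint. Even in an aperiodic irreducible chain a particular transition can be forbidden --- e.g.\ a two-state chain in which state $2$ never transitions to itself still has a strictly positive stationary distribution, yet $\Probi{\rv{S}_0=2,\,\rv{S}_1=2}=0$ --- so for small $M$ the triple $(\rv{S}_0,\rv{S}_M,\rv{S}_N)$ can land on a null event and the conditional in \eqref{eq_block independence given state 0 N M} is not defined by division. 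The paper flags exactly this (``some more care is required to avoid division by $0$'') and sidesteps it by multiplying both sides by $\prrv{\rv{S}_0,\rv{S}_M,\rv{S}_N}{}$, proving the resulting identity between joint distributions via repeated chain rule and Markov steps, and then appealing to the definition of conditional probability; your Step 3 should be patched the same way, or the identity read as holding only on triples of positive probability. Everything else --- including your correct emphasis that the Markov hypothesis must be seen to yield conditional independence of \emph{entire} blocks, not just single coordinates --- matches the paper's argument.
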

This is a direct consequence of the Markov property. A formal derivation can be found in Appendix~\ref{app_proof of XY is psi mixing}.

A notational convention concludes this section. Our analysis involves the use of some states of blocks of a FAIM process. We will use ascending letters to denote values of ordered states. That is, a state with value $\istate$ occurs before a state with value $\mstate$, which, in turn, occurs before a state with value $\fstate$. In \Cref{fig_two blocks} we illustrate a particular case that will be used in the sequel. A block of length $2N$ comprises two adjacent blocks of length $N$. State $\rv{S}_0$, the initial state of the first block, may take value $\istate$, state $\rv{S}_N$, at the end of the first block and the beginning of the second block, may take value $\mstate$, and state $\rv{S}_{2N}$, at the end of the second block, may take value $\fstate$. 	We emphasize that $\istate,\mstate,\fstate \in \mathcal{S}$ are \emph{not} random variables, but \emph{values} of the relevant states.

\begin{figure}
\begin{center}
\begin{tikzpicture}
	\node[draw, thick, minimum width = 4.1cm, minimum height = 0.9 cm] (box1) at (0,0) {$(\rv{U}_i,\rv{Q}_i) = f(\rv{X}_1^N, \rv{Y}_1^N)$}; 
	\node[draw, thick, minimum width = 4.1cm, minimum height = 0.9 cm, right = -1pt of box1] (box2) {$(\rv{V}_i,\rv{R}_i) = f(\rv{X}_{N+1}^{2N},\rv{Y}_{N+1}^{2N})$}; 
	\node[above = 0.1 of box1.north west] {$\rv{S}_0$};
	\node[below = 0.4 of box1.south west, anchor = base] {$\istate$};
	\node[above = 0.1 of box1.north east] {$\rv{S}_{N}$};
	\node[below = 0.4 of box1.south east, anchor = base] {$\mstate$};
	\node[above = 0.1 of box2.north east] {$\rv{S}_{2N}$};
	\node[below = 0.4 of box2.south east, anchor = base] {$\fstate$};
\end{tikzpicture}
\end{center}
\caption{Two adjacent length-$N$ blocks of a FAIM process.  When $i-1 = (\rv{B}_1\rv{B}_2\cdots\rv{B}_n)_2$, there is a function $f$ such that $(\rv{U}_i,\rv{Q}_i) = f(\rv{X}_1^N,\rv{Y}_1^N)$ and $(\rv{V}_i,\rv{R}_i) = f(\rv{X}_{N+1}^{2N}, \rv{Y}_{N+1}^{2N})$. The initial state of the first block, $\rv{S}_0$, assumes value $\istate \in\mathcal{S}$. The final state of the first block, $\rv{S}_N$, which is also the initial state of the second block, assumes value $\mstate \in \mathcal{S}$. The final state of the second block, $\rv{S}_{2N}$, assumes value $\fstate \in \mathcal{S}$.}\label{fig_two blocks}
\end{figure}
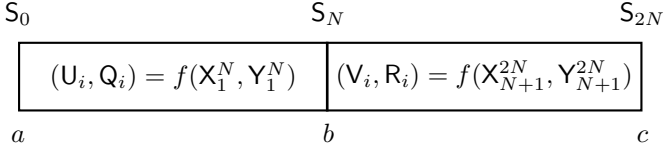

\subsection{Boundary-State-Informed Parameters for FAIM Processes}
Let  $(\rv{X}_1^N,\rv{Y}_1^N)$ be a block of a FAIM process with state sequence $\rv{S}_j$.  
Let $f(\cdot, \cdot)$ be some function independent of the state sequence such that
 \[(\rv{U},\rv{Q}) = f(\rv{X}_1^N, \rv{Y}_1^N)\] 
and $\rv{U}$ is binary.  
We denote   
\begin{equation}  P_{\istate}^{\mstate}(u,q) \triangleq \prrv{\rv{U},\rv{Q}|\rv{S}_N,\rv{S}_0}{u,q|\mstate,\istate} = \frac{\prrv{\rv{U},\rv{Q},\rv{S}_N|\rv{S}_0}{u,q,\mstate|\istate}}{\pi_{N|0}(\mstate|\istate)}. \label{eq_def of Pabuq}\end{equation} 
I.e., this is the distribution of $\rv{U}$ and $\rv{Q}$, functions of a block of length $N$, conditioned on the initial state being $\rv{S}_0 = \istate$ and the final state being $\rv{S}_N = \mstate$.  We further define 
\begin{equation}
	P_{\istate}^{\mstate}(q) = P_{\istate}^{\mstate}(0,q) + P_{\istate}^{\mstate}(1,q). \label{eq_def of Pabq}	
\end{equation}
I.e., $P_{\istate}^{\mstate}(q) = \prrv{\rv{Q}|\rv{S}_N,\rv{S}_0}{q|\mstate,\istate}$, and $\sum_q P_{\istate}^{\mstate}(q) = 1$.

We denote the results of replacing $\pr{U,Q}{u,q}$ with $P_{\istate}^{\mstate}(u,q)$ in \cref{eq_def of Z,eq_def of TV,eq_def of condent} by $\BPSn{\istate}{\mstate}{\rv{U}|\rv{Q}}$, $\TVSn{\istate}{\mstate}{\rv{U}|\rv{Q}}$, and $\ENTSn{\istate}{\mstate}{\rv{U}|\rv{Q}}$, respectively. For example, 
\begin{equation} \TVSn{\istate}{\mstate}{\rv{U}|\rv{Q}} = \sum_q \left| P_{\istate}^{\mstate}(0,q)
- P_{\istate}^{\mstate}(1,q) \right|. \label{eq_def of Kab}\end{equation}
Since $\prrv{\rv{U},\rv{Q},\rv{S}_N,\rv{S}_0}{u,q,\mstate,\istate} = P_{\istate}^{\mstate}(u,q)\cdot \pi_{N,0}(\mstate,\istate)$,
 we have
$\TV{\rv{U}|\rv{Q},\rv{S}_N,\rv{S}_0} = \sum_{\istate, \mstate} \pi_{N,0}(\mstate,\istate) \TVSn{\istate}{\mstate}{\rv{U}|\rv{Q}}.$ 
This leads to the following definition. 
\begin{definition}
Let $(\rv{U},\rv{Q}) = f(\rv{X}_1^N,\rv{Y}_1^N)$ with $\rv{U}$ binary.
The \emph{boundary-state-informed} (BSI) total variation distance, Bhattacharyya parameter, and conditional entropy are respectively defined as 
	\begin{align*} 
	\TV{\rv{U}|\rv{Q},\rv{S}_N,\rv{S}_0} &= \sol{\sum_{\istate, \mstate}} \pi_{N,0}(\mstate,\istate) \TVSn{\istate}{\mstate}{\rv{U}|\rv{Q}},\\
	\BP{\rv{U}|\rv{Q},\rv{S}_N,\rv{S}_0} &=  \sum_{\istate, \mstate} \pi_{N,0}(\mstate,\istate) \BPSn{\istate}{\mstate}{\rv{U}|\rv{Q}},\\ 
 \ENT{\rv{U}|\rv{Q},\rv{S}_N,\rv{S}_0} &=  \sum_{\istate, \mstate} \pi_{N,0}(\mstate,\istate) \ENTSn{\istate}{\mstate}{\rv{U}|\rv{Q}}.  
	\end{align*}
\end{definition}
BSI parameters are defined for blocks of the process; they depend on the initial and final states of the  block. 
Invoking~\eqref{eq_conditioning reduces} we relate the distribution parameters to their BSI counterparts,  
\begin{equation} 
\begin{split}
	\TV{\rv{U}|\rv{Q}} &\leq \TV{\rv{U}|\rv{Q},\rv{S}_N,\rv{S}_0},\\ 
	\BP{\rv{U}|\rv{Q}} &\geq \BP{\rv{U}|\rv{Q},\rv{S}_N,\rv{S}_0},\\
	\ENT{\rv{U}|\rv{Q}} &\geq \ENT{\rv{U}|\rv{Q},\rv{S}_N,\rv{S}_0}.
	\end{split}\label{eq_upper bound on KUQ} 
\end{equation}

\section{Fast Polarization for FAIM Processes} \label{sec_fast polarization}
This section contains our main result: fast polarization for FAIM processes. First, we show that they polarize by leveraging the results of~\cite{sasoglu_2016}. Then, we show fast polarization of the Bhattacharyya parameter and of the total variation distance to zero. 

The notation of \Cref{sec_polarization} holds, without change, for FAIM processes. That is, $\rv{U}_1^N, \rv{V}_1^N, \rv{Q}_i, \rv{R}_i$, $i=1,\ldots, N$ are defined using~\eqref{eq_defs of UVQR}.  
The random variables $\rv{B}_1, \ldots, \rv{B}_n$ are used for a random, iterative, uniform selection of an index
after $n$ polarization steps. That is,  they constitute the binary expansion of $i-1$, through which the random variables $\rv{K}_n = \TV{\rv{U}_i|\rv{Q}_i}$, $\rv{H}_n = \ENT{\rv{U}_i|\rv{Q}_i}$, and $\rv{Z}_n = \BP{\rv{U}_i|\rv{Q}_i}$ are defined.  Random variable $\rv{K}_{n+1}$ is related to $\rv{K}_n$ by~\eqref{eq_single step polarization for K}. I.e., $\rv{K}_{n+1} = \rv{K}_n^{-}$ if $\rv{B}_{n+1} = 0$ and $\rv{K}_{n+1} = \rv{K}_n^{+}$ if $\rv{B}_{n+1} = 1$.  Similar relationships hold for $\rv{H}_n$ and $\rv{Z}_n$. 

Let $\hat{\rv{K}}_n, \hat{\rv{H}}_n,$ and $\hat{\rv{Z}}_n$ denote the boundary-state-informed versions of $\rv{K}_n, \rv{Z}_n$, and $\rv{H}_n$, respectively. That is, 
\begin{equation}
\begin{split}
	\hat{\rv{K}}_n &= \TV{\rv{U}_i|\rv{Q}_i,\rv{S}_N,\rv{S}_0}, \\ 
	\hat{\rv{Z}}_n &= \BP{\rv{U}_i|\rv{Q}_i,\rv{S}_N,\rv{S}_0}, \\
	\hat{\rv{H}}_n &= \ENT{\rv{U}_i|\rv{Q}_i,\rv{S}_N,\rv{S}_0},
\end{split} \label{eq_defs of Knhat Znhat Hnhat}
\end{equation}
where $i-1 = (\rv{B}_1\rv{B}_2\cdots\rv{B}_n)_2$. 
By~\eqref{eq_upper bound on KUQ}, $\rv{K}_n \leq \hat{\rv{K}}_n$, $\rv{Z}_n \geq \hat{\rv{Z}}_n$, and $\rv{H}_n \geq \hat{\rv{H}}_n$ for any $n$. Similar to~\eqref{eq_single step polarization for K}, we have
\begin{equation} \hat{\rv{K}}_{n+1} = \begin{cases}
 	\TV{\rv{U}_i + \rv{V}_i | \rv{Q}_i, \rv{R}_i,\rv{S}_0,\rv{S}_{2N}} & \text{if } \rv{B}_{n+1} = 0 \\ 	
 	\TV{\rv{V}_i | \rv{U}_i + \rv{V}_i, \rv{Q}_i, \rv{R}_i,\rv{S}_0,\rv{S}_{2N}} & \text{if } \rv{B}_{n+1} = 1.
 \end{cases}\label{eq_single step polarization for Khat} \end{equation}
 Relationships akin to~\eqref{eq_single step polarization for Khat} hold for $\hat{\rv{Z}}_{n+1}$ and $\hat{\rv{H}}_{n+1}$, with $\TV{}$ replaced with $\BP{}$ and $\ENT{}$, respectively. We use the mnemonic $\hat{\rv{K}}_{n+1}^{-} = \TV{\rv{U}_i + \rv{V}_i | \rv{Q}_i, \rv{R}_i,\rv{S}_0,\rv{S}_{2N}}$ and $\hat{\rv{K}}_{n+1}^{+} =\TV{\rv{V}_i | \rv{U}_i + \rv{V}_i, \rv{Q}_i, \rv{R}_i,\rv{S}_0,\rv{S}_{2N}}$, and similar mnemonics for the BSI Bhattachryya and conditional entropy processes.

\subsection{Existing Polarization Results for FAIM Processes}
In~\cite{sasoglu_2016}, a class of processes with memory was considered. For this class, the authors showed that the conditional entropy process polarizes and that the Bhattacharyya process polarizes fast to $0$. 

Specifically, let 
\[ \ENTi{\rv{X}|\rv{Y}} \triangleq \sol[l]{\lim_{N\to \infty}} \frac{1}{N}\ENT{\rv{X}_1^N|\rv{Y}_1^N}.\]
This limit exists due to stationarity~\cite[Section 4.2]{cover_thomas} and the identity $\ENT{\rv{X}_1^N|\rv{Y}_1^N} = \ENT{\rv{X}_1^N,\rv{Y}_1^N} - \ENT{\rv{Y}_1^N}$.  
\begin{theorem} \label{th_sasoglu_tal_2016}
\cite[Theorems 1,2,4,5]{sasoglu_2016} For a strictly stationary $\psi$-mixing process $(\rv{X}_j,\rv{Y}_j)$, $j\in \mathbb{Z}$, with $\psi(0)<\infty$: 
\begin{enumerate} 
\item $\rv{H}_n$ polarizes to $\rv{H}_{\infty}$ with $\Prob{\rv{H}_{\infty} = 1} = \ENTi{\rv{X}|\rv{Y}}$; 
\item $\rv{Z}_n$ polarizes fast to $0$ with $\beta < 1/2$.
\end{enumerate}

In particular, for any $\epsilon > 0$,
\begin{subequations}\label{eq_simple proof}
\begin{align}
\lim_{N\to \infty} \frac{1}{N} \left|\left\{i: \ENT{\rv{U}_i| \rv{Q}_i	} > 1-\epsilon \right\}\right| &= \ENTi{\rv{X}|\rv{Y}}, \label{eq_simple ent 1}\\
\lim_{N\to \infty} \frac{1}{N} \left|\left\{i: \ENT{\rv{U}_i| \rv{Q}_i	} < \epsilon\right\} \right| &= 1-\ENTi{\rv{X}|\rv{Y}}, \label{eq_simple ent 2}
\end{align} \end{subequations}
and for any $\beta < 1/2$, 
\begin{equation} \lim_{N\to \infty} \frac{1}{N} \left|\left\{i: \BP{\rv{U}_i| \rv{Q}_i	} < 2^{-N^{\beta}} \right\} \right| = 1-\ENTi{\rv{X}|\rv{Y}}. \label{eq_simple fast polarization of Zn to 0} \end{equation}
\end{theorem}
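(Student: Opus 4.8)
The plan is to obtain the first two items directly from \cite[Theorems 1,2,4,5]{sasoglu_2016} and to derive the displayed ``in particular'' assertions as corollaries. First I would check that a FAIM process meets the hypotheses of the cited theorems, which require a strictly stationary, $\psi$-mixing process $(\rv{X}_j,\rv{Y}_j)$, $j\in\mathbb{Z}$, with $\psi(0)<\infty$. Strict stationarity is part of the definition of a FAIM process, and the $\psi$-mixing property together with $\psi(0)<\infty$ is exactly \Cref{lem_XY is psi-mixing}. With the hypotheses in place, item (1) (polarization of $\rv{H}_n$ with $\Prob{\rv{H}_{\infty}=1}=\ENTi{\rv{X}|\rv{Y}}$) and item (2) (fast polarization of $\rv{Z}_n$ to $0$ with any $\beta<1/2$) are immediate restatements of the cited results, so nothing further is needed for them.

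The remaining task is to translate these process-level statements into the counting statements \eqref{eq_simple proof} and \eqref{eq_simple fast polarization of Zn to 0}. The bridge is the construction of $\rv{H}_n$ and $\rv{Z}_n$: with $i-1=(\rv{B}_1\rv{B}_2\cdots\rv{B}_n)_2$ and the $\rv{B}_j$ i.i.d.\ Bernoulli-$1/2$, the index $i$ is uniform on $\{1,\dots,N\}$, $N=2^n$, so for every measurable $A\subseteq[0,1]$ one has $\Prob{\rv{H}_n\in A}=\tfrac1N\bigl|\{i:\ENT{\rv{U}_i|\rv{Q}_i}\in A\}\bigr|$, and similarly for $\rv{Z}_n$. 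Hence for \eqref{eq_simple ent 1}--\eqref{eq_simple ent 2} it suffices to show $\Prob{\rv{H}_n>1-\epsilon}\to\Prob{\rv{H}_{\infty}=1}$ and $\Prob{\rv{H}_n<\epsilon}\to\Prob{\rv{H}_{\infty}=0}$. Since $\rv{H}_n\to\rv{H}_{\infty}$ almost surely and $\rv{H}_{\infty}$ takes only the values $0$ and $1$, on $\{\rv{H}_{\infty}=1\}$ we have $\rv{H}_n>1-\epsilon$ for all large $n$ and on $\{\rv{H}_{\infty}=0\}$ we have $\rv{H}_n<\epsilon$ for all large $n$; thus $\mathbf{1}\{\rv{H}_n>1-\epsilon\}\to\mathbf{1}\{\rv{H}_{\infty}=1\}$ and $\mathbf{1}\{\rv{H}_n<\epsilon\}\to\mathbf{1}\{\rv{H}_{\infty}=0\}$ almost surely, and bounded convergence gives the two limits. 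Recalling $\Prob{\rv{H}_{\infty}=1}=\ENTi{\rv{X}|\rv{Y}}$ and $\Prob{\rv{H}_{\infty}=0}=1-\ENTi{\rv{X}|\rv{Y}}$ completes \eqref{eq_simple proof}.

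For \eqref{eq_simple fast polarization of Zn to 0}, note that $N=2^n$ gives $2^{-N^{\beta}}=2^{-2^{n\beta}}$, so $\tfrac1N\bigl|\{i:\BP{\rv{U}_i|\rv{Q}_i}<2^{-N^{\beta}}\}\bigr|=\Prob{\rv{Z}_n<2^{-2^{n\beta}}}$, which converges to $\Prob{\rv{Z}_{\infty}=0}$ for every $\beta<1/2$ by item (2). It then remains to identify $\Prob{\rv{Z}_{\infty}=0}$ with $1-\ENTi{\rv{X}|\rv{Y}}$. Using the sandwich $\rv{Z}_n^2\le\rv{H}_n\le\rv{Z}_n$ from \eqref{eq_bounds on BP} and passing to the almost-sure limits (both exist), we get $\rv{Z}_{\infty}^2\le\rv{H}_{\infty}\le\rv{Z}_{\infty}$; since $\rv{Z}_{\infty}\in\{0,1\}$ this forces $\rv{H}_{\infty}=\rv{Z}_{\infty}$, whence $\Prob{\rv{Z}_{\infty}=0}=\Prob{\rv{H}_{\infty}=0}=1-\ENTi{\rv{X}|\rv{Y}}$.

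Because the substantive ingredients --- the mixing estimate \Cref{lem_XY is psi-mixing}, the cited polarization theorems, and the parameter inequalities of \Cref{lem_TV distance bounds} --- are all already available, I do not expect a serious obstacle here. The only step that demands any care is the passage from almost-sure convergence of the $[0,1]$-valued processes to convergence of their tail probabilities, which is exactly the indicator/bounded-convergence argument above and hinges on $\rv{H}_{\infty}$ (equivalently $\rv{Z}_{\infty}$) being supported on $\{0,1\}$.
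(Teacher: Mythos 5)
Your proposal is correct and takes essentially the same route as the paper: the two enumerated items are simply cited from~\cite{sasoglu_2016}, and the ``in particular'' displays follow from the identity $\Prob{\rv{H}_n\in A}=\tfrac1N\left|\{i:\ENT{\rv{U}_i|\rv{Q}_i}\in A\}\right|$ (and its $\rv{Z}_n$ analogue), which you justify carefully via the indicator/bounded-convergence argument and the sandwich $\rv{Z}_n^2\le\rv{H}_n\le\rv{Z}_n$ to identify $\rv{Z}_{\infty}$ with $\rv{H}_{\infty}$. The only slight misplacement is your opening step of verifying that a FAIM process is $\psi$-mixing: this theorem is stated for general strictly stationary $\psi$-mixing processes, so that verification belongs to \Cref{cor_polarization for FAIM from ST2016} rather than here.
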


To prove \Cref{th_sasoglu_tal_2016}, the conditional entropy process $\rv{H}_n$ was shown to be a bounded supermartingale, so it converges almost surely to some random variable $\rv{H}_{\infty}$. This latter random variable was shown to  be a $\{0,1\}$-random variable with $\Probi{\rv{H}_{\infty} = 1} = 1- \Probi{\rv{H}_{\infty} = 0} = \ENTi{\rv{X}|\rv{Y}}$. This yields~\eqref{eq_simple proof}. 

\Cref{eq_simple fast polarization of Zn to 0} is based on the observation that
\begin{equation} \Prob{\rv{Z}_n < 2^{-N^{\beta}}} = \frac{1}{N} \left|\left\{i: \BP{\rv{U}_i| \rv{Q}_i	} < 2^{-N^{\beta}} \right\} \right|.\label{eq_alternative fast polarization of Zn to 0}\end{equation}
First, the Bhattacharyya process $\rv{Z}_n$ was also shown to converge almost surely to $\rv{H}_{\infty}$. 
Next, using the mixing property, the authors showed that $\rv{Z}_n^- \leq 2\psi(0) \rv{Z}_n$ and $\rv{Z}_n^+ \leq \psi(0) \rv{Z}_n^2$. This allowed them to invoke \Cref{lem_simple proof} and obtain~\eqref{eq_simple fast polarization of Zn to 0}.
 
\begin{corollary}\label{cor_polarization for FAIM from ST2016}
	Let 
$(\rv{X}_j,\rv{Y}_j, \rv{S}_j)$, $j\in \mathbb{Z}$ be a FAIM process. Then,
\begin{enumerate} 
\item Its conditional entropy process $\rv{H}_n$ polarizes to $\rv{H}_{\infty}$ with $\Prob{\rv{H}_{\infty} = 1} = \ENTi{\rv{X}|\rv{Y}}$. 
\item Its Bhattacharyya process $\rv{Z}_n$ polarizes fast to $0$ with any $\beta < 1/2$. 
\end{enumerate}
\end{corollary}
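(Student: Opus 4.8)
The plan is to recognize that the corollary is an immediate specialization of \Cref{th_sasoglu_tal_2016}, so the entire task reduces to verifying that the bivariate process $(\rv{X}_j,\rv{Y}_j)$, $j\in\mathbb{Z}$, obtained from a FAIM process by discarding the state coordinate, satisfies the hypotheses of that theorem: strict stationarity, the $\psi$-mixing property, and $\psi(0)<\infty$.

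First I would dispatch strict stationarity: it is part of the definition of a FAIM process that $(\rv{X}_j,\rv{Y}_j,\rv{S}_j)$ is strictly stationary, and strict stationarity passes to the marginal process $(\rv{X}_j,\rv{Y}_j)$ under the coordinate projection, since for any time shift and any finite collection of indices the joint law of the $(\rv{X}_j,\rv{Y}_j)$ coordinates is a marginal of the (shift-invariant) joint law of $(\rv{X}_j,\rv{Y}_j,\rv{S}_j)$.

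Next I would invoke \Cref{lem_XY is psi-mixing}, which is precisely the statement that $(\rv{X}_j,\rv{Y}_j)$ is $\psi$-mixing: it supplies a non-increasing mixing coefficient $\psi(N)\to 1$ --- given explicitly in \eqref{eq_def of psi(n)} in terms of the underlying chain --- satisfying the mixing inequality \eqref{eq_psi mixing for xy}, with $\psi(0) = \max_{\istate} 1/\pi_0(\istate)$. The finiteness $\psi(0)<\infty$ then follows from the standard fact, already recorded in \Cref{sec_FAIM}, that an irreducible finite-state Markov chain has a stationary distribution $\pi_0$ that is strictly positive on every state.

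With the hypotheses verified, \Cref{th_sasoglu_tal_2016} applies directly: its first conclusion is exactly claim~1 (the conditional entropy process polarizes to $\rv{H}_\infty$ with $\Prob{\rv{H}_\infty=1}=\ENTi{\rv{X}|\rv{Y}}$), and its second conclusion is exactly claim~2 ($\rv{Z}_n$ polarizes fast to $0$ with any $\beta<1/2$). Consequently there is no real obstacle in this corollary --- all of the substance has been front-loaded into \Cref{lem_XY is psi-mixing}, i.e., into translating the ergodic finite-state Markov structure into a usable mixing coefficient. The only point requiring a sentence of care is the positivity of $\pi_0$ needed for $\psi(0)<\infty$.
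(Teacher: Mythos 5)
Your proposal is correct and follows exactly the paper's route: the paper's proof is a one-line invocation of \Cref{lem_XY is psi-mixing} to verify the hypotheses of \Cref{th_sasoglu_tal_2016}, which is precisely what you do, just with the stationarity and $\psi(0)<\infty$ checks spelled out. No gaps.
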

 \begin{IEEEproof}
 	By \Cref{lem_XY is psi-mixing}, blocks of FAIM processes are $\psi$-mixing and satisfy the requirements of \Cref{th_sasoglu_tal_2016}.
 \end{IEEEproof}

\Cref{th_sasoglu_tal_2016}, and consequently \Cref{cor_polarization for FAIM from ST2016}, are silent on the rate of polarization of $\rv{Z}_n$ to $1$.   
In the sequel we  establish a compatible claim for FAIM processes. To do this, we exploit the structure of FAIM processes by calling upon the BSI processes $\hat{\rv{H}}_n$ and $\hat{\rv{K}}_n$.

\subsection{Polarization of the BSI Distribution Parameters}
This section is concerned with proving that the BSI distribution parameters polarize. We achieve this by first showing that the BSI conditional entropy polarizes and then using \Cref{lem_TV distance bounds} to establish polarization of the BSI Bhattacharyya parameter and BSI total variation distance. 

\begin{theorem}\label{th_BSI entropy polarizes}
Let $(\rv{X}_j,\rv{Y}_j,\rv{S}_j)$, $j \in \mathbb{Z}$ be a FAIM process. The BSI conditional entropy process $\hat{\rv{H}}_n$ polarizes to  $ \hat{\rv{H}}_{\infty}$ and $\hat{\rv{H}}_{\infty} = \rv{H}_{\infty}$ almost surely. 

In particular, for any $\epsilon > 0$,
\begin{align*}
\lim_{N\to \infty} \frac{1}{N} \left|\left\{i: \ENT{\rv{U}_i| \rv{Q}_i, \rv{S}_0, \rv{S}_N} > 1-\epsilon \right\}\right| &= \ENTi{\rv{X}|\rv{Y}}, \\
\lim_{N\to \infty} \frac{1}{N} \left|\left\{i: \ENT{\rv{U}_i| \rv{Q}_i, \rv{S}_0, \rv{S}_N} < \epsilon\right\} \right| &= 1-\ENTi{\rv{X}|\rv{Y}}.
\end{align*}	
\end{theorem}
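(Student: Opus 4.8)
The plan is to show that $\hat{\rv{H}}_n$ and $\rv{H}_n$ sandwich each other asymptotically, so that polarization of $\rv{H}_n$ (already known from \Cref{cor_polarization for FAIM from ST2016}) forces polarization of $\hat{\rv{H}}_n$ to the same limit. We already have one direction for free: by~\eqref{eq_upper bound on KUQ}, $\hat{\rv{H}}_n \leq \rv{H}_n$ for every $n$ (conditioning on the boundary states only reduces entropy). So it remains to control the gap $\rv{H}_n - \hat{\rv{H}}_n \geq 0$ from above. The natural tool is the mixing property of \Cref{lem_XY is psi-mixing}: intuitively, if we also reveal $\rv{S}_0$ and $\rv{S}_N$, we gain at most the mutual information between $(\rv{X}_1^N,\rv{Y}_1^N)$ and the boundary states, and that mutual information is bounded because the block is nearly independent of the states that bracket it once it is long enough.

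First I would write $\rv{H}_n - \hat{\rv{H}}_n = \ENT{\rv{U}_i|\rv{Q}_i} - \ENT{\rv{U}_i|\rv{Q}_i,\rv{S}_0,\rv{S}_N}$, which equals the conditional mutual information $\MI{\rv{U}_i ; \rv{S}_0,\rv{S}_N \mid \rv{Q}_i}$. Summing this over $i$ against the uniform measure (i.e., taking expectation over $\rv{B}_1,\ldots,\rv{B}_n$) and using the chain rule together with the fact that $(\rv{U}_i,\rv{Q}_i)$ is a function of $(\rv{X}_1^N,\rv{Y}_1^N)$ (indeed $\rv{U}_1^N = \rv{X}_1^N G_N$ is invertible, and the $\rv{Q}_i$'s telescope), I expect to get $\Exp{\rv{H}_n - \hat{\rv{H}}_n} \leq \frac{1}{N}\MI{\rv{X}_1^N,\rv{Y}_1^N ; \rv{S}_0,\rv{S}_N} \leq \frac{1}{N}\cdot 2\log_2|\mathcal{S}|$, which tends to $0$. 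Hence $\rv{H}_n - \hat{\rv{H}}_n \to 0$ in $L^1$, and since it is nonnegative, also in probability, and along a subsequence almost surely. Because $\rv{H}_n \to \rv{H}_\infty \in \{0,1\}$ a.s. and $0 \leq \hat{\rv{H}}_n \leq \rv{H}_n$, this pins $\hat{\rv{H}}_n$ down.

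To upgrade to almost-sure convergence of $\hat{\rv{H}}_n$ itself (not just along a subsequence), I would argue as follows: on the event $\{\rv{H}_\infty = 0\}$, $\rv{H}_n \to 0$ a.s. and $0 \leq \hat{\rv{H}}_n \leq \rv{H}_n \to 0$, so $\hat{\rv{H}}_n \to 0$ a.s. there. On $\{\rv{H}_\infty = 1\}$, $\rv{H}_n \to 1$, and I would need $\hat{\rv{H}}_n \to 1$ as well; for this the bounded-difference argument in expectation combined with a Borel–Cantelli step may be needed, or one can invoke that $\hat{\rv{H}}_n$ is itself a bounded supermartingale (a single-step analogue of the $\rv{H}_n$ supermartingale argument, using~\eqref{eq_single step polarization for Khat} and the fact that averaging the BSI parameter over the boundary states recovers a quantity comparable to the unconditioned one), hence converges a.s. to some $\hat{\rv{H}}_\infty$; then the $L^1$ gap bound forces $\hat{\rv{H}}_\infty = \rv{H}_\infty$. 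The last two displayed density statements then follow verbatim from the probabilistic interpretation of $\Prob{\hat{\rv{H}}_n > 1-\epsilon}$ and $\Prob{\hat{\rv{H}}_n < \epsilon}$ as the relevant index fractions, exactly as~\eqref{eq_simple proof} follows from part~(1) of \Cref{th_sasoglu_tal_2016}.

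\textbf{Main obstacle.} The delicate point is establishing that $\hat{\rv{H}}_n$ is (close to) a supermartingale, or otherwise getting genuine almost-sure — not merely subsequential or in-probability — convergence of $\hat{\rv{H}}_n$ on the event $\{\rv{H}_\infty=1\}$. The expectation bound $\Exp{\rv{H}_n - \hat{\rv{H}}_n} \to 0$ is routine, but it only gives convergence in probability; squeezing between $\rv{H}_n$ and $0$ handles the low-entropy part cleanly, while the high-entropy part needs the extra supermartingale structure or a summability/Borel–Cantelli refinement of the mixing bound. I expect the cleanest route is to prove directly that $\{\hat{\rv{H}}_n\}$ is a supermartingale with respect to the filtration generated by $\rv{B}_1,\rv{B}_2,\ldots$, using~\eqref{eq_single step polarization for Khat} and the concavity/conditioning properties of entropy together with the FAIM structure of two adjacent blocks sharing a state (\Cref{lem_two adjacent blocks independent given shared state}); then a.s. convergence is automatic and identification of the limit with $\rv{H}_\infty$ comes from the $L^1$ gap bound.
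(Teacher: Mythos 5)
Your proposal follows essentially the same route as the paper: the paper shows $\hat{\rv{H}}_n$ is a bounded martingale-type sequence (hence converges a.s.) using the single-step relation~\eqref{eq_single step polarization for Khat}, the chain rule, and the conditional independence of adjacent blocks given the shared state; it bounds $\Exp{\rv{H}_n-\hat{\rv{H}}_n}\le 2\log_2(|\mathcal{S}|)/N$ exactly as you describe; and it identifies $\hat{\rv{H}}_{\infty}=\rv{H}_{\infty}$ via Fatou's lemma applied to the nonnegative gap. One small correction: the conditioning argument gives $\hat{\rv{H}}_n \le \tfrac12(\hat{\rv{H}}_n^{-}+\hat{\rv{H}}_n^{+})$ (dropping the middle state $\rv{S}_N$ from the conditioning \emph{increases} entropy), so $\hat{\rv{H}}_n$ is a bounded \emph{sub}martingale rather than a supermartingale — this does not affect the almost-sure convergence you need.
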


\begin{IEEEproof}
Consider two adjacent blocks of length $N=2^n$ and let $i-1 = (\rv{B}_{1}\rv{B}_{2}\cdots\rv{B}_n)_2$. Recall from~\eqref{eq_P(UVQR) from P(XY)} that $(\rv{U}_i,\rv{Q}_i) = f(\rv{X}_1^N, \rv{Y}_1^N)$ and $(\rv{V}_i,\rv{R}_i)=f(\rv{X}_{N+1}^{2N}, \rv{Y}_{N+1}^{2N})$, where the function $f$ depends on the index $i$ (see \Cref{fig_two blocks}). Using~\eqref{eq_block independence given state 0 N M} we obtain 
\begin{equation} \prrv{\rv{U}_i,\rv{V}_i | \rv{Q}_i,\rv{R}_i,\rv{S}_0,\rv{S}_N,\rv{S}_{2N}}{} = \prrv{\rv{U}_i| \rv{Q}_i,\rv{S}_0,\rv{S}_N}{}\prrv{\rv{V}_i|\rv{R}_i, \rv{S}_N,\rv{S}_{2N}}{}. \label{eq_Puvqrs0sNs2N as product} \end{equation}

Thus,
\begin{align*}
\hat{\rv{H}}_n
&\eqann{a} \frac{1}{2}\Big(\ENT{\rv{U}_i|\rv{Q}_i,\rv{S}_0, \rv{S}_N} + \ENT{\rv{V}_i|\rv{R}_i,\rv{S}_N,\rv{S}_{2N}}\Big) \\ 	
&\eqann{b} \frac{1}{2}\ENT{\rv{U}_i,\rv{V}_i|\rv{Q}_i,\rv{R}_i,\rv{S}_0,\rv{S}_N,\rv{S}_{2N}} \\
&\eqann{c} \frac{1}{2}\ENT{\rv{U}_i+\rv{V}_i,\rv{V}_i|\rv{Q}_i,\rv{R}_i,\rv{S}_0,\rv{S}_N,\rv{S}_{2N}} \\
&\eqann{d} \frac{1}{2}\Big(\ENT{\rv{U}_i+\rv{V}_i|\rv{Q}_i,\rv{R}_i,\rv{S}_0,\rv{S}_N,\rv{S}_{2N}} \\&\quad+ \ENT{\rv{V}_i|\rv{U}_i+\rv{V}_i,\rv{Q}_i,\rv{R}_i,\rv{S}_0,\rv{S}_N,\rv{S}_{2N}}\Big)\\
&\eqann[\leq]{e} \frac{1}{2}\Big(\ENT{\rv{U}_i+\rv{V}_i|\rv{Q}_i,\rv{R}_i,\rv{S}_0,\rv{S}_{2N}} \\&\quad+ \ENT{\rv{V}_i|\rv{U}_i+\rv{V}_i,\rv{Q}_i,\rv{R}_i,\rv{S}_0,\rv{S}_{2N}}\Big)\\
&=\frac{1}{2}\Big(\hat{\rv{H}}_n^{-} + \hat{\rv{H}}_n^{+}\Big), \\
\end{align*}
where \eqannref{a} is by stationarity, \eqannref{b} is by~\eqref{eq_Puvqrs0sNs2N as product}, \eqannref{c} is because the mapping $(\rv{U},\rv{V}) \mapsto (\rv{U}+\rv{V},\rv{V})$ is one-to-one and onto, \eqannref{d} is by the chain rule for entropies, and \eqannref{e} is by~\eqref{eq_ENT conditioning}.  

By~\eqref{eq_single step polarization for Khat} (applied to the BSI conditional entropy), $\hat{\rv{H}}_n$ is a submartingale sequence: 
\[\frac{1}{2}\Big(\hat{\rv{H}}_n^{-} + \hat{\rv{H}}_n^{+}\Big) =  \Exp{\hat{\rv{H}}_{n+1}\Big|\hat{\rv{H}}_n, \hat{\rv{H}}_{n-1},\ldots,\hat{\rv{H}}_1}  \geq \hat{\rv{H}}_n.\] %
It is also bounded, as $\hat{\rv{H}}_n \in [0,1]$ for any $n$. Thus, it converges almost surely to some random variable $\hat{\rv{H}}_{\infty} \in [0,1]$,~\cite[Theorem 35.4]{billingsley1995probability}. 

Denote $\Delta\rv{H}_n = \rv{H}_n -\hat{\rv{H}}_n$. The sequence $\Delta\rv{H}_n$ converges almost surely to the random variable $\Delta\rv{H}_{\infty} = \rv{H}_{\infty}-\hat{\rv{H}}_{\infty}$. This is because $\hat{\rv{H}}_n$ converges almost surely to $\hat{\rv{H}}_{\infty}$, and, by \Cref{cor_polarization for FAIM from ST2016}, $\rv{H}_n$ converges almost surely to $\rv{H}_{\infty}$.  
By~\eqref{eq_conditioning reduces}, $\Delta\rv{H}_n \geq 0$ for any $n$, which implies that $\Delta\rv{H}_{\infty} \geq 0 $ almost surely. We now show that $\Delta\rv{H}_{\infty}  = 0$ almost surely. To this end, we will need the following lemma, whose proof is postponed to the end of this theorem. 
\begin{lemma}\label{lem_DeltaH}
	The sequence $\Delta\rv{H}_n$ satisfies 
		\[ \lim_{n\to\infty} \Exp{\Delta\rv{H}_n} = 0.\] 
\end{lemma}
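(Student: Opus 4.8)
\textbf{Proof proposal for \Cref{lem_DeltaH}.}

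The plan is to show that $\Exp{\Delta\rv{H}_n} = \Exp{\rv{H}_n} - \Exp{\hat{\rv{H}}_n}$ vanishes by comparing both terms to the normalized block entropy. First I would recall that, by the definitions~\eqref{eq_defs of Kn Zn Hn} and~\eqref{eq_defs of Knhat Znhat Hnhat} together with the fact that $i$ is uniform on $\{1,\ldots,N\}$ and the chain rule applied to the invertible transform $G_N$, one has the ``conservation'' identities
\begin{equation*}
\Exp{\rv{H}_n} = \frac{1}{N}\sum_{i=1}^N \ENT{\rv{U}_i \mid \rv{U}_1^{i-1},\rv{Y}_1^N} = \frac{1}{N}\ENT{\rv{X}_1^N \mid \rv{Y}_1^N},
\end{equation*}
and, analogously, using that conditioning on the boundary states $\rv{S}_0,\rv{S}_N$ is common to every term in the telescoping sum,
\begin{equation*}
\Exp{\hat{\rv{H}}_n} = \frac{1}{N}\sum_{i=1}^N \ENT{\rv{U}_i \mid \rv{U}_1^{i-1},\rv{Y}_1^N,\rv{S}_0,\rv{S}_N} = \frac{1}{N}\ENT{\rv{X}_1^N \mid \rv{Y}_1^N,\rv{S}_0,\rv{S}_N}.
\end{equation*}
Hence $\Exp{\Delta\rv{H}_n} = \tfrac{1}{N}\bigl(\ENT{\rv{X}_1^N \mid \rv{Y}_1^N} - \ENT{\rv{X}_1^N \mid \rv{Y}_1^N,\rv{S}_0,\rv{S}_N}\bigr) = \tfrac{1}{N} I(\rv{X}_1^N; \rv{S}_0,\rv{S}_N \mid \rv{Y}_1^N)$, a nonnegative quantity.

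Next I would bound this conditional mutual information by something that does not grow with $N$. Since $(\rv{S}_0,\rv{S}_N)$ takes at most $|\mathcal{S}|^2$ values, we trivially have $I(\rv{X}_1^N;\rv{S}_0,\rv{S}_N\mid\rv{Y}_1^N) \le \ENT{\rv{S}_0,\rv{S}_N} \le 2\log_2|\mathcal{S}|$. Therefore
\begin{equation*}
0 \le \Exp{\Delta\rv{H}_n} \le \frac{2\log_2|\mathcal{S}|}{N} = \frac{2\log_2|\mathcal{S}|}{2^n} \xrightarrow[n\to\infty]{} 0,
\end{equation*}
which is exactly the claim. (If one prefers not to invoke the telescoping identity for the hatted process directly, the same conclusion follows from $\Exp{\hat{\rv{H}}_n} = \sum_{\istate,\mstate}\pi_{N,0}(\mstate,\istate)\,\tfrac1N\ENTSn{\istate}{\mstate}{\cdots}$ and a per-boundary-pair chain rule, but routing through $I(\rv{X}_1^N;\rv{S}_0,\rv{S}_N\mid\rv{Y}_1^N)$ is cleaner.)

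The one place that needs care — and which I expect to be the main obstacle — is justifying the second conservation identity, i.e.\ that summing $\ENT{\rv{U}_i\mid\rv{U}_1^{i-1},\rv{Y}_1^N,\rv{S}_0,\rv{S}_N}$ over $i$ telescopes to $\ENT{\rv{X}_1^N\mid\rv{Y}_1^N,\rv{S}_0,\rv{S}_N}$. This is just the chain rule for conditional entropy applied with the fixed conditioning pair $(\rv{S}_0,\rv{S}_N)$ carried through every step, together with invertibility of $G_N$ (so $\ENT{\rv{U}_1^N\mid\cdots} = \ENT{\rv{X}_1^N\mid\cdots}$); the subtlety is only bookkeeping, making sure the boundary states are the \emph{same} in each summand, which they are by construction in~\eqref{eq_defs of Knhat Znhat Hnhat}. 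Everything else is elementary.
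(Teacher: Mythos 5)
Your proposal is correct and follows essentially the same route as the paper: both compute $\Exp{\rv{H}_n}$ and $\Exp{\hat{\rv{H}}_n}$ via the chain rule and invertibility of $G_N$, identify the difference as $\tfrac{1}{N}I(\rv{X}_1^N;\rv{S}_0,\rv{S}_N\mid\rv{Y}_1^N)$ (the paper writes this out as a difference of two state entropies rather than naming it), and bound it by $2\log_2|\mathcal{S}|/2^n$. The telescoping identity for the BSI process that you flag as the delicate step is handled in the paper exactly as you describe, so there is no gap.
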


Since $\Delta\rv{H}_n$ converges to $\Delta \rv{H}_{\infty}$ almost surely, we specifically have
$\liminf_{n\to\infty} \Delta\rv{H}_n = \Delta\rv{H}_{\infty}$ almost surely. 
Using Fatou's lemma\footnote{Fatou's lemma~\cite[Theorem 16.3]{billingsley1995probability} states that if $\rv{A}_n$, $n=1,2,\ldots$ is a sequence of non-negative random variables then $\Exp{\liminf_{n\to\infty} \rv{A}_n} \leq \liminf_{n\to\infty}\Exp{\rv{A}_n}$.} for the non-negative sequence $\Delta\rv{H}_n$, $n=1,2,\ldots$ we obtain
\begin{align*} 0 \leq \Exp{\Delta\rv{H}_{\infty}} &=  \Exp{\liminf_{n\to\infty} \Delta\rv{H}_n}\\ &\leq 
\liminf_{n\to\infty}\Exp{\Delta\rv{H}_n} = \lim_{n\to\infty} \Exp{\Delta\rv{H}_n} = 0.\end{align*}  
Thus, $\Exp{\Delta \rv{H}_{\infty}} = 0$. 
By Markov's inequality, $\Probi{\Delta\rv{H}_{\infty} \geq \delta} \leq \Exp{\Delta\rv{H}_{\infty}}/\delta = 0$ for any $\delta >0$; consequently, $\Probi{\Delta\rv{H}_{\infty} = 0} = \Probi{\rv{H}_{\infty} = \hat{\rv{H}}_{\infty}} = 1$. Put another way,  $\hat{\rv{H}}_{\infty} = \rv{H}_{\infty}$ almost surely.

Recall that  $\rv{H}_{\infty}$ is a $\{0,1\}$ random variable with $\Probi{\rv{H}_{\infty} = 1} =  \ENTi{\rv{X}|\rv{Y}}$. 
Since $\hat{\rv{H}}_{\infty} = \rv{H}_{\infty}$ almost surely, and
\begin{align*}
	\Prob{\hat{\rv{H}}_n > 1-\epsilon} &= \frac{1}{N}\left|\left\{i: \ENT{\rv{U}_i| \rv{Q}_i, \rv{S}_0, \rv{S}_N} > 1-\epsilon \right\}\right|,\\
		\Prob{\hat{\rv{H}}_n < \epsilon} &= \frac{1}{N}\left|\left\{i: \ENT{\rv{U}_i| \rv{Q}_i, \rv{S}_0, \rv{S}_N} <\epsilon \right\}\right|,
\end{align*}
the proof is complete.
\end{IEEEproof}

\begin{IEEEproof}[Proof of \Cref{lem_DeltaH}]
By~\eqref{eq_conditioning reduces}, $\Delta\rv{H}_n \geq 0$, so $\Exp{\Delta \rv{H}_n} \geq 0$ as well.

Using the chain rule for conditional entropies and since the transformation $\rv{U}_1^N = \rv{X}_1^N G_N$ is one-to-one and onto,
	\[ 	
	\Exp{\rv{H}_n} = \frac{1}{N} \sum_{i=1}^N \ENT{\rv{U}_i|\rv{Q}_i} 
				   = \frac{\ENT{\rv{U}_1^N|\rv{Y}_1^N}}{N}  
				   = \frac{\ENT{\rv{X}_1^N|\rv{Y}_1^N}}{N}.
	\]
	Similarly, $\Exp{\hat{\rv{H}}_n} = \ENT{\rv{X}_1^N|\rv{Y}_1^N, \rv{S}_0,\rv{S}_N}/N$. 
	Thus, 
	\begin{align*}
	\Exp{\Delta \rv{H}_n} &= \frac{1}{N} \Big(\ENT{\rv{X}_1^N|\rv{Y}_1^N} - \ENT{\rv{X}_1^N|\rv{Y}_1^N, \rv{S}_0,\rv{S}_N}\Big) \\
						  &\eqann{a} \frac{1}{N} \Big(\ENT{\rv{S}_0,\rv{S}_N|\rv{Y}_1^N} - \ENT{\rv{S}_0,\rv{S}_N|\rv{X}_1^N,\rv{Y}_1^N}\Big) \\
						  &\eqann[\leq]{b} \frac{2 \log_2(|\mathcal{S}|)}{N}.
	\end{align*}
To see \eqannref{a}, note that for any $3$ random variables $\rv{A},\rv{B},\rv{C}$ we have $\ENT{\rv{A},\rv{B}|\rv{C}} = \ENT{\rv{A}|\rv{C}} + \ENT{\rv{B}|\rv{A},\rv{C}} = \ENT{\rv{B}|\rv{C}} + \ENT{\rv{A}|\rv{B},\rv{C}}$. Rearranging and setting $\rv{A} = \rv{X}_1^N$, $\rv{B} = (\rv{S}_0, \rv{S}_N)$ and $\rv{C} = \rv{Y}_1^N$ yields  \eqannref{a}. 
Inequality \eqannref{b} is since $\rv{S}_0,\rv{S}_N$ take values in the finite alphabet $\mathcal{S}$ and the conditional entropy is non-negative. 	

Combining these inequalities, and recalling that $N=2^n$, we obtain 
\[0 \leq \Exp{\Delta \rv{H}_n} \leq 2 \log_2(|\mathcal{S}|)/2^n.\]
This holds for any $n$. We take limits and use the sandwich rule to yield $\lim_{n\to\infty} \Exp{\Delta\rv{H}_n} = 0$,         as  desired.
\end{IEEEproof}

The following corollary is a direct consequence of the definition of almost-sure convergence, \Cref{lem_TV distance bounds}, \Cref{cor_polarization for FAIM from ST2016}, and \Cref{th_BSI entropy polarizes}. 
\begin{corollary}\label{cor_Zn Kn converge}\leavevmode
\begin{enumerate}
	\item The sequences $\rv{Z}_n$ and $\hat{\rv{Z}}_n$ polarize 
	to random variables $\rv{Z}_{\infty}$ and $\hat{\rv{Z}}_{\infty}$, respectively. Moreover, $\rv{Z}_{\infty}=\hat{\rv{Z}}_{\infty}=\rv{H}_{\infty}$ almost surely. 
	\item The sequences $\rv{K}_n$ and $\hat{\rv{K}}_n$ polarize  
	to random variables $\rv{K}_{\infty}$ and $\hat{\rv{K}}_{\infty}$, respectively. Moreover, $\rv{K}_{\infty}=\hat{\rv{K}}_{\infty}=1-\rv{H}_{\infty}$ almost surely. 
\end{enumerate}
\end{corollary}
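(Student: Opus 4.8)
The plan is to obtain the corollary as a short squeeze argument: sandwich $\rv{Z}_n,\hat{\rv{Z}}_n$ (respectively $\rv{K}_n,\hat{\rv{K}}_n$) between functions of the entropy processes using \Cref{lem_TV distance bounds}, and then push through the already-established almost-sure limits $\rv{H}_n \to \rv{H}_{\infty} \in \{0,1\}$ (\Cref{cor_polarization for FAIM from ST2016}) and $\hat{\rv{H}}_n \to \hat{\rv{H}}_{\infty} = \rv{H}_{\infty}$ (\Cref{th_BSI entropy polarizes}). The first thing I would record is that the BSI parameters $\hat{\rv{Z}}_n,\hat{\rv{K}}_n,\hat{\rv{H}}_n$ are exactly the parameters of the binary variable $\rv{U}_i$ with the observation enlarged from $\rv{Q}_i$ to $(\rv{Q}_i,\rv{S}_0,\rv{S}_N)$ --- this is precisely the identity $\prrv{\rv{U},\rv{Q},\rv{S}_N,\rv{S}_0}{u,q,\mstate,\istate} = P_{\istate}^{\mstate}(u,q)\,\pi_{N,0}(\mstate,\istate)$ used when the BSI parameters were introduced --- so \Cref{lem_TV distance bounds} applies to them verbatim. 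Hence, for every $n$,
\[ \rv{Z}_n^2 \le \rv{H}_n \le \rv{Z}_n, \qquad \hat{\rv{Z}}_n^2 \le \hat{\rv{H}}_n \le \hat{\rv{Z}}_n \]
by \eqref{eq_bounds on BP}, and
\[ 1-\rv{H}_n \le \rv{K}_n \le \sqrt{1-\rv{H}_n^2}, \qquad 1-\hat{\rv{H}}_n \le \hat{\rv{K}}_n \le \sqrt{1-\hat{\rv{H}}_n^2} \]
by \eqref{eq_TV relation 1} and \eqref{eq_bounds on TV upper}; recall also that all four quantities lie in $[0,1]$.

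Next I would restrict to the probability-one event on which $\rv{H}_n \to \rv{H}_{\infty}\in\{0,1\}$ and $\hat{\rv{H}}_n \to \hat{\rv{H}}_{\infty}=\rv{H}_{\infty}$ both hold, and split on the value of $\rv{H}_{\infty}$. On $\{\rv{H}_{\infty}=0\}$: the bounds $\rv{Z}_n \le \sqrt{\rv{H}_n}$ and $\hat{\rv{Z}}_n \le \sqrt{\hat{\rv{H}}_n}$ force $\rv{Z}_n,\hat{\rv{Z}}_n \to 0$, while $1-\rv{H}_n \le \rv{K}_n \le 1$ and $1-\hat{\rv{H}}_n \le \hat{\rv{K}}_n \le 1$ force $\rv{K}_n,\hat{\rv{K}}_n \to 1$. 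On $\{\rv{H}_{\infty}=1\}$: the bounds $\rv{H}_n \le \rv{Z}_n \le 1$ and $\hat{\rv{H}}_n \le \hat{\rv{Z}}_n \le 1$ force $\rv{Z}_n,\hat{\rv{Z}}_n \to 1$, while $\rv{K}_n \le \sqrt{1-\rv{H}_n^2}$ and $\hat{\rv{K}}_n \le \sqrt{1-\hat{\rv{H}}_n^2}$ force $\rv{K}_n,\hat{\rv{K}}_n \to 0$. Therefore $\rv{Z}_n$ and $\hat{\rv{Z}}_n$ converge almost surely to $\rv{H}_{\infty}$, and $\rv{K}_n,\hat{\rv{K}}_n$ converge almost surely to $1-\rv{H}_{\infty}$. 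Since $\rv{H}_{\infty}$, and hence $1-\rv{H}_{\infty}$, is $\{0,1\}$-valued, each of the four sequences converges almost surely to a $\{0,1\}$-valued random variable, which is exactly what ``polarizes'' means in \Cref{sec_polarization}; this yields $\rv{Z}_{\infty}=\hat{\rv{Z}}_{\infty}=\rv{H}_{\infty}$ and $\rv{K}_{\infty}=\hat{\rv{K}}_{\infty}=1-\rv{H}_{\infty}$ almost surely.

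I do not expect a real obstacle: once \Cref{lem_TV distance bounds}, \Cref{cor_polarization for FAIM from ST2016}, and \Cref{th_BSI entropy polarizes} are available, the argument is elementary. The only spot that warrants a line of justification is the one flagged above --- that \Cref{lem_TV distance bounds}, stated for a pair $(\rv{U},\rv{Q})$, legitimately applies to the BSI quantities --- and this is immediate since the BSI parameters built from $P_{\istate}^{\mstate}$ coincide with the (state-informed) parameters of $\rv{U}_i$ given $(\rv{Q}_i,\rv{S}_0,\rv{S}_N)$. I would also be careful to state explicitly that the limiting variables lie in $\{0,1\}$, because the definition of polarization demands convergence to a $\{0,1\}$-random variable, not merely convergence; this is supplied by \Cref{cor_polarization for FAIM from ST2016}.
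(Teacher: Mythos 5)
Your proposal is correct and follows essentially the same route as the paper: restrict to the probability-one event where $\rv{H}_n\to\rv{H}_{\infty}$ and $\hat{\rv{H}}_n\to\hat{\rv{H}}_{\infty}=\rv{H}_{\infty}$, split on $\rv{H}_{\infty}\in\{0,1\}$, and squeeze via the inequalities $\rv{H}_n\le\rv{Z}_n\le\sqrt{\rv{H}_n}$ (and their total-variation counterparts) from \Cref{lem_TV distance bounds}. The paper proves only the $\rv{Z}_n$ item and declares the $\rv{K}_n$ item ``essentially the same''; you spell out the $\rv{K}_n$ bounds explicitly and also justify why \Cref{lem_TV distance bounds} applies to the BSI quantities, both of which are fine and consistent with the paper's intent.
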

\begin{IEEEproof}
The proofs of both items are essentially the same, so we prove only the first item. 

Recall the definition of almost-sure convergence of a sequence of random variables. Let $(\Omega,\mathcal{F},\mathbb{P})$ be a probability space, and let $\rv{A},\rv{A}_1,\rv{A}_2,\ldots$ be a sequence of $\mathcal{F}$-measurable random variables defined on this space. A random variable is a deterministic function from $\Omega$ to $\mathbb{R}$. We say that $\rv{A}_n$ converges to $\rv{A}$ almost surely if the set
\[ A = \left\{ \omega \in \Omega: \; \lim_{n\to\infty}\rv{A}_n(\omega) = \rv{A}(\omega)\right\}\]  
satisfies $\Prob{A} = 1$. 

Now, let $(\Omega,\mathcal{F},\mathbb{P})$ be the probability space in which $\rv{H}_n, \hat{\rv{H}}_n, \rv{Z}_n, \hat{\rv{Z}}_n$, $n =1,2,\ldots$ as well as $\rv{H}_{\infty}$ and $\hat{\rv{H}}_{\infty}$  
are defined. 

By \Cref{cor_polarization for FAIM from ST2016} and \Cref{th_BSI entropy polarizes}, $\rv{H}_n$ and $\hat{\rv{H}}_n$ converge almost surely to $\rv{H}_{\infty}$ and $\hat{\rv{H}}_{\infty}$, respectively, and $\rv{H}_{\infty} = \hat{\rv{H}}_{\infty}$ almost surely. Thus, we denote 
\[ H = \left\{ \omega\in\Omega: \; \lim_{n\to\infty}\rv{H}_n(\omega) = \lim_{n\to\infty}\hat{\rv{H}}_n(\omega) = \rv{H}_{\infty}(\omega)\right\}.\]
By definition of almost sure convergence, $\Probi{H} = 1$.    

Since $\rv{H}_{\infty}(\omega) \in \{0,1\}$ almost surely, we split $H = H_0 \cupdot H_1 \cupdot
H_{\emptyset}$, such that $\rv{H}_{\infty}(\omega) = 0$ for any $\omega \in H_0$;
$\rv{H}_{\infty}(\omega) = 1$ for any $\omega \in H_1$; and $H_{\emptyset}$ is a set of measure zero.  
By \Cref{lem_TV distance bounds}, we have $\rv{H}_n(\omega) \leq \rv{Z}_n(\omega) \leq
\sqrt{\rv{H}_n(\omega)}$ for any $\omega$. Thus, $\lim_{n\to\infty} \rv{Z}_n(\omega) = 0$ for all
$\omega \in H_0$ and $\lim_{n\to\infty} \rv{Z}_n(\omega) = 1$ for all $\omega \in H_1$. We conclude that $\rv{Z}_n$ converges almost surely to a $\{0,1\}$-random variable $\rv{Z}_{\infty}$ and $\rv{Z}_{\infty} = \rv{H}_{\infty}$ almost surely.  
Using similar arguments, $\hat{\rv{Z}}_n$ converges almost surely to a random variable $\hat{\rv{Z}}_{\infty}$ and $\hat{\rv{Z}}_{\infty} = \hat{\rv{H}}_{\infty}$ almost surely. By \Cref{th_BSI entropy polarizes}, $\hat{\rv{H}}_{\infty} = \rv{H}_{\infty}$ almost surely. 
\end{IEEEproof}

\subsection{Fast Polarization of the Bhattacharyya Process to $1$} 
In this section, we prove that the Bhattacharyya process $\rv{Z}_n$ of a FAIM process polarizes fast to $1$. 

\Cref{thm_fast polarization of Z to 1}, the main theorem of this section, relies on an inequality akin to~\eqref{eq_polarization bounds for TV} for the BSI total variation distance. We state the inequality in \Cref{prop_K inequalities for FAIM processes}, and postpone its proof to the end of the section. 
\begin{proposition}\label{prop_K inequalities for FAIM processes}
	Let $(\rv{X}_j,\rv{Y}_j,\rv{S}_j)$, $j \in \mathbb{Z}$ be a FAIM process. Then, 
	\begin{equation}
		\hat{\rv{K}}_{n+1} \leq \begin{cases} \psi(0) \hat{\rv{K}}_n^2 & \text{if } \rv{B}_{n+1} = 0 \\ 
 									  2 \hat{\rv{K}}_n  & \text{if } \rv{B}_{n+1} = 1. 
 									 \end{cases}	\label{eq_polarization bounds for TVhat cond}
	\end{equation} 
\end{proposition}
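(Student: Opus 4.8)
The plan is to mirror the proof of \Cref{prop_K is a supermartingale}, carrying the boundary states through the computation and then averaging over the hidden middle state $\rv{S}_N$. Fix $\rv{B}_1,\ldots,\rv{B}_n$, set $i-1 = (\rv{B}_1\rv{B}_2\cdots\rv{B}_n)_2$, and consider two adjacent length-$N$ blocks as in \Cref{fig_two blocks}, with boundary-state values $\istate = \rv{S}_0$, $\mstate = \rv{S}_N$, $\fstate = \rv{S}_{2N}$. By~\eqref{eq_Puvqrs0sNs2N as product} and the bijection $(\rv{U}_i,\rv{V}_i)\mapsto(\rv{T}_i,\rv{V}_i)$ with $\rv{T}_i = \rv{U}_i+\rv{V}_i$, the law of $(\rv{T}_i,\rv{V}_i,\rv{Q}_i,\rv{R}_i)$ conditioned on $(\rv{S}_0,\rv{S}_N,\rv{S}_{2N}) = (\istate,\mstate,\fstate)$ equals $P_{\istate}^{\mstate}(t+v,q)\,P_{\mstate}^{\fstate}(v,r)$, where $P_{\mstate}^{\fstate}$ is the second-block analogue of $P_{\istate}^{\mstate}$; by stationarity the two families coincide. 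To drop the conditioning on $\rv{S}_N$, I would write the law of $(\rv{T}_i,\rv{V}_i,\rv{Q}_i,\rv{R}_i)$ given only $(\rv{S}_0,\rv{S}_{2N}) = (\istate,\fstate)$ as $\sum_{\mstate}\prrv{\rv{S}_N|\rv{S}_0,\rv{S}_{2N}}{\mstate|\istate,\fstate}\,P_{\istate}^{\mstate}(t+v,q)\,P_{\mstate}^{\fstate}(v,r)$, and then repeat the memoryless manipulations inside the sum over $\mstate$.

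For the branch $\rv{B}_{n+1} = 0$, after marginalizing $v$ the inner sum telescopes as in \Cref{prop_K is a supermartingale}, so the signed difference of $\prrv{\rv{T}_i,\rv{Q}_i,\rv{R}_i|\rv{S}_0,\rv{S}_{2N}}{t,q,r|\istate,\fstate}$ at $t=0,1$ is $\sum_{\mstate}\prrv{\rv{S}_N|\rv{S}_0,\rv{S}_{2N}}{\mstate|\istate,\fstate}\big(P_{\istate}^{\mstate}(0,q)-P_{\istate}^{\mstate}(1,q)\big)\big(P_{\mstate}^{\fstate}(0,r)-P_{\mstate}^{\fstate}(1,r)\big)$. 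The triangle inequality over $\mstate$ together with $|ab|=|a||b|$ gives $\TVSn{\istate}{\fstate}{\rv{T}_i|\rv{Q}_i,\rv{R}_i} \le \sum_{\mstate}\prrv{\rv{S}_N|\rv{S}_0,\rv{S}_{2N}}{\mstate|\istate,\fstate}\TVSn{\istate}{\mstate}{\rv{U}_i|\rv{Q}_i}\TVSn{\mstate}{\fstate}{\rv{V}_i|\rv{R}_i}$. Weighting by $\pi_{2N,0}(\fstate,\istate)$ and summing over $\istate,\fstate$, the composite weight collapses by the Markov property to $\prrv{\rv{S}_0,\rv{S}_N,\rv{S}_{2N}}{\istate,\mstate,\fstate} = \pi_{N,0}(\mstate,\istate)\,\pi_{N|0}(\fstate|\mstate)$. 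I would then bound $\pi_{N|0}(\fstate|\mstate) = \pi_{N,0}(\fstate,\mstate)/\pi_0(\mstate) \le \psi(0)\,\pi_{N,0}(\fstate,\mstate)$, invoking~\eqref{eq_def of psi(n)}, and finish with the elementary inequality $\sum_{\mstate}c_{\mstate}d_{\mstate}\le\big(\sum_{\mstate}c_{\mstate}\big)\big(\sum_{\mstate}d_{\mstate}\big)$ applied to the nonnegative quantities $c_{\mstate} = \sum_{\istate}\pi_{N,0}(\mstate,\istate)\TVSn{\istate}{\mstate}{\rv{U}_i|\rv{Q}_i}$ and $d_{\mstate} = \sum_{\fstate}\pi_{N,0}(\fstate,\mstate)\TVSn{\mstate}{\fstate}{\rv{V}_i|\rv{R}_i}$; since $\sum_{\mstate}c_{\mstate} = \hat{\rv{K}}_n$ and (using stationarity of $(\rv{V}_i,\rv{R}_i,\rv{S}_N,\rv{S}_{2N})$) $\sum_{\mstate}d_{\mstate} = \hat{\rv{K}}_n$ as well, this yields $\hat{\rv{K}}_{n+1}^- \le \psi(0)\hat{\rv{K}}_n^2$.

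For the branch $\rv{B}_{n+1} = 1$, I would marginalize $\rv{T}_i$ out of $P_{\istate}^{\mstate}(t+v,q)P_{\mstate}^{\fstate}(v,r)$ and, for each fixed $\mstate$, apply identity~\eqref{eq_abcd equality} with $a = P_{\istate}^{\mstate}(t,q)$, $b = P_{\mstate}^{\fstate}(0,r)$, $c = P_{\istate}^{\mstate}(t+1,q)$, $d = P_{\mstate}^{\fstate}(1,r)$, using $P_{\istate}^{\mstate}(t,q)+P_{\istate}^{\mstate}(t+1,q) = P_{\istate}^{\mstate}(q)$ and $P_{\mstate}^{\fstate}(0,r)+P_{\mstate}^{\fstate}(1,r) = P_{\mstate}^{\fstate}(r)$. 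After the triangle inequality (over $\mstate$ and over the two terms of the identity) and summing over $t,q,r$, the normalizations $\sum_q P_{\istate}^{\mstate}(q) = 1$ and $\sum_r P_{\mstate}^{\fstate}(r) = 1$ drop out, leaving $\TVSn{\istate}{\fstate}{\rv{V}_i|\rv{T}_i,\rv{Q}_i,\rv{R}_i}\le\sum_{\mstate}\prrv{\rv{S}_N|\rv{S}_0,\rv{S}_{2N}}{\mstate|\istate,\fstate}\big(\TVSn{\istate}{\mstate}{\rv{U}_i|\rv{Q}_i}+\TVSn{\mstate}{\fstate}{\rv{V}_i|\rv{R}_i}\big)$. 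Weighting by $\pi_{2N,0}(\fstate,\istate)$ and summing over $\istate,\fstate$ as before, marginalizing $\fstate$ in the first summand gives $\sum_{\istate,\mstate}\pi_{N,0}(\mstate,\istate)\TVSn{\istate}{\mstate}{\rv{U}_i|\rv{Q}_i} = \hat{\rv{K}}_n$, and marginalizing $\istate$ in the second gives $\sum_{\mstate,\fstate}\pi_{N,0}(\fstate,\mstate)\TVSn{\mstate}{\fstate}{\rv{V}_i|\rv{R}_i} = \hat{\rv{K}}_n$; hence $\hat{\rv{K}}_{n+1}^+ \le 2\hat{\rv{K}}_n$.

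The main obstacle is the bookkeeping around the hidden middle state: unlike in the memoryless case, the $2N$-block's BSI parameter conditions only on $\rv{S}_0$ and $\rv{S}_{2N}$, so $\rv{S}_N$ must be marginalized, and the resulting summand over $\mstate$ is not a product. The two devices that close the $\rv{B}_{n+1}=0$ branch --- the bound $\pi_{N|0}(\fstate|\mstate)\le\psi(0)\pi_{N,0}(\fstate,\mstate)$, which decouples the two factors at the cost of $\psi(0)$, and the elementary $\sum_{\mstate}c_{\mstate}d_{\mstate}\le(\sum_{\mstate}c_{\mstate})(\sum_{\mstate}d_{\mstate})$ --- are precisely where the constant $\psi(0)$ in~\eqref{eq_polarization bounds for TVhat cond} comes from; the $\rv{B}_{n+1}=1$ branch is insensitive to this and keeps the memoryless constant $2$.
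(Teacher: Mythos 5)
Your proof is correct and follows essentially the same route as the paper: your weight $\pi_{2N,0}(\fstate,\istate)\,\prrv{\rv{S}_N|\rv{S}_0,\rv{S}_{2N}}{\mstate|\istate,\fstate}$ is exactly the paper's $\mu(\mstate)=\prrv{\rv{S}_0,\rv{S}_N,\rv{S}_{2N}}{\istate,\mstate,\fstate}$, your bound $\pi_{N|0}(\fstate|\mstate)\le\psi(0)\pi_{N,0}(\fstate,\mstate)$ is its inequality~\eqref{eq_upper bound on mub}, and your marginalizations over $\istate$ and $\fstate$ are its~\eqref{eq_sums of mub}. The only (immaterial) difference is that you apply the decoupling inequality $\sum_{\mstate}c_{\mstate}d_{\mstate}\le(\sum_{\mstate}c_{\mstate})(\sum_{\mstate}d_{\mstate})$ after summing over $\istate,\fstate$, whereas the paper applies it per boundary-state pair.
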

Here, $\psi(0)$ is as defined in~\eqref{eq_def of psi(n)}, i.e., 
\begin{equation} \psi(0) = \max_{\istate} \frac{1}{\pi_0(\istate)} = \max_{\mstate} \frac{1}{\pi_N(\mstate)} \geq 1. \label{eq_psi0} \end{equation}
Since the state sequence is stationary, finite-state, aperiodic, and irreducible, $\psi(0) < \infty$.

\begin{theorem} \label{thm_fast polarization of Z to 1}
	Let $(\rv{X}_j,\rv{Y}_j,\rv{S}_j)$, $j \in \mathbb{Z}$ be a FAIM process. Then $\rv{Z}_n$ polarizes fast to $1$ and for any $\beta < 1/2$, 
	\begin{equation} \lim_{N\to \infty} \frac{1}{N} \left|\left\{i: \BP{\rv{U}_i| \rv{Q}_i	} > 1-2^{-N^{\beta}} \right\} \right| = \ENTi{\rv{X}|\rv{Y}}. \label{eq_fast polarization of Zn to 1} \end{equation}
\end{theorem}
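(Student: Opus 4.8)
The plan is to transfer fast polarization from the boundary-state-informed (BSI) total variation process $\hat{\rv{K}}_n$ down to the ordinary Bhattacharyya process $\rv{Z}_n$, using \Cref{prop_K inequalities for FAIM processes} as the engine. First I would feed the single-step estimate \eqref{eq_polarization bounds for TVhat cond} into \Cref{lem_simple proof} with $\rv{A}_n = \hat{\rv{K}}_n$: it has the required form with $d_0 = 2$, $d_1 = 1$, and $k = \max\{2,\psi(0)\}$, which is finite since $\psi(0) < \infty$ for a FAIM process; moreover $\hat{\rv{K}}_n$ polarizes by \Cref{cor_Zn Kn converge}, with $\hat{\rv{K}}_\infty = 1-\rv{H}_\infty$ almost surely. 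Since $E = (\log_2 d_0 + \log_2 d_1)/2 = 1/2$, \Cref{lem_simple proof} yields, for every $\beta < 1/2$,
\[ \lim_{n\to\infty} \Prob{\hat{\rv{K}}_n < 2^{-2^{n\beta}}} \;=\; \Prob{\hat{\rv{K}}_\infty = 0} \;=\; \Prob{\rv{H}_\infty = 1} \;=\; \ENTi{\rv{X}|\rv{Y}}. \]

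Next I would convert this lower bound on the mass of small $\hat{\rv{K}}_n$ into a lower bound on the mass of $\rv{Z}_n$ near $1$. Combining \eqref{eq_upper bound on KUQ}, which gives $\rv{K}_n \le \hat{\rv{K}}_n$, with \Cref{lem_TV distance bounds} — namely $\rv{Z}_n \ge \rv{H}_n$ from \eqref{eq_bounds on BP} and $\rv{H}_n \ge 1-\rv{K}_n$ from \eqref{eq_TV relation 1} — one obtains, pointwise, $\rv{Z}_n \ge 1-\rv{K}_n \ge 1-\hat{\rv{K}}_n$. Hence the event $\{\hat{\rv{K}}_n < 2^{-2^{n\beta}}\}$ is contained in $\{\rv{Z}_n > 1-2^{-2^{n\beta}}\}$, so
\[ \liminf_{n\to\infty} \Prob{\rv{Z}_n > 1-2^{-2^{n\beta}}} \;\ge\; \lim_{n\to\infty}\Prob{\hat{\rv{K}}_n < 2^{-2^{n\beta}}} \;=\; \ENTi{\rv{X}|\rv{Y}}. \]
For the matching upper bound I would use that, by \Cref{cor_Zn Kn converge}, $\rv{Z}_n$ polarizes to a $\{0,1\}$-valued $\rv{Z}_\infty$ with $\Prob{\rv{Z}_\infty = 1} = \ENTi{\rv{X}|\rv{Y}}$ (since $\rv{Z}_\infty = \rv{H}_\infty$ a.s.). Because $2^{-2^{n\beta}} \to 0$, for all large $n$ the event $\{\rv{Z}_n > 1-2^{-2^{n\beta}}\}$ lies inside $\{\rv{Z}_n > 1/2\}$; and the indicator of $\{\rv{Z}_n > 1/2\}$ converges almost surely to the indicator of $\{\rv{Z}_\infty = 1\}$, so bounded convergence gives $\lim_n \Prob{\rv{Z}_n > 1/2} = \Prob{\rv{Z}_\infty = 1} = \ENTi{\rv{X}|\rv{Y}}$, whence $\limsup_n \Prob{\rv{Z}_n > 1-2^{-2^{n\beta}}} \le \ENTi{\rv{X}|\rv{Y}}$.

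Putting the two bounds together gives $\lim_n \Prob{\rv{Z}_n > 1-2^{-2^{n\beta}}} = \Prob{\rv{Z}_\infty = 1}$ for every $\beta < 1/2$, which is precisely the statement that $\rv{Z}_n$ polarizes fast to $1$; and since $N = 2^n$ and, by definition of $\rv{Z}_n$, $\Prob{\rv{Z}_n > 1-2^{-N^\beta}} = \frac{1}{N}\bigl|\{i: \BP{\rv{U}_i|\rv{Q}_i} > 1-2^{-N^\beta}\}\bigr|$, this is exactly \eqref{eq_fast polarization of Zn to 1}.

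The only point within this argument that needs attention is the bookkeeping of inequality directions: conditioning on the two boundary states \emph{increases} the total variation distance ($\rv{K}_n \le \hat{\rv{K}}_n$) but \emph{decreases} the Bhattacharyya parameter, so the useful chain is $\rv{Z}_n \ge 1-\rv{K}_n \ge 1-\hat{\rv{K}}_n$ and one must not attempt to route the estimate through $\hat{\rv{Z}}_n$. Otherwise the theorem is essentially plumbing; the genuine technical work of the section is not here but in \Cref{prop_K inequalities for FAIM processes}, where the $\psi(0)$-distorted ``square-or-double'' recursion for $\hat{\rv{K}}_n$ must be established by adapting the memoryless computation in the proof of \Cref{prop_K is a supermartingale} — now invoking \Cref{lem_two adjacent blocks independent given shared state} for the conditional factorization across the shared state and absorbing the gap between the length-$2N$ boundary-state law and the product of the two length-$N$ laws into the constant $\psi(0)$. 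That proposition, however, is already available to us here.
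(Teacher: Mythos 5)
Your proposal is correct and follows essentially the same route as the paper: invoke \Cref{lem_simple proof} on $\hat{\rv{K}}_n$ via \Cref{prop_K inequalities for FAIM processes} with $E=1/2$, then transfer through the chain $1-\rv{Z}_n \leq 1-\rv{H}_n \leq \rv{K}_n \leq \hat{\rv{K}}_n$ to get the lower bound on $\Prob{\rv{Z}_n > 1-2^{-N^\beta}}$. The only (harmless) deviation is in the matching upper bound: the paper uses the already-established fast polarization of $\rv{Z}_n$ to $0$ together with $\Prob{\rv{Z}_n < 2^{-N^\beta}} + \Prob{\rv{Z}_n > 1-2^{-N^\beta}} \leq 1$, whereas you use almost-sure convergence of $\rv{Z}_n$ to the $\{0,1\}$-valued $\rv{Z}_\infty$ and bounded convergence — both are valid and of comparable simplicity.
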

\begin{IEEEproof}
	Fix $\beta < 1/2$. 	By \Cref{cor_Zn Kn converge} and~\eqref{eq_polarization bounds for TVhat cond}, we can invoke \Cref{lem_simple proof} for $\hat{\rv{K}}_n$ with $E=1/2$. Consequently, $\hat{\rv{K}}_n$ polarizes fast to $0$, i.e., 
	\begin{align*} \lim_{n\to\infty} \Prob{\hat{\rv{K}}_n < 2^{-N^{\beta}}} &= \Prob{	\hat{\rv{K}}_{\infty}= 0} \\ &= \Prob{\rv{H}_{\infty} = 1} = \ENTi{\rv{X}|\rv{Y}}.\end{align*}
    For any $n$, by~\eqref{eq_TV relation 1},~\eqref{eq_bounds on BP}, 
    and~\eqref{eq_upper bound on KUQ},  
    \[ 1 - \rv{Z}_n \leq 1- \rv{H}_n \leq  \rv{K}_n \leq \hat{\rv{K}}_n.\] 
	Thus, 
	\[ \Prob{ \rv{Z}_n > 1-2^{-N^{\beta}}} \geq \Prob{\hat{\rv{K}}_n < 2^{-N^{\beta}}}.\] Taking limits, we obtain that 
	\[\liminf_{n\to\infty}\Probi{ \rv{Z}_n > 1-2^{-N^{\beta}}} \geq \ENTi{\rv{X}|\rv{Y}}.\] 
	
	On the other hand, by \Cref{cor_polarization for FAIM from ST2016}, 
	\[\lim_{n\to\infty}\Probi{ \rv{Z}_n <2^{-N^{\beta}}} = 1-\ENTi{\rv{X}|\rv{Y}}.\] Recalling that $\Probi{
    \rv{Z}_n <2^{-N^{\beta}}} + \Probi{ \rv{Z}_n > 1-2^{-N^{\beta}}} \leq 1$ for any $n$, we take
    limits to obtain $\limsup_{n\to\infty}\Probi{\rv{Z}_n > 1-2^{-N^{\beta}}} \leq
    \ENTi{\rv{X}|\rv{Y}}$. Therefore, we conclude that
	\[ \lim_{n\to\infty}\Prob{ \rv{Z}_n > 1-2^{-N^{\beta}}} = \ENTi{\rv{X}|\rv{Y}}.\]
	
	To obtain~\eqref{eq_fast polarization of Zn to 1}, note that by definition of the Bhattacharyya process, 
	\[ \Prob{\rv{Z}_n > 1-2^{-N^{\beta}}} = \frac{1}{N} \left|\left\{i: \BP{\rv{U}_i| \rv{Q}_i	} > 1-2^{-N^{\beta}} \right\} \right|.\] 
		Taking limits completes the proof.
\end{IEEEproof}

\begin{IEEEproof}[Proof of \Cref{prop_K inequalities for FAIM processes}] The proof follows along the lines of the proof of \Cref{prop_K is a supermartingale}.

Consider two adjacent blocks of length $N=2^n$ and let $i-1 = (\rv{B}_{1}\rv{B}_{2}\cdots\rv{B}_n)_2$. This is illustrated in \Cref{fig_two blocks}. 
	Recall from~\eqref{eq_P(UVQR) from P(XY)} that there is a function $f$ that depends on $i$ such that  
$(\rv{U}_i,\rv{Q}_i) = f(\rv{X}_1^N, \rv{Y}_1^N)$ and $(\rv{V}_i,\rv{R}_i) = f(\rv{X}_{N+1}^{2N}, \rv{Y}_{N+1}^{2N})$.
	By stationarity, 
	\begin{equation}
	\hat{\rv{K}}_n = \sol{\sum_{\istate,\mstate\in \mathcal{S}}} \pi_{N,0}(\mstate,\istate)\TVSn{\istate}{\mstate}{\rv{U}_i|\rv{Q}_i}	 = \sol{\sum_{\mstate,\fstate\in \mathcal{S}}} \pi_{2N,N}(\fstate,\mstate)\TVSn{\mstate}{\fstate}{\rv{V}_i|\rv{R}_i}.	\label{eq_Knhat with stationarity}
	\end{equation}

As in~\eqref{eq_def of Pabuq}, we denote
\[ P_{\istate}^{\fstate}(u,q) = \prrv{\rv{U}_i,\rv{Q}_i|\rv{S}_N,\rv{S}_0}{u,q|\fstate,\istate} = \prrv{\rv{V}_i,\rv{R}_i|\rv{S}_{2N},\rv{S}_N}{u,q|\fstate,\istate}.\]
The right-most equality is due to stationarity. 
We further denote $P_{\istate}^{\fstate}(s) = P_{\istate}^{\fstate}(0,s)+P_{\istate}^{\fstate}(1,s)$; in particular, $\sum_s P_{\istate}^{\fstate}(s) = 1$. 

Denote
\begin{align*}
\mu(\mstate) &=\pi_{2N|N}(\fstate|\mstate)\pi_{N|0}(\mstate|\istate)  \pi_0(\istate) \\
&= \frac{\pi_{2N,N}(\fstate,\mstate)  \cdot \pi_{N,0}(\mstate,\istate)}{\pi_{N}(\mstate)}.
\end{align*}
We deliberately omitted the dependence on $\istate,\fstate$ from this notation to simplify the expressions that follow. Observe that by~\eqref{eq_psi0}, 
\begin{equation} \mu(\mstate) \leq \psi(0)\cdot \pi_{2N,N}(\fstate,\mstate)  \cdot \pi_{N,0}(\mstate,\istate). \label{eq_upper bound on mub}\end{equation}
Also, since $\pi_N(\mstate)=\sum_{\istate\in\mathcal{S}} \pi_{N,0}(\mstate,\istate) = \sum_{\fstate\in\mathcal{S}} \pi_{2N,N}(\fstate,\mstate)$, we have
\begin{equation}
\sol{\sum_{\istate\in\mathcal{S}}} \mu(\mstate) = \pi_{2N,N}(\fstate,\mstate), 	 \quad
\sol{\sum_{\fstate\in\mathcal{S}}} \mu(\mstate) = \pi_{N,0}(\mstate,\istate). \label{eq_sums of mub}
\end{equation}

By~\eqref{eq_recursive computation of joint prob} and~\eqref{eq_def of Pabuq}, 
\begin{align}
&\pi_{2N,0}(\fstate,\istate) \prrv{\rv{U}_i,\rv{V}_i,\rv{Q}_i,\rv{R}_i|\rv{S}_{2N},\rv{S}_0}{u,v,q,r|\fstate,\istate}  \nonumber \\
&\quad= \pi_0(\istate) \pi_{2N|0}(\fstate|\istate) \prrv{\rv{U}_i,\rv{V}_i,\rv{Q}_i,\rv{R}_i|\rv{S}_{2N},\rv{S}_0}{u,v,q,r|\fstate,\istate}  \nonumber \\
 &\quad= \pi_0(\istate)\prrv{\rv{U}_i,\rv{V}_i,\rv{Q}_i,\rv{R}_i,\rv{S}_{2N}|\rv{S}_0}{u,v,q,r,\fstate|\istate}  \nonumber \\
 &\quad=  \pi_0(\istate)\sum_{\mstate \in \mathcal{S}} \prrv{\rv{U}_i,\rv{Q}_i,\rv{S}_{N}|\rv{S}_0}{u,q,\mstate |\istate} \prrv{\rv{V}_i,\rv{R}_i,\rv{S}_{2N}|\rv{S}_N}{v,r,\fstate|\mstate }  \nonumber \\
 &\quad=  \pi_0(\istate)\sum_{\mstate \in \mathcal{S}} \pi_{N|0}(\mstate|\istate)P_{\istate}^{\mstate }(u,q) \pi_{2N|N}(\fstate|\mstate ) P_{\mstate }^{\fstate}(v,r)  \nonumber \\
 &\quad= \sum_{\mstate \in \mathcal{S}} \mu(\mstate) P_{\istate}^{\mstate}(u,q)P_{\mstate}^{\fstate}(v,r). 
 \label{eq_PUVQR and mu}
\end{align}

 Set $\rv{T}_i = \rv{U}_i+\rv{V}_i$.  
 Using \eqref{eq_single step polarization for K}, a single-step polarization from $\hat{\rv{K}}_n$ to $\hat{\rv{K}}_{n+1}$ becomes
\[ \hat{\rv{K}}_{n+1} = \begin{dcases} \sol[r]{\sum_{\istate, \fstate \in \mathcal{S}}} \pi_{2N,0}(\fstate,\istate) \TVSn{\istate}{\fstate}{\rv{T}_i|\rv{Q}_i,\rv{R}_i} & \text{if } \rv{B}_{n+1} = 0 \\
 	\sol[r]{\sum_{\istate, \fstate \in \mathcal{S}}} \pi_{2N,0}(\fstate,\istate) \TVSn{\istate}{\fstate}{\rv{V}_i|\rv{T}_i,\rv{Q}_i,\rv{R}_i} & \text{if } \rv{B}_{n+1} = 1. 
 \end{dcases}\]
Here, $\TVSn{\istate}{\fstate}{\rv{T}_i|\rv{Q}_i,\rv{R}_i}$ and
$\TVSn{\istate}{\fstate}{\rv{V}_i|\rv{T}_i,\rv{Q}_i,\rv{R}_i}$ are computed as in~\eqref{eq_def of
Kab}, only  for a block of length $2N$ with initial state $\rv{S}_0 = \istate$ and final state
$\rv{S}_{2N} = \fstate$. At the middle of the block we have state $\rv{S}_N = \mstate$.
Using~\eqref{eq_P(STQR) from P(UVQR)}, we denote
\begin{align} \bar{P}_{\istate}^{\fstate}(t,v,q,r) &= \prrv{\rv{T}_i,\rv{V}_i,\rv{Q}_i,\rv{R}_i|\rv{S}_{2N},\rv{S}_0}{t,v,q,r|\fstate,\istate} \label{eq_Pac1} \\ &= \prrv{\rv{U}_i,\rv{V}_i,\rv{Q}_i,\rv{R}_i|\rv{S}_{2N},\rv{S}_0}{t+v,v,q,r|\fstate,\istate} \nonumber \\
\intertext{and}
\bar{P}_{\istate}^{\fstate}(t,q,r) &= \prrv{\rv{T}_i,\rv{Q}_i,\rv{R}_i|\rv{S}_{2N},\rv{S}_0}{t,q,r|\fstate,\istate} \label{eq_Pac2} \\
	 &= \sum_{v=0}^1  \bar{P}_{\istate}^{\fstate}(t,v,q,r). \nonumber
\end{align}                                                                             

Consider first the case $\rv{B}_{n+1} = 0$: 
\begin{align*}
&\pi_{2N,0}(\fstate,\istate)\TVSn{\istate}{\fstate}{\rv{T}_i|\rv{Q}_i,\rv{R}_i} \\ 
 &= \pi_{2N,0}(\fstate,\istate)\sol{\sum_{q,r}} \left| \bar{P}_{\istate}^{\fstate}(0,q,r) - \bar{P}_{\istate}^{\fstate}(1,q,r) \right|\\
 &= \sol{\sum_{q,r}} \left| \pi_{2N,0}(\fstate,\istate)\bar{P}_{\istate}^{\fstate}(0,q,r) - \pi_{2N,0}(\fstate,\istate)\bar{P}_{\istate}^{\fstate}(1,q,r) \right|\\
 &\eqann{a} \sol{\sum_{q,r}} \left| \sum_{\mstate \in \mathcal{S}}\mu(\mstate) \sum_{v=0}^1 P_{\mstate}^{\fstate}(v,r)(P_{\istate}^{\mstate}(v,q)-P_{\istate}^{\mstate}(v+1,q)) \right|\\
  &\eqann[\leq]{b} \sol{\sum_{\substack{q,r,\\ \mstate \in \mathcal{S}}}}\mu(\mstate) \left|  \sum_{v=0}^1 P_{\mstate}^{\fstate}(v,r)(P_{\istate}^{\mstate}(v,q)-P_{\istate}^{\mstate}(v+1,q)) \right|\\
  &= \sol{\sum_{\substack{q,r,\\ \mstate \in \mathcal{S}}}}\mu(\mstate) \Big|P_{\istate}^{\mstate}(0,q)-P_{\istate}^{\mstate}(1,q) \Big| \cdot \Big|P_{\mstate}^{\fstate}(0,r) - P_{\mstate}^{\fstate}(1,r)\Big|\\
  &= \sum_{\mstate \in \mathcal{S}}\mu(\mstate)  \TVSn{\istate}{\mstate}{\rv{U}_i|\rv{Q}_i} \TVSn{\mstate}{\fstate}{\rv{V}_i|\rv{R}_i}\\
  &\eqann[\leq]{c}  \psi(0) \sum_{\mstate \in \mathcal{S}} \Big(\pi_{2N,N}(\fstate,\mstate) \TVSn{\mstate}{\fstate}{\rv{V}_i|\rv{R}_i}\Big)  \cdot \Big(\pi_{N,0}(\mstate,\istate) \TVSn{\istate}{\mstate}{\rv{U}_i|\rv{Q}_i}\Big) \\
    &\eqann[\leq]{d}  \psi(0) \sum_{\mstate \in \mathcal{S}} \pi_{2N,N}(\fstate,\mstate) \TVSn{\mstate}{\fstate}{\rv{V}_i|\rv{R}_i}  \sol{\sum_{\mstate' \in \mathcal{S}}} \pi_{N,0}(\mstate',\istate) \TVSn{\istate}{\mstate'}{\rv{U}_i|\rv{Q}_i}, 
\end{align*}
where \eqannref{a} first expands $\bar{P}_{\istate}^{\fstate}(0,q,r)$ and $\bar{P}_{\istate}^{\fstate}(1,q,r)$  according to \eqref{eq_Pac2} and then~\eqref{eq_Pac1}, and finally applies~\eqref{eq_PUVQR and mu};  \eqannref{b} is by the triangle inequality; \eqannref{c} is by~\eqref{eq_upper bound on mub}; and \eqannref{d} is by the inequality $\sum_j a_j b_j \leq \sum_j a_j \sum_{j'}b_{j'}$, which holds for $a_j,b_j \geq 0$. 
By~\eqref{eq_Knhat with stationarity}, the sum over $\istate, \fstate \in \mathcal{S}$ yields
\[
\sol{\sum_{\istate, \fstate \in \mathcal{S}}} \pi_{2N,0}(\fstate,\istate) \TVSn{\istate}{\fstate}{\rv{T}_i|\rv{Q}_i,\rv{R}_i} \leq \psi(0) \hat{\rv{K}}_n^2. 
\]

Next, let $\rv{B}_{n+1} =1$. We have 
\begin{align*}
& \pi_{2N,0}(\fstate,\istate)\TVSn{\istate}{\fstate}{\rv{V}_i|\rv{T}_i,\rv{Q}_i,\rv{R}_i} \\
&= \pi_{2N,0}(\fstate,\istate)\sol{\sum_{t,q,r}} \left| \bar{P}_{\istate}^{\fstate}(t,0,q,r) - \bar{P}_{\istate}^{\fstate}(t,1,q,r) \right|\\
&= \sol{\sum_{t,q,r}} \left| \pi_{2N,0}(\fstate,\istate)\bar{P}_{\istate}^{\fstate}(t,0,q,r) - \pi_{2N,0}(\fstate,\istate)\bar{P}_{\istate}^{\fstate}(t,1,q,r) \right|\\
	&\eqann{a} \sol{\sum_{t,q,r}} \left|\sol{\sum_{\mstate \in \mathcal{S}}}\mu(\mstate)(P_{\istate}^{\mstate}(t,q)P_{\mstate}^{\fstate}(0,r) - P_{s}^{\mstate}(t+1,q)P_{\mstate}^{\fstate}(1,r)) \right|\\
	&\eqann{b} \frac{1}{2} \sol{\sum_{t,q,r}} \bigg|\sol{\sum_{\mstate \in \mathcal{S}}} \mu(\mstate) P_{\istate}^{\mstate}(q)(P_{\mstate}^{\fstate}(0,r) - P_{\mstate}^{\fstate}(1,r)) \\
	&\qquad \qquad+ \sol{\sum_{\mstate \in \mathcal{S}}} \mu(\mstate) P_{\mstate}^{\fstate}(r)(P_{\istate}^{\mstate}(t,q) - P_{\istate}^{\mstate}(t+1,q)) \bigg|\\
	&\eqann[\leq]{c}  \sol{\sum_{\substack{q,\\ \mstate \in \mathcal{S}}}} \mu(\mstate) P_{\istate}^{\mstate}(q) \left(\sol{\sum_{r}} \bigg| P_{\mstate}^{\fstate}(0,r) - P_{\mstate}^{\fstate}(1,r)\bigg| \right) \\
	&\quad +  \sol{\sum_{\substack{r,\\ \mstate \in \mathcal{S}}}}\mu(\mstate) P_{\mstate}^{\fstate}(r) \left(\sol{\sum_{q }} \bigg| P_{\istate}^{\mstate}(0,q) - P_{\istate}^{\mstate}(1,q)\bigg|\right)  \\
	&= \sum_{\mstate\in\mathcal{S}} \mu(\mstate) \TVSn{\mstate}{\fstate}{\rv{V}_i|\rv{R}_i}  + \sum_{\mstate\in\mathcal{S}} \mu(\mstate) \TVSn{\istate}{\mstate}{\rv{U}_i|\rv{Q}_i},
\end{align*}
where \eqannref{a} first expands $\bar{P}_{\istate}^{\fstate}(t,0,q,r)$ and $\bar{P}_{\istate}^{\fstate}(t,1,q,r)$ according to~\eqref{eq_Pac1}, and then applies~\eqref{eq_PUVQR and mu};
 \eqannref{b} is by~\eqref{eq_abcd equality}; and
\eqannref{c} is by the triangle inequality. Since $\mu(\mstate)$ depends on $\istate, \fstate$, we
use~\eqref{eq_sums of mub} to obtain
\begin{align*} \sol{\sum_{\istate,\mstate,\fstate \in \mathcal{S}}} \mu(\mstate)\TVSn{\mstate}{\fstate}{\rv{V}_i|\rv{R}_i} &= \sol{\sum_{\mstate,\fstate \in \mathcal{S}}} \pi_{2N,N}(\fstate,\mstate)\TVSn{\mstate}{\fstate}{\rv{V}_i|\rv{R}_i} = \hat{\rv{K}}_n, \\
\sol{\sum_{\istate,\mstate,\fstate \in \mathcal{S}}} \mu(\mstate)\TVSn{\istate}{\mstate}{\rv{U}_i|\rv{Q}_i} &= \sol{\sum_{\istate,\mstate \in \mathcal{S}}} \pi_{N,0}(\mstate,\istate)\TVSn{\istate}{\mstate}{\rv{U}_i|\rv{Q}_i} = \hat{\rv{K}}_n. \end{align*}
Thus, 
\[
\sol{\sum_{\istate, \fstate \in \mathcal{S}}} \pi_{2N,0}(\fstate,\istate) \TVSn{\istate}{\fstate}{\rv{V}_i|\rv{T}_i,\rv{Q}_i,\rv{R}_i} \leq 2 \hat{\rv{K}}_n. 
\] 
This completes the proof. 
\end{IEEEproof}

\subsection{Fast Polarization of the BSI Bhattacharyya Process}
Fast polarization of the Bhattacharyya process was established in \Cref{cor_polarization for FAIM from ST2016} and \Cref{thm_fast polarization of Z to 1}. Implicitly, however, we have also obtained fast polarization of the BSI-Bhattacharyya process $\hat{\rv{Z}}_n$, both to $0$ and $1$. We now make this explicit. 

\begin{corollary}
	Let $(\rv{X}_j,\rv{Y}_j,\rv{S}_j)$, $j \in \mathbb{Z}$ be a FAIM process. Then $\hat{\rv{Z}}_n$ polarizes fast both to $0$ and to $1$ with any $\beta < 1/2$.
\end{corollary}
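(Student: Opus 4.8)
The plan is to deduce the corollary from facts already in hand, by sandwiching the BSI Bhattacharyya process $\hat{\rv{Z}}_n$ between processes known to polarize fast and then using the fact that $\hat{\rv{Z}}_n$ itself polarizes to $\rv{H}_{\infty}$ to pin down the limiting probabilities exactly. Fix $\beta < 1/2$. The first step is to record two inequalities valid for every $n$. By the conditioning relation~\eqref{eq_upper bound on KUQ}, $\hat{\rv{Z}}_n \le \rv{Z}_n$. For the second, I would apply \Cref{lem_TV distance bounds} to each conditional distribution $P_{\istate}^{\mstate}$, obtaining the pointwise bounds $\ENTSn{\istate}{\mstate}{\rv{U}_i|\rv{Q}_i} \le \BPSn{\istate}{\mstate}{\rv{U}_i|\rv{Q}_i}$ (from~\eqref{eq_bounds on BP}) and $\TVSn{\istate}{\mstate}{\rv{U}_i|\rv{Q}_i} \ge 1 - \ENTSn{\istate}{\mstate}{\rv{U}_i|\rv{Q}_i}$ (from~\eqref{eq_TV relation 1}); averaging each against the weights $\pi_{N,0}(\mstate,\istate)$, which sum to $1$, yields $\hat{\rv{H}}_n \le \hat{\rv{Z}}_n$ and $\hat{\rv{K}}_n \ge 1 - \hat{\rv{H}}_n$, hence $1 - \hat{\rv{Z}}_n \le 1 - \hat{\rv{H}}_n \le \hat{\rv{K}}_n$.

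Next I would establish two $\liminf$ bounds. For polarization to $0$: since $\hat{\rv{Z}}_n \le \rv{Z}_n$, the event $\{\rv{Z}_n < 2^{-N^{\beta}}\}$ is contained in $\{\hat{\rv{Z}}_n < 2^{-N^{\beta}}\}$, and \Cref{cor_polarization for FAIM from ST2016} (that $\rv{Z}_n$ polarizes fast to $0$) gives $\Probi{\rv{Z}_n < 2^{-N^{\beta}}} \to 1 - \ENTi{\rv{X}|\rv{Y}}$; hence $\liminf_n \Probi{\hat{\rv{Z}}_n < 2^{-N^{\beta}}} \ge 1 - \ENTi{\rv{X}|\rv{Y}}$. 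For polarization to $1$: since $1 - \hat{\rv{Z}}_n \le \hat{\rv{K}}_n$, the event $\{\hat{\rv{K}}_n < 2^{-N^{\beta}}\}$ is contained in $\{\hat{\rv{Z}}_n > 1 - 2^{-N^{\beta}}\}$, and the proof of \Cref{thm_fast polarization of Z to 1} already showed, via \Cref{lem_simple proof} with $E = 1/2$ applied to~\eqref{eq_polarization bounds for TVhat cond}, that $\Probi{\hat{\rv{K}}_n < 2^{-N^{\beta}}} \to \ENTi{\rv{X}|\rv{Y}}$; hence $\liminf_n \Probi{\hat{\rv{Z}}_n > 1 - 2^{-N^{\beta}}} \ge \ENTi{\rv{X}|\rv{Y}}$.

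Finally I would close the gap exactly as in the endgame of \Cref{thm_fast polarization of Z to 1}. For every $n \ge 1$ the events $\{\hat{\rv{Z}}_n < 2^{-N^{\beta}}\}$ and $\{\hat{\rv{Z}}_n > 1 - 2^{-N^{\beta}}\}$ are disjoint, so the sum of their probabilities is at most $1$; together with the two $\liminf$ bounds, whose right-hand sides sum to exactly $1$, this forces $\Probi{\hat{\rv{Z}}_n < 2^{-N^{\beta}}} \to 1 - \ENTi{\rv{X}|\rv{Y}}$ and $\Probi{\hat{\rv{Z}}_n > 1 - 2^{-N^{\beta}}} \to \ENTi{\rv{X}|\rv{Y}}$. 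By \Cref{cor_Zn Kn converge}, $\hat{\rv{Z}}_n$ polarizes to $\hat{\rv{Z}}_{\infty} = \rv{H}_{\infty}$, so $\Probi{\hat{\rv{Z}}_{\infty} = 0} = 1 - \ENTi{\rv{X}|\rv{Y}}$ and $\Probi{\hat{\rv{Z}}_{\infty} = 1} = \ENTi{\rv{X}|\rv{Y}}$; the two limits just derived are therefore precisely the definitions of $\hat{\rv{Z}}_n$ polarizing fast to $0$ and to $1$ with this $\beta$, which completes the argument.

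I do not expect a genuine obstacle here: every ingredient is already available, and the only point needing a moment's care is the passage from the pointwise inequalities of \Cref{lem_TV distance bounds} to their $\pi_{N,0}$-averaged (BSI) forms, which is immediate by linearity. A more self-contained alternative would be to prove BSI single-step bounds $\hat{\rv{Z}}_n^{-} \le 2\psi(0)\hat{\rv{Z}}_n$ and $\hat{\rv{Z}}_n^{+} \le \psi(0)\hat{\rv{Z}}_n^{2}$ in the style of \Cref{prop_K inequalities for FAIM processes} and feed them into \Cref{lem_simple proof}, but the sandwich argument above avoids repeating that computation.
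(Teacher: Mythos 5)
Your proposal is correct and follows essentially the same route as the paper: fast polarization of $\hat{\rv{Z}}_n$ to $0$ is inherited from $\rv{Z}_n$ via $\hat{\rv{Z}}_n \leq \rv{Z}_n$, and fast polarization to $1$ comes from the BSI chain $1-\hat{\rv{Z}}_n \leq 1-\hat{\rv{H}}_n \leq \hat{\rv{K}}_n$ together with the fast polarization of $\hat{\rv{K}}_n$ to $0$ already established via \Cref{prop_K inequalities for FAIM processes} and \Cref{lem_simple proof}. You merely spell out what the paper compresses into ``replace the Bhattacharyya parameter with its BSI counterpart,'' and your observation that \Cref{lem_TV distance bounds} applies to each conditional law $P_{\istate}^{\mstate}$ and survives averaging against $\pi_{N,0}$ is exactly the right justification for that step.
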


\begin{IEEEproof}  Polarization of $\hat{\rv{Z}}_n$ was obtained directly in \Cref{cor_Zn Kn converge}. By~\eqref{eq_upper bound on KUQ}, $\rv{Z}_n \geq \hat{\rv{Z}}_n$. Since $\rv{Z}_n$ polarizes fast to $0$ with any $\beta < 1/2$, so must $\hat{\rv{Z}}_n$. We obtain fast polarization of $\hat{\rv{Z}}_n$ to $1$ by replacing the Bhattacharyya parameter with its BSI counterpart in the proof of \Cref{thm_fast polarization of Z to 1}. 
\end{IEEEproof}

\appendices
\section{Auxiliary Proofs for \Cref{sec_polar toolbox}}\label{app_proof of TV lemma}
For $\theta \in [0,1/2]$ we denote
		\begin{align*}
			k(\theta) &= |\theta - (1-\theta)| = 1-2\theta, \\ 
			h(\theta) &= -\theta \log_2 \theta - (1-\theta)\log_2(1-\theta),\\
			z(\theta) &= 2\sqrt{\theta(1-\theta)}. 
		\end{align*}
We will need the following lemmas. 
\begin{lemma}\label{lem_h2 k2 leq 1}
    For $\theta \in [0, 1/2]$, we have $ z^2(\theta) \leq h(\theta) \leq z(\theta)$. 
\end{lemma}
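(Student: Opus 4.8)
The plan is to establish the two inequalities separately, treating each as a one-variable calculus problem on the interval $[0,1/2]$. Write $\bar\theta = 1-\theta$, so $z(\theta) = 2\sqrt{\theta\bar\theta}$, $z^2(\theta) = 4\theta\bar\theta$, and $h(\theta) = -\theta\log_2\theta - \bar\theta\log_2\bar\theta$. All three functions vanish at $\theta = 0$ and equal $1$ at $\theta = 1/2$, so both inequalities are tight at the endpoints; the content is in what happens in between.

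For the left inequality $z^2(\theta) = 4\theta\bar\theta \le h(\theta)$, I would work with the function $g(\theta) = h(\theta) - 4\theta(1-\theta)$ and show $g(\theta)\ge 0$ on $[0,1/2]$. Since $g(0)=g(1/2)=0$, it suffices to control the sign of $g'$ or, more robustly, to invoke concavity: $h$ is concave on $[0,1]$, and $4\theta(1-\theta)$ is also concave, so this does not immediately settle it. Instead I would use the standard bound relating entropy to the variance-type quantity: a clean route is the inequality $h(\theta) \ge \log_2 e \cdot (\text{something})$, or more simply differentiate. We have $h'(\theta) = \log_2\frac{\bar\theta}{\theta}$ and $\frac{d}{d\theta}[4\theta\bar\theta] = 4(1-2\theta)$, so $g'(\theta) = \log_2\frac{1-\theta}{\theta} - 4(1-2\theta)$. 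At $\theta = 1/2$ both terms vanish; as $\theta\to 0^+$, $g'\to+\infty$. Checking $g'$ has a single sign change (it is positive near $0$ and one verifies $g'' < 0$ appropriately, or that $g' $ decreases then the endpoint condition forces $g \ge 0$) gives the claim. Alternatively, and perhaps more cleanly, substitute $\theta = \tfrac12(1-x)$ with $x\in[0,1]$, so $z^2 = 1-x^2$ and $h = 1 - \tfrac12[(1-x)\log_2(1-x)+(1+x)\log_2(1+x)]$; then the inequality becomes $x^2 \ge \tfrac12[(1+x)\log_2(1+x) + (1-x)\log_2(1-x)]$, which follows from the Taylor expansion $(1+x)\ln(1+x)+(1-x)\ln(1-x) = x^2 + \tfrac{x^4}{6} + \cdots = \sum_{k\ge1}\frac{x^{2k}}{k(2k-1)}$ — wait, this gives the reverse; so in fact $z^2(\theta)\le h(\theta)$ should be checked in the $x$ variable carefully, comparing $\ln 2 \cdot x^2$ against $\tfrac12\sum \frac{x^{2k}\cdot 2}{2k(2k-1)}$, and since $\ln 2 < 1$ while the series starts at $x^2$, the bound holds termwise after accounting for the $\ln 2$ factor. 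I would present whichever of these computations turns out cleanest.

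For the right inequality $h(\theta) \le z(\theta) = 2\sqrt{\theta\bar\theta}$, the cleanest approach is again the substitution $\theta = \tfrac12(1-x)$: then $z = \sqrt{1-x^2}$ and one must show $h(\theta) \le \sqrt{1-x^2}$. Using concavity of $\sqrt{1-x^2}$ won't directly help since $h$ is also concave in this variable; instead I would compare the series expansions near $x=0$, or note that both sides are even, concave, equal $1$ at $x=0$ and $0$ at $x=1$, and check the inequality of derivatives or use $\sqrt{1-x^2}\ge 1-x^2 \ge$ ... no — rather, I would use the known chain $h(\theta) \le \log_2(1 + z(\theta))$ is false in general; the right move is direct: show $\phi(x) := \sqrt{1-x^2} - h$ satisfies $\phi \ge 0$ by verifying $\phi(0)=0$, $\phi'(0)=0$, and $\phi'' \le 0$ fails, so instead check $\phi'' $ sign structure or simply that $\phi$ has no interior zero. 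A robust fallback: square both sides (both nonnegative) and show $h(\theta)^2 \le 4\theta\bar\theta$, i.e. $h(\theta)^2 \le z^2(\theta)$; but combined with the left inequality $z^2 \le h$ this would force $h^2 \le h$, which is automatic since $h\in[0,1]$ — however that logic is circular, so I must prove $h \le z$ independently, e.g. via $h(\theta) \le \sqrt{h(\theta)}$ is not it either. The honest independent proof: from the left inequality we already have $z^2(\theta) \le h(\theta)$, hence $z(\theta) \le \sqrt{h(\theta)}$; separately one shows $h(\theta) \le \sqrt{h(\theta)}$ trivially — that's still not $h \le z$. So the right inequality genuinely needs its own argument, and I expect this to be the main obstacle. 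I would prove it by the substitution above and a careful interval analysis of $\phi(x) = \sqrt{1-x^2} - h(\tfrac{1-x}{2})$: show $\phi(0)=\phi(1)=0$ and that $\phi$ is concave on $[0,1]$ (computing $\phi''(x) = -(1-x^2)^{-3/2} - h''(\cdots)\cdot(\tfrac12)^2$, and since $h'' < 0$ the second term is positive while the first is negative, so concavity is not immediate — one must bound $|h''|$). Given the delicacy, I anticipate needing the elementary inequality $-\ln(1-u) \ge u$ or a direct estimate $h(\theta) \le 2\sqrt{\theta}$ for small $\theta$ patched with a monotonicity argument on the rest of $[0,1/2]$. I would organize the final write-up as: (i) reduce to $x\in[0,1]$; (ii) prove $4\theta\bar\theta \le h$ via series comparison; (iii) prove $h \le 2\sqrt{\theta\bar\theta}$ by showing the difference is concave with zero boundary values, or by the two-regime estimate; (iv) conclude.
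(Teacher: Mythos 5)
Your proposal does not yet contain a proof of the right-hand inequality $h(\theta)\le z(\theta)$, and you yourself flag this as "the main obstacle" without resolving it. The one concrete mechanism you put forward --- that $\phi(x)=\sqrt{1-x^2}-h\bigl(\tfrac{1-x}{2}\bigr)$ is concave on $[0,1]$ with zero boundary values --- is false, not merely ``not immediate'': a direct computation gives $\phi''(x)=(1-x^2)^{-3/2}\bigl(\tfrac{\sqrt{1-x^2}}{\ln 2}-1\bigr)$, which is strictly positive for $x^2<1-(\ln 2)^2\approx 0.52$, so $\phi$ is convex on roughly the first two thirds of the interval. (This sign structure is in fact usable: $\phi$ is convex then concave, with $\phi(0)=\phi'(0)=0$ and $\phi(1)=0$, which does force $\phi\ge 0$ --- but you never carry this, or the vague ``two-regime estimate'' fallback, to completion.) The paper takes a different route: it sets $g(\theta)=h(\theta)-z(\theta)$, shows $g'''(\theta)\le 0$ so that $g'$ is concave-$\cap$, and combines $g'(1/2)=0$ with $g'(\theta)\to-\infty$ as $\theta\to 0$ to conclude that $g$ has exactly one interior stationary point, which is a minimum; together with $g(0)=g(1/2)=0$ this yields $g\le 0$. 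Some such second- or third-order analysis is genuinely needed here, and your write-up stops short of supplying one.

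For the left-hand inequality $z^2(\theta)\le h(\theta)$ your series route is sound in substance but mis-justified: the bound is not ``termwise,'' and ``$\ln 2<1$ while the series starts at $x^2$'' does not establish it. After the substitution $\theta=\tfrac12(1-x)$ the claim is $S(x)\le 2\ln 2\cdot x^2$ with $S(x)=\sum_{k\ge1}\frac{x^{2k}}{k(2k-1)}$, and the correct step is $x^{2k}\le x^2$ for $x\in[0,1]$ together with the exact evaluation $\sum_{k\ge1}\frac{1}{k(2k-1)}=S(1)=2\ln 2$. The paper instead proves convexity of $h(\theta)/\theta$ on $(0,1/2]$ and applies the gradient inequality at $\theta=1/2$ to get $h(\theta)\ge 4\theta(1-\theta)$; either route works once written correctly. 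So the substantive gap is confined to, but real for, $h\le z$.
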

\begin{IEEEproof}
    We plot $z^2(\theta)$, $h(\theta)$ and $z(\theta)$ in \Cref{fig_hx leq gx}; indeed $z^2(\theta)
    \leq h(\theta) \leq z(\theta)$ for $0 \leq \theta \leq 1/2$. We now prove this formally.

   The left-most inequality is obvious for $\theta=0$. Next, observe that $h(\theta)/\theta$ is
       convex-$\cup$ in $(0,1/2]$. To see this,  we turn to its second order derivative: 
       \[ \left(\frac{h(\theta)}{\theta}\right)'' = \frac{-(\theta+2(1-\theta)\ln(1-\theta))
       }{(1-\theta)\theta^3 \ln 2}.\]
       We claim that it is nonnegative for $\theta \in (0,1/2]$, which will imply that $h(\theta)/\theta$
       is indeed convex-$\cup$ in $(0,1/2]$. The denominator is nonnegative, so it
       remains to show that the numerator is nonnegative as well. Negating the numerator yields $\tau(\theta) = \theta
       + 2(1-\theta)\ln(1-\theta)$, which is convex-$\cup$ in $[0,1/2]$ as it is a positive sum of two
       convex-$\cup$ functions. Since $\tau(0) = 0$ and $\tau(1/2) = 1/2-\ln 2 < 0$, by Jensen's
       inequality,
       \begin{align*}
               \tau(\theta) &= \tau((1-2\theta) \cdot 0 + 2\theta \cdot 1/2) \\
                                & \leq  (1-2\theta) \cdot \tau(0)  + 2\theta\cdot \tau(1/2)\\
                                &< 0
           \end{align*}
           for any $\theta \in (0,1/2]$. This implies that the numerator of the second-order derivative is nonnegative,
           establishing convexity of $h(\theta)/\theta$.

           Consequently, $h(\theta)/\theta$ satisfies
           the gradient inequality (\cite[Theorem 7.6]{Beck_NonlinearOptimization}) by which
           \begin{align*} \frac{h(\theta)}{\theta} &\geq \frac{h(1/2)}{1/2} +
                   \left. \left( \frac{h(\theta)}{\theta} \right)'\right|_{\theta=1/2} (\theta-1/2) \\
                       &= 2-4(\theta-1/2) \\
                       &= 4(1-\theta).\end{align*}
                               This holds for any $\theta \in (0,1/2]$.  Rearranging yields $h(\theta) \geq
                                       4\theta(1-\theta) = z^2(\theta)$, which holds for any $\theta \in [0,1/2]$.

	For the right-most inequality, denote $g(\theta) = h(\theta)-z(\theta)$. Since $g(0) = g(1/2) = 0$, it suffices to show that $g(\theta)$ has a single stationary point in $(0,1/2)$, and that this point is a minimum. 
		
	The stationary points of $g(\theta)$ are the zeros of its derivative
	\[ g'(\theta) = \log_2\left(\frac{1-\theta}{\theta}\right)  - \frac{1-2\theta}{\sqrt{\theta(1-\theta)}}.\] 
	Recalling that $\theta \in [0,1/2]$, 
	\[ g'''(\theta) = \frac{(1-2\theta)(4\sqrt{\theta(1-\theta)}-\ln 8)}{4\left(\sqrt{\theta(1-\theta)}\right)^5\ln 2}\leq 0,\]
	since \[4\sqrt{\theta(1-\theta)}-\ln 8 < 4\sqrt{\theta(1-\theta)}-2 \leq 0.\] Hence, $g'(\theta)$ is concave-$\cap$ in $[0,1/2]$. 	Observe that $g'(1/2) = 0$ and $\lim_{\theta\to 0} g'(\theta) = -\infty$, so $g'(\theta)$ can assume the value $0$ for at most one point in $(0,1/2)$. Assume to the contrary that $g'(\theta)<0$ for all $\theta \in (0,1/2)$. Then, $g(\theta)$ has no stationary points in $(0,1/2)$, which, by the mean value theorem, contradicts $g(0) = g(1/2) = 0$. We conclude that $g'(\theta_0) = 0 $ for some $\theta_0 \in (0,1/2)$. Consequently, $\theta_0$ is a stationary point of $g(\theta)$. Since $g'(\theta)$ is concave-$\cap$ and $g'(1/2) = g'(\theta_0) = 0$, then $g'(\theta) > g'(\theta_0)$ for $\theta_0<\theta<1/2$ and  $g'(\theta)<g'(\theta_0)$ for $0<\theta<\theta_0$. This implies that $g(\theta) \geq g(\theta_0)$ for any $\theta \in [0,1/2]$; i.e., $\theta_0$ is the single minimum of $g(\theta)$ in $[0,1/2]$.  	
	\begin{figure}
	\begin{center}
%
%
\begin{tikzpicture}

\begin{axis}[%
width=8cm,
height=4cm,
scale only axis,
xmin=0,
xmax=0.5,
xlabel=$\theta$,
ymin=0,
ymax=1,
legend style={legend cell align=right,align=right,draw=white!15!black, legend pos = south east}
]

\addplot [color=red,dashed,thick]
  table[row sep=crcr]{%
0	0\\
0.01	0.198997487421324\\
0.02	0.28\\
0.03	0.34117444218464\\
0.04	0.391918358845308\\
0.05	0.435889894354067\\
0.06	0.474973683481517\\
0.07	0.510294032886923\\
0.08	0.542586398650021\\
0.09	0.572363520850167\\
0.1	0.6\\
0.11	0.625779513886481\\
0.12	0.649923072370877\\
0.13	0.672606868832009\\
0.14	0.693974062915899\\
0.15	0.714142842854285\\
0.16	0.733212111192934\\
0.17	0.751265598839718\\
0.18	0.768374908491942\\
0.19	0.784601809837321\\
0.2	0.8\\
0.21	0.814616474176652\\
0.22	0.828492607088319\\
0.23	0.841665016500033\\
0.24	0.854166260162505\\
0.25	0.866025403784439\\
0.26	0.877268487978452\\
0.27	0.887918915216925\\
0.28	0.897997772825746\\
0.29	0.907524104363074\\
0.3	0.916515138991168\\
0.31	0.924986486387774\\
0.32	0.932952303175248\\
0.33	0.9404254356407\\
0.34	0.947417542586161\\
0.35	0.953939201416946\\
0.36	0.96\\
0.37	0.965608616365865\\
0.38	0.970772887960928\\
0.39	0.975499871860576\\
0.4	0.979795897113271\\
0.41	0.983666610188635\\
0.42	0.987117014340245\\
0.43	0.990151503558925\\
0.44	0.992773891679269\\
0.45	0.99498743710662\\
0.46	0.996794863550169\\
0.47	0.998198377077422\\
0.48	0.999199679743744\\
0.49	0.999799979995999\\
0.5	1\\
};
\addlegendentry{$z(\theta)$};

\addplot [color=blue,solid,thick]
  table[row sep=crcr]{%
0	0\\
0.01	0.0807931358959112\\
0.02	0.141440542541821\\
0.03	0.194391857831576\\
0.04	0.242292189082415\\
0.05	0.286396957115956\\
0.06	0.327444919154476\\
0.07	0.365923650900223\\
0.08	0.402179190202273\\
0.09	0.436469817064103\\
0.1	0.468995593589281\\
0.11	0.499915958164528\\
0.12	0.529360865287364\\
0.13	0.557438185027989\\
0.14	0.584238811642856\\
0.15	0.6098403047164\\
0.16	0.634309554640566\\
0.17	0.65770477874422\\
0.18	0.68007704572828\\
0.19	0.701471459883897\\
0.2	0.721928094887362\\
0.21	0.741482739931274\\
0.22	0.760167502961966\\
0.23	0.778011303546538\\
0.24	0.795040279384522\\
0.25	0.811278124459133\\
0.26	0.826746372492618\\
0.27	0.841464636208176\\
0.28	0.855450810560131\\
0.29	0.868721246339405\\
0.3	0.881290899230693\\
0.31	0.893173458377857\\
0.32	0.904381457724494\\
0.33	0.914926372779727\\
0.34	0.92481870497303\\
0.35	0.934068055375491\\
0.36	0.942683189255492\\
0.37	0.950672092687066\\
0.38	0.9580420222263\\
0.39	0.964799548505087\\
0.4	0.970950594454669\\
0.41	0.976500468757824\\
0.42	0.981453895033654\\
0.43	0.98581503717892\\
0.44	0.989587521222056\\
0.45	0.992774453987808\\
0.46	0.995378438820226\\
0.47	0.99740158856774\\
0.48	0.998845535995202\\
0.49	0.99971144175281\\
0.5	1\\
};
\addlegendentry{$h(\theta)$};

 \addplot [color=green!50!black,dotted,thick]
  table[row sep=crcr]{%
 0     0                 \\ 
 0.01  0.039600000000000 \\ 
 0.02  0.078400000000000 \\ 
 0.03  0.116400000000000 \\ 
 0.04  0.153600000000000 \\ 
 0.05  0.190000000000000 \\ 
 0.06  0.225600000000000 \\ 
 0.07  0.260400000000000 \\ 
 0.08  0.294399999999999 \\ 
 0.09  0.327600000000000 \\ 
 0.1	  0.360000000000000 \\ 
 0.11  0.391600000000001 \\ 
 0.12  0.422400000000000 \\ 
 0.13  0.452399999999999 \\      
 0.14  0.481600000000000 \\ 
 0.15  0.510000000000000 \\ 
 0.16  0.537599999999999 \\ 
 0.17  0.564400000000000 \\ 
 0.18  0.590400000000000 \\ 
 0.19  0.615600000000000 \\ 
 0.2	  0.640000000000000 \\ 
 0.21  0.663600000000000 \\ 
 0.22  0.686400000000000 \\ 
 0.23  0.708400000000001 \\ 
 0.24  0.729600000000000 \\ 
 0.25  0.750000000000001 \\ 
 0.26  0.769599999999999 \\ 
 0.27  0.788400000000001 \\ 
 0.28  0.806400000000000 \\ 
 0.29  0.823600000000000 \\ 
 0.3	  0.840000000000000 \\ 
 0.31  0.855600000000000 \\ 
 0.32  0.870400000000000 \\ 
 0.33  0.884400000000000 \\ 
 0.34  0.897600000000000 \\ 
 0.35  0.910000000000001 \\ 
 0.36  0.921600000000000 \\ 
 0.37  0.932400000000000 \\ 
 0.38  0.942400000000000 \\ 
 0.39  0.951600000000000 \\ 
 0.4	  0.960000000000000 \\ 
 0.41  0.967600000000000 \\ 
 0.42  0.974399999999999 \\ 
 0.43  0.980400000000000 \\ 
 0.44  0.985600000000001 \\ 
 0.45  0.990000000000000 \\ 
 0.46  0.993600000000000 \\ 
 0.47  0.996399999999999 \\ 
 0.48  0.998400000000001 \\ 
 0.49  0.999600000000000 \\    
 0.5   1.000000000000000 \\ 
};
\addlegendentry{$z^2(\theta)$};

\end{axis}
\end{tikzpicture}%
	\end{center} 
    \caption{Illustration that $z^2(\theta) \leq h(\theta) \leq z(\theta)$ for $0 \leq \theta \leq 1/2$.}\label{fig_hx leq gx}
	\end{figure}
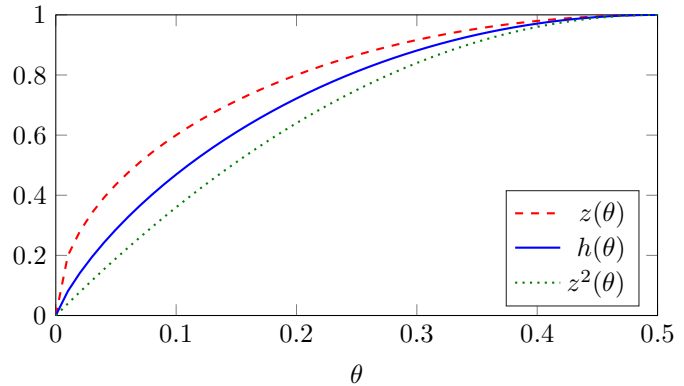  
\end{IEEEproof}
\begin{lemma} \label{lem_h+k leq}
    For $\theta \in [0,1/2]$, we have $k(\theta) +  h(\theta) \geq 1$.
\end{lemma}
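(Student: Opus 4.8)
The plan is to note that, by the definition $k(\theta)=1-2\theta$, the asserted inequality $k(\theta)+h(\theta)\geq 1$ is equivalent to the one-variable inequality $h(\theta)\geq 2\theta$ on $[0,1/2]$. So I would introduce the auxiliary function $\phi(\theta)=h(\theta)-2\theta$ and record its boundary values $\phi(0)=h(0)=0$ and $\phi(1/2)=h(1/2)-1=0$. Everything then reduces to showing $\phi(\theta)\geq 0$ on $[0,1/2]$.

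The key step is that $\phi$ is concave-$\cap$ on $[0,1/2]$. Indeed, $h'(\theta)=\log_2\frac{1-\theta}{\theta}$ and hence $h''(\theta)=-\frac{1}{\theta(1-\theta)\ln 2}<0$ for $\theta\in(0,1/2)$, while the term $-2\theta$ is affine; therefore $\phi''(\theta)=h''(\theta)<0$ on $(0,1/2)$, and $\phi$ is continuous on the closed interval (with the usual convention $0\log_2 0=0$). A function that is concave-$\cap$ on an interval and vanishes at both endpoints is nonnegative throughout the interval: for $\theta\in(0,1/2)$, writing $\theta=(1-2\theta)\cdot 0+2\theta\cdot\tfrac12$, concavity gives
\[
\phi(\theta)\;\geq\;(1-2\theta)\,\phi(0)+2\theta\,\phi(1/2)\;=\;0 .
\]
Combining this with $\phi(0)=\phi(1/2)=0$ yields $h(\theta)\geq 2\theta$ for every $\theta\in[0,1/2]$, which rearranges to $k(\theta)+h(\theta)\geq 1$, as desired.

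I expect no genuine obstacle here; the computation is routine and the argument is essentially ``concavity of the binary entropy function plus a linear comparison.'' The only point that merits a word of care is that $h$ is not differentiable at $\theta=0$, so I would phrase the concavity claim on the open interval $(0,1/2)$ via the sign of $h''$ and then invoke continuity at the endpoint; alternatively one could cite concavity of $h$ directly and dispense with derivatives altogether. (As a sanity check, this lemma is exactly what feeds into the inequality $\TV{\rv{U}|\rv{Q}}\geq 1-\ENT{\rv{U}|\rv{Q}}$ in \Cref{lem_TV distance bounds}, via $k(\Pe{\rv{U}|\rv{Q}})=\TV{\rv{U}|\rv{Q}}$ and $h(\Pe{\rv{U}|\rv{Q}})\geq\ENT{\rv{U}|\rv{Q}}$.)
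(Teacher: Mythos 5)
Your proof is correct and is essentially the paper's own argument: both rest on concavity of $h$ (hence of $h$ plus an affine term), the endpoint values at $\theta=0$ and $\theta=1/2$, and Jensen's inequality applied to the convex combination $\theta=(1-2\theta)\cdot 0+2\theta\cdot\tfrac12$. The only cosmetic difference is that you work with $\phi=h(\theta)-2\theta$ while the paper works directly with $\eta=k(\theta)+h(\theta)=\phi(\theta)+1$.
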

\begin{proof}
    Both $k(\theta)$ and $h(\theta)$ are continuous and concave-$\cap$ functions in $[0,1/2]$. Therefore,
    $\eta(\theta) = k(\theta) + h(\theta)$ is also concave-$\cap$ in this region. Observe that
    $\eta(0) = \eta(1/2)=1$. Any $\theta \in [0,1/2]$ can be written as a convex combination of $0$
    and $1/2$, since $\theta = (1-2\theta) \cdot 0  + (2\theta)\cdot (1/2)$. Thus, by Jensen's
    inequality for concave-$\cap$ functions, $\eta(\theta) \geq (1-2\theta)\eta(0)
    + (2\eta)\eta(1/2) = 1$ for any $\theta\in[0,1/2]$.    
\end{proof}

\begin{IEEEproof}[Proof of \Cref{lem_TV distance bounds}]
		For any $q$, denote \[\theta = \theta(q) = \min\{\pr{U|Q}{0|q},\pr{U|Q}{1|q}\}.\] Accordingly,  $1-\theta = \max\{\pr{U|Q}{0|q},\pr{U|Q}{1|q}\}$ and $\theta \in [0,1/2]$.
		The various distribution parameters are expectations of functions of $\theta$:			
		\begin{align*} 
			\Pe{\rv{U}|\rv{Q}} &= \sum_q \pr{Q}{q} \theta, \\
			\TV{\rv{U}|\rv{Q}} &= \sum_q \pr{Q}{q}k(\theta),\\
			\ENT{\rv{U}|\rv{Q}} &= \sum_q \pr{Q}{q}h(\theta),  \\
			\BP{\rv{U}|\rv{Q}} &= \sum_q \pr{Q}{q}z(\theta).			
		\end{align*}

We directly obtain the equality in~\eqref{eq_TV relation 1}, as 
		\begin{align*} 
			\TV{\rv{U}|\rv{Q}} &= \sum_q \pr{Q}{q}(1-2\theta)\\ 
							   &= 1 - 2\sum_q \pr{Q}{q}\theta\\
							   & = 1 - 2\Pe{\rv{U}|\rv{Q}}.
		\end{align*}
        The inequality of~\eqref{eq_TV relation 1} is a consequence of \Cref{lem_h+k leq}, as
\begin{align*} 
    \TV{\rv{U}|\rv{Q}} + \ENT{\rv{U}|\rv{Q}} &= \sum_q \pr{Q}{q} (k(\theta) + h(\theta)) \\ 
                                             &\geq 1. 
\end{align*}

The right-most inequalities of~\eqref{eq_bounds on BP} and~\eqref{eq_bounds on TV upper} are  immediate consequences of \Cref{lem_h2 k2 leq 1}. Thus, we concentrate on the left-most inequalities.

For the left-most inequality of~\eqref{eq_bounds on BP}, we employ Jensen's inequality for the
convex-$\cup$ function $x \mapsto x^2$ and the
inequality $z^2(\theta) \leq h(\theta)$ from \Cref{lem_h2 k2 leq 1} to obtain 
\begin{align*}
    \BP{\rv{U}|\rv{Q}}^2 &= \left( \sum_q \pr{Q}{q} z(\theta) \right)^2 \\
                         &\leq \sum_q \pr{Q}{q} z^2(\theta) \\
                         &\leq \sum_q \pr{Q}{q} h(\theta) \\ 
                         &= \ENT{\rv{U}|\rv{Q}}. 
\end{align*}

	For the left-most inequality of~\eqref{eq_bounds on TV upper}, observe that 
	\begin{align*} z^2(\theta) + k^2(\theta) &=  4\theta(1-\theta) + (\theta-(1-\theta))^2\\
											   &= \theta^2 + 2\theta(1-\theta) + (1-\theta)^2 \\
											   &= (\theta + (1-\theta))^2\\
											   &= 1.	
	\end{align*}
	Using Jensen's inequality twice for the convex-$\cup$ function $x\mapsto x^2$, 
\[ \BP{\rv{U}|\rv{Q}}^2 + \TV{\rv{U}|\rv{Q}}^2 \leq \sum_q \pr{Q}{q} (z^2(\theta)	+ k^2(\theta)) = 1.\]
This implies the left-most inequality of~\eqref{eq_bounds on TV upper}. 
\end{IEEEproof}

\begin{IEEEproof}[Proof of \Cref{lem_effect of conditioning}] 
We obtain the joint distribution of $(\rv{U},\rv{Q})$ by marginalizing $\pr{U,Q,S}{}$, 
\[ \pr{U,Q}{u,q} = \sum_{s} \pr{U,Q,S}{u,q,s}.\]

The triangle inequality yields~\eqref{eq_TV conditioning}:
\begin{align*}
	\TV{\rv{U}|\rv{Q}} &= \sum_q |\pr{U,Q}{0,q}-\pr{U,Q}{1,q}| \\ 
					   &= \sum_q \left| \sum_s \big(\pr{U,Q,S}{0,q,s} - \pr{U,Q,S}{1,q,s} \big) \right| \\
					   &\leq \sum_{q,s}  \left|\pr{U,Q,S}{0,q,s} - \pr{U,Q,S}{1,q,s}\right|\\
					   &= \TV{\rv{U}|\rv{Q},\rv{S}}. 
\end{align*}

We derive~\eqref{eq_BP conditioning} using the Cauchy-Schwartz inequality: 
\begin{align*}
	\BP{\rv{U}|\rv{Q}} &= 2\sum_q \sqrt{\pr{U,Q}{0,q}\pr{U,Q}{1,q}} \\
					   &= 2\sum_q \sqrt{\sum_s \pr{U,Q,S}{0,q,s} \sum_{s'}\pr{U,Q,S}{1,q,s'}} \\
					   &\geq 2\sum_{q,s}\sqrt{\pr{U,Q,S}{0,q,s}}\sqrt{\pr{U,Q,S}{1,q,s}} \\
					   &=\BP{\rv{U}|\rv{Q},\rv{S}}.
\end{align*}

Inequality~\eqref{eq_ENT conditioning} is a consequence of Jensen's inequality for the concave-$\cap$ function $x\mapsto -x\log_2 x$. A proof can be found in~\cite[Theorem 2.6.5]{cover_thomas}. 
\end{IEEEproof}

\section{Extension to the Non-Binary Case}\label{app_non binary extension}
Our results are readily extended to the non-binary case. Here, $\rv{X}_j, j \in \mathbb{Z}$ take values in an alphabet $\mathcal{U}$ with $|\mathcal{U}| = L$. As in~\cite{sasoglu_nonbinary}, 
we use Ar\i{}kan's polarization transform in the non-binary case, replacing addition of $L$-ary numbers with modulo-$L$ addition. Thus~\eqref{eq_defs of UVQR} applies in the non-binary case; addition in~\eqref{eq_defs of UVQR U} and~\eqref{eq_defs of UVQR V} is modulo-$L$.  

First, we extend the distribution parameters from \Cref{sec_distribution parameters} to non-binary $\rv{U}$. We do this while keeping key properties that allows their use in polar code analysis. Then, we consider their fast polarization. Fast polarization of the Bhattacharyya process was established in~\cite[Chapter 3]{sasoglu_thesis}. We show that the total variation process satisfies the conditions for fast polarization required for \Cref{lem_simple proof}.

\subsection{Non-Binary Distribution Parameters}
The three distribution parameters we consider --- Bhattacharyya parameter, total variation distance, and conditional entropy --- were all defined for random variable pairs $(\rv{U},\rv{Q})$ where $\rv{U}$ is binary. We now show how to extend them to the case where $\rv{U}$ may take values in an arbitrary finite alphabet $\mathcal{U}$. We denote $|\mathcal{U}| = L$. 

There are two properties of the distribution parameters that are crucial for the analysis of polar codes. First, they are to take values in $[0,1]$. Second, when each of them approaches one of the extreme values, so should the others. The suggested non-binary extension satisfies these properties. The extension for the Bhattacharyya parameter and conditional entropy are based on~\cite{sasoglu_nonbinary}, which ensued the study of non-binary polar codes (see also~\cite[Chapter 3]{sasoglu_thesis}). 

Denote 
\begin{align*}
	\BPL{\rv{U}|\rv{Q}} &= \sum_q \sol{\sum_{u' \neq u}} \frac{\pr{Q}{q}}{L-1} \sqrt{\pr{U|Q}{u|q}\pr{U|Q}{u'|q}}, \\
	\TVL{\rv{U}|\rv{Q}} &= \sum_q \sol{\sum_{u' \neq u}} \frac{\pr{Q}{q}}{L-1} \frac{\left|\pr{U|Q}{u|q}-\pr{U|Q}{u'|q}\right|}{2}, \\ 
	\ENTL{\rv{U}|\rv{Q}} &= -\sum_q \sum_u \pr{Q}{q} \pr{U|Q}{u|q}\log_L \pr{U|Q}{u|q}. 
\end{align*}
As expected, when $L=2$ these coincide with~\eqref{eq_def of Z}--\eqref{eq_def of condent}. All three parameters are in $[0,1]$. This is well-known for the conditional entropy (see, e.g., \cite[Chapter 2]{cover_thomas}); for the total variation distance and the Bhattacharyya parameter, see the proof of \Cref{lem_non binary parameters}, below. The three parameters achieve their extreme values either when $\pr{U|Q}{u|q} =1/L$ for all $u$ or when there is some $u_0 \in \mathcal{U}$ such that $\pr{U|Q}{u_0|q} = 1$ and  $\pr{U|Q}{u|q}=0$ for $u\neq u_0$. 

The consequences of \Cref{lem_TV distance bounds} apply in the non-binary case as well. That is, when one of the three parameters approaches an extreme values, so do the other two. This is a consequence of the following lemma. 

\begin{lemma}\label{lem_non binary parameters}
The non-binary total variation distance, probability of error, conditional entropy, and Bhattacharyya parameter  are related by
\begin{subequations}\label{eq_L bounds}
\begin{align} \BPL{\rv{U}|\rv{Q}}^2 &\leq \ENTL{\rv{U}|\rv{Q}} \leq \log_L(1+(L-1)\BPL{\rv{U}|\rv{Q}}),\label{eq_BPL ENTL bounds} 	\\
1-\BPL{\rv{U}|\rv{Q}} &\leq \TVL{\rv{U}|\rv{Q}} \leq \sqrt{ 1 -  \BPL{\rv{U}|\rv{Q}}^2}.\label{eq_TVL BPL ENTL bounds} 		
\end{align}
\end{subequations}
\end{lemma}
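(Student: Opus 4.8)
The plan is to mirror the proof of \Cref{lem_TV distance bounds}: reduce both inequalities of~\eqref{eq_L bounds} to pointwise statements about a single distribution, and then lift those to the $\rv{Q}$-averaged parameters by Jensen's inequality. For a fixed $q$ let $p = \pr{U|Q}{\cdot\,|q}$ denote the conditional law of $\rv{U}$, a probability vector on $\mathcal{U}$, and write
\[ z_L(p) = \tfrac{1}{L-1}\sum_{u'\neq u}\sqrt{p(u)p(u')},\qquad t_L(p) = \tfrac{1}{L-1}\sum_{u'\neq u}\tfrac{|p(u)-p(u')|}{2},\qquad h_L(p) = -\sum_u p(u)\log_L p(u), \]
so that $\BPL{\rv{U}|\rv{Q}} = \sum_q \pr{Q}{q}\,z_L(p)$, $\TVL{\rv{U}|\rv{Q}} = \sum_q \pr{Q}{q}\,t_L(p)$, and $\ENTL{\rv{U}|\rv{Q}} = \sum_q \pr{Q}{q}\,h_L(p)$. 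Expanding the square gives the identity $\bigl(\sum_u\sqrt{p(u)}\bigr)^2 = \sum_u p(u) + \sum_{u'\neq u}\sqrt{p(u)p(u')} = 1+(L-1)z_L(p)$; writing $R = \sum_u\sqrt{p(u)}$, Cauchy--Schwarz yields $1\le R\le\sqrt{L}$, hence $z_L(p)\in[0,1]$ and, from the next paragraph, $t_L(p)\in[0,1]$ — this also settles the membership $\BPL,\TVL\in[0,1]$ asserted in the surrounding text.

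For~\eqref{eq_BPL ENTL bounds} I would prove the pointwise chain $z_L(p)^2\le h_L(p)\le\log_L\bigl(1+(L-1)z_L(p)\bigr)$. The upper bound is one line: since $\log_L$ is concave, $h_L(p) = 2\sum_u p(u)\log_L\tfrac{1}{\sqrt{p(u)}}\le 2\log_L\bigl(\sum_u p(u)\tfrac{1}{\sqrt{p(u)}}\bigr) = 2\log_L R = \log_L\bigl(1+(L-1)z_L(p)\bigr)$. The lower bound $z_L(p)^2\le h_L(p)$ is the non-binary counterpart of $z^2(\theta)\le h(\theta)$ from \Cref{lem_h2 k2 leq 1} and is exactly the per-letter inequality behind the Bhattacharyya/entropy comparison of~\cite{sasoglu_nonbinary} (see also~\cite[Chapter 3]{sasoglu_thesis}), which I would invoke. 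Feeding both pointwise bounds through Jensen --- the convex map $x\mapsto x^2$ for the left inequality and the concave map $x\mapsto\log_L(1+(L-1)x)$ for the right --- exactly as in the proof of \Cref{lem_TV distance bounds}, gives $\BPL^2\le\sum_q\pr{Q}{q}z_L(p)^2\le\sum_q\pr{Q}{q}h_L(p)=\ENTL$ and $\ENTL=\sum_q\pr{Q}{q}h_L(p)\le\sum_q\pr{Q}{q}\log_L(1+(L-1)z_L(p))\le\log_L(1+(L-1)\BPL)$.

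The genuinely new ingredient is~\eqref{eq_TVL BPL ENTL bounds}, which I would obtain from a two-dimensional vector estimate for a single $p$. For an ordered pair $u\neq u'$ put $a=p(u)$, $b=p(u')$, $\zeta_{uu'}=\sqrt{ab}$ and $\tau_{uu'}=|a-b|/2$; a direct computation gives $\zeta_{uu'}^2+\tau_{uu'}^2=\bigl(\tfrac{a+b}{2}\bigr)^2$, so the plane vector $(\zeta_{uu'},\tau_{uu'})$ has Euclidean length $\tfrac{p(u)+p(u')}{2}$, while (taking w.l.o.g. $a\le b$) $\zeta_{uu'}+\tau_{uu'}-\tfrac{a+b}{2}=\sqrt{a}\,(\sqrt{b}-\sqrt{a})\ge 0$, i.e. $\zeta_{uu'}+\tau_{uu'}\ge\tfrac{p(u)+p(u')}{2}$. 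Since $\sum_{u'\neq u}\tfrac{p(u)+p(u')}{2}=L-1$, summing over ordered pairs and dividing by $L-1$ gives, on one hand, $z_L(p)+t_L(p)\ge 1$ and, on the other hand, by the triangle inequality in $\mathbb{R}^2$, $\sqrt{z_L(p)^2+t_L(p)^2}=\tfrac{1}{L-1}\bigl\|\sum_{u'\neq u}(\zeta_{uu'},\tau_{uu'})\bigr\|\le 1$, that is $z_L(p)^2+t_L(p)^2\le 1$. Averaging the first bound over $\rv{Q}$ gives $1-\BPL\le\TVL$ by linearity alone; averaging the second and applying Jensen to $x\mapsto x^2$ twice (as in the $\BP^2+\TV^2\le 1$ step of the proof of \Cref{lem_TV distance bounds}) gives $\BPL^2+\TVL^2\le 1$, i.e. $\TVL\le\sqrt{1-\BPL^2}$. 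The only step without a short self-contained argument is the imported pointwise bound $z_L(p)^2\le h_L(p)$; everything else follows the binary template mechanically, so that is where I expect the real work (or a careful citation) to lie.
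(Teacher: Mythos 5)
Your proof is correct, and on the one genuinely new inequality it takes a different --- and in fact sounder --- route than the paper. For~\eqref{eq_BPL ENTL bounds} the paper simply cites~\cite[Proposition 3.3]{sasoglu_thesis}; you instead prove the upper bound directly from concavity of $\log_L$ via the identity $(\sum_u\sqrt{p(u)})^2=1+(L-1)z_L(p)$ and import only the pointwise bound $z_L(p)^2\le h_L(p)$, which is legitimate (it is the cited proposition specialized to a trivial observation) and, after Jensen, recovers the averaged statement. Your argument for $1-\BPL{\rv{U}|\rv{Q}}\le\TVL{\rv{U}|\rv{Q}}$ coincides with the paper's: $\sqrt{ab}+|a-b|/2\ge(a+b)/2$, summed over ordered pairs and averaged over $q$. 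The real divergence is the upper bound $\TVL{\rv{U}|\rv{Q}}\le\sqrt{1-\BPL{\rv{U}|\rv{Q}}^2}$. The paper applies Jensen with respect to the uniform measure on triples $(q,u,u')$ and then invokes the claim that $\sum_u\sum_{\underline{u'}\neq u}\left((p(u)+p(u'))/2\right)^2\le L(L-1)/L^2$ with the maximum attained at the uniform distribution; but writing $S=\sum_u p(u)^2$ this sum equals $\frac{1}{2}\left((L-2)S+1\right)$, so for $L\ge 3$ the uniform distribution is the \emph{minimizer} and the claimed bound fails (a point mass gives $(L-1)/2$, which exceeds $(L-1)/L$). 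Your triangle-inequality-in-$\mathbb{R}^2$ argument --- each ordered pair contributes a vector $(\zeta_{uu'},\tau_{uu'})$ of Euclidean norm exactly $(p(u)+p(u'))/2$, the norms sum to $L-1$, hence $\sqrt{z_L(p)^2+t_L(p)^2}\le 1$ --- bypasses that step entirely, and followed by Jensen over $q$ alone it yields $\BPL{\rv{U}|\rv{Q}}^2+\TVL{\rv{U}|\rv{Q}}^2\le 1$ cleanly. So your proposal not only establishes the lemma but quietly repairs the one place where the paper's own derivation does not hold as stated for $L\ge 3$; the only item you should make fully explicit is the pointwise inequality $z_L(p)^2\le h_L(p)$, for which a careful citation (or a short direct argument generalizing \Cref{lem_h2 k2 leq 1}) suffices.
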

\begin{remark}
Inequality~\eqref{eq_TVL BPL ENTL bounds} was also independently derived for the binary symmetric case in \cite{Dumer_stepped}, using a different proof. 
\end{remark}

\begin{IEEEproof}
The inequalities in~\eqref{eq_BPL ENTL bounds} were derived in~\cite[Proposition 3.3]{sasoglu_thesis}. Thus, we concentrate on showing~\eqref{eq_TVL BPL ENTL bounds}.

To see the right-most inequality of~\eqref{eq_TVL BPL ENTL bounds}, note that 
\[ \sum_q \sol{\sum_{u'\neq u}}\frac{\pr{Q}{q}}{L(L-1)} =  \sum_{q,u} \sol{\sum_{\underline{u'} \neq u}}  \frac{\pr{Q}{q}}{L(L-1)} = 1.\] 
Thus, by Jensen's inequality,
\begin{align*} \frac{\BPL{\rv{U}|\rv{Q}}^2}{L^2} &\leq \sum_{q,u} \sol{\sum_{\underline{u'} \neq u}}  \frac{\pr{Q}{q}}{L(L-1)} \left(\sqrt{\pr{U|Q}{u|q}\pr{U|Q}{u'|q}}\right)^2, \\	
	\frac{\TVL{\rv{U}|\rv{Q}}^2}{L^2} &\leq \sum_{q,u} \sol{\sum_{\underline{u'} \neq u}}  \frac{\pr{Q}{q}}{L(L-1)} \left(\frac{\pr{U|Q}{u|q}-\pr{U|Q}{u'|q}}{2}\right)^2.
\end{align*}
Next, observe that  
\begin{align*}
\left(\sqrt{\pr{U|Q}{u|q}\pr{U|Q}{u'|q}}\right)^2 &+ \left(\frac{\pr{U|Q}{u|q}-\pr{U|Q}{u'|q}}{2}\right)^2 \\ 
&= \left( \frac{\pr{U|Q}{u|q}+\pr{U|Q}{u'|q}}{2}\right)^2 
\end{align*}
and that subject to the constraint $\sum_u \pr{U|Q}{u|q} =1$, we have 
\[ \sum_u \sol{\sum_{\underline{u'}\neq u}} \left(\frac{\pr{U|Q}{u|q}+\pr{U|Q}{u'|q}}{2}\right)^2 \leq \frac{L(L-1)}{L^2}.\]
This can be seen using Lagrange multipliers; the maximum value is obtained with equality when $\pr{U|Q}{u|q} = 1/L$ for all $u \in \mathcal{U}$.  
Thus, we obtain 
\[	\BPL{\rv{U}|\rv{Q}}^2 + \TVL{\rv{U}|\rv{Q}}^2 \leq 1,\] 
which implies the right-most inequality of~\eqref{eq_TVL BPL ENTL bounds}. This also shows that indeed $\BPL{\rv{U}|\rv{Q}} \leq 1$ and $ \TVL{\rv{U}|\rv{Q}} \leq 1$. 

For the left-most inequality of~\eqref{eq_TVL BPL ENTL bounds}, observe that for any $a,b\geq 0$ we have $\sqrt{ab} \geq \min\{a,b\}$, by which
\begin{align*}
\frac{|a-b|}{2} + \sqrt{ab} &= \frac{\max\{a,b\} - \min\{a,b\}}{2} + \sqrt{ab} \\
				    &\geq \frac{\max\{a,b\} - \min\{a,b\} + 2 \min\{a,b\}}{2} \\
				    &= \frac{\max\{a,b\} + \min\{a,b\}}{2} \\
				    &= \frac{a+b}{2}.
\end{align*}
Since 
\[ \sol{\sum_{u' \neq u}} \pr{U|Q}{u|q} = \sol{\sum_{u' \neq u}} \pr{U|Q}{u'|q} = L-1,\] 
we have 
\begin{align*}
\BPL{\rv{U}|\rv{Q}} + \TVL{\rv{U}|\rv{Q}} 
&\geq \sum_q \pr{Q}{q} \sol{\sum_{u' \neq u}} \frac{\pr{U|Q}{u|q}+\pr{U|Q}{u'|q}}{2(L-1)}\\
&= 1. 
\end{align*}
This yields the left-most inequality of~\eqref{eq_TVL BPL ENTL bounds}.	
\end{IEEEproof}

Indeed, inequalities~\eqref{eq_L bounds} imply that when either $\BPL{\rv{U}|\rv{Q}}$ or $\ENTL{\rv{U}|\rv{Q}}$ approach $0$ or $1$ then $\TVL{\rv{U}|\rv{Q}}$ approaches $1$ or $0$, respectively, and vice versa. 

In the binary case, the total-variation distance and the probability of error were related by~\eqref{eq_TV relation 1}. In the non-binary case, the probability of error is given by 
\[ \PeL{\rv{U}|\rv{Q}} = \sum_q \pr{Q}{q}(1-\max_u{\pr{U|Q}{u|q}}).\] 
The non-binary probability of error and total variation distance are related, as shown in the following lemma. \begin{lemma}\label{lem_non binary TV and Pe}
	The non-binary probability of error and total variation distance are related by 
	\[ \TVL{\rv{U}|\rv{Q}} \leq 1 - \frac{2}{L-1}\PeL{\rv{U}|\rv{Q}}.\] 
\end{lemma}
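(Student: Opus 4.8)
The plan is to establish the inequality pointwise in $q$ and then average. First I would note that both sides of the claimed bound are $\pr{Q}{q}$-weighted averages over $q$ of quantities attached to a single symbol: the contribution of a fixed $q$ to $\PeL{\rv{U}|\rv{Q}}$ is $1-\max_u\pr{U|Q}{u|q}$, and the contribution of the same $q$ to $\TVL{\rv{U}|\rv{Q}}$ is $\frac{1}{L-1}\sum_{u'\neq u}\frac{1}{2}\big|\pr{U|Q}{u|q}-\pr{U|Q}{u'|q}\big|$, where, per the paper's convention, the double sum runs over all \emph{ordered} pairs $(u,u')$ with $u\neq u'$. Since $\sum_q\pr{Q}{q}=1$, it therefore suffices to fix $q$, abbreviate $p_u=\pr{U|Q}{u|q}$ (a probability vector on $\mathcal{U}$), pick $u^\star$ with $p_{u^\star}=\max_u p_u$ so that the $\PeL{}$ contribution equals $1-p_{u^\star}$, and prove
\[ \frac{1}{L-1}\sum_{u'\neq u}\frac{|p_u-p_{u'}|}{2}\ \leq\ 1-\frac{2}{L-1}\big(1-p_{u^\star}\big). \]

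The key step I would use is the elementary identity $|a-b|=a+b-2\min\{a,b\}$, valid for $a,b\geq 0$. Applying it term by term and using $\sum_{u'\neq u}(p_u+p_{u'})=2(L-1)$ (each of the two halves equals $(L-1)\sum_u p_u=L-1$), the total-variation contribution rewrites as
\[ \frac{1}{L-1}\sum_{u'\neq u}\frac{|p_u-p_{u'}|}{2}\ =\ 1-\frac{1}{L-1}\sum_{u'\neq u}\min\{p_u,p_{u'}\}. \]
It then remains to show $\sum_{u'\neq u}\min\{p_u,p_{u'}\}\geq 2(1-p_{u^\star})$. For this I would retain only those ordered pairs one of whose coordinates equals $u^\star$: for each $u'\neq u^\star$ such a pair occurs twice, once as $(u^\star,u')$ and once as $(u',u^\star)$, and contributes $\min\{p_{u^\star},p_{u'}\}=p_{u'}$ each time, so these terms alone sum to $2\sum_{u'\neq u^\star}p_{u'}=2(1-p_{u^\star})$, while all the remaining terms are nonnegative. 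Substituting back and averaging over $q$ with weights $\pr{Q}{q}$ yields the lemma.

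I do not anticipate a genuine obstacle; the argument is short once the right identity is used. The only points that demand care are bookkeeping: correctly accounting for the underlined-versus-unrestricted summation convention (so that $\sum_{u'\neq u}$ is a double sum over ordered pairs, which is what pins down constants such as the $2(L-1)$ above) and the factor $\tfrac12$ in the definition of $\TVL{}$. As a consistency check, the chain is tight exactly when $\min\{p_u,p_{u'}\}=0$ for every pair avoiding $u^\star$, i.e.\ when $\pr{U|Q}{\cdot|q}$ is supported on at most two symbols for almost every $q$; in particular, for $L=2$ one recovers the equality $\TV{\rv{U}|\rv{Q}}=1-2\Pe{\rv{U}|\rv{Q}}$ from~\eqref{eq_TV relation 1}.
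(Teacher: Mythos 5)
Your proof is correct, but it takes a genuinely different route from the paper's. Both arguments reduce the lemma to a pointwise bound at each fixed $q$ and then average, but the paper first sorts the conditional probabilities, $\pr{U|Q}{0|q}\leq\cdots\leq\pr{U|Q}{L-1|q}$, evaluates the double sum of absolute differences exactly as $\sum_{u}u\,\pr{U|Q}{u|q}-\sum_{u}(L-1-u)\pr{U|Q}{u|q}$ (justified by a matrix double-counting argument), and then bounds the coefficient via $(L-1-u)\geq\min\{1,L-1-u\}$ to surface $1-\max_u\pr{U|Q}{u|q}$. Your decomposition $|a-b|=a+b-2\min\{a,b\}$ sidesteps the sorting and the coefficient bookkeeping entirely: it rewrites the total-variation contribution as $1-\frac{1}{L-1}\sum_{u'\neq u}\min\{p_u,p_{u'}\}$, after which the lower bound $\sum_{u'\neq u}\min\{p_u,p_{u'}\}\geq 2(1-p_{u^\star})$ by retaining only the ordered pairs through $u^\star$ is immediate. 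The two proofs yield the same tightness condition (conditional support of size at most two for almost every $q$, hence equality when $L=2$), but yours makes that visible at a glance, whereas in the paper it is hidden in where $(L-1-u)\geq\min\{1,L-1-u\}$ holds with equality. Your version is the more transparent of the two and loses nothing; the only care required is exactly the ordered-pair bookkeeping you flagged, which you handled correctly.
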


\begin{IEEEproof}
	Let $\mathcal{U} = \{0,1, \ldots, L-1\}$. Without loss of generality we assume that, for a given $q \in \mathcal{Q}$, 
	\begin{equation} \pr{U|Q}{0|q} \leq \pr{U|Q}{1|q} \leq \cdots \leq \pr{U|Q}{L-1|q}.\label{eq_order of PUQ} \end{equation}
	We then have 
	\begin{align*}
		&\sol{\sum_{u' \neq u}}\frac{|\pr{U|Q}{u|q} - \pr{U|Q}{u'|q}|}{2} \\
		&\quad\eqann{a} \sum_{u=0}^{L-1}u\pr{U|Q}{u|q} - \sum_{u=0}^{L-1}(L-1-u)\pr{U|Q}{u|q} \\ 
		&\quad\eqann{b} L - \sum_{u=0}^{L-1}(L-u)\pr{U|Q}{u|q} - \sum_{u=0}^{L-1}(L-1-u)\pr{U|Q}{u|q}\\
		&\quad= (L-1) - 2\sum_{u=0}^{L-1}(L-1-u) \pr{U|Q}{u|q}\\
		&\quad\leq (L-1) -2\sum_{u=0}^{L-1}\min\{1,L-1-u\} \pr{U|Q}{u|q}\\
		&\quad= (L-1) -2 \sum_{u=0}^{L-2} \pr{U|Q}{u|q}\\
		&\quad\eqann{c} (L-1) -2(1-\max_u\pr{U|Q}{u|q}).
	\end{align*}
	To see~\eqannref{a}, note that $|a-b| = \max\{a,b\}-\min\{a,b\}$. Using the ordering~\eqref{eq_order of PUQ}, we construct two $L\times L$ matrices: one with constant columns, with value $\pr{U|Q}{u|q}$ in column $(u+1)$,  and one with constant rows, with value $\pr{U|Q}{u|q}$ in row $(u+1)$, $u=0,1,\ldots,L-1$    . We compute the difference of the two matrices;  the desired sum equals the sum of elements above the diagonal. 
	Then, \eqannref{b} is because $\sum u \pr{U|Q}{u|q} + \sum (L-u) \pr{U|Q}{u|q} = L$, 
	and~\eqannref{c} is by the ordering~\eqref{eq_order of PUQ} and since $\sum_u \pr{U|Q}{u|q} = 1$. 
	Thus, for any $q \in \mathcal{Q}$,
	\begin{equation} 
	\sol{\sum_{u' \neq u}}\frac{|\pr{U|Q}{u|q} - \pr{U|Q}{u'|q}|}{2(L-1)} \leq 1 - \frac{2(1-\max_u\pr{U|Q}{u|q})}{L-1}	. \label{eq_TVL and PeL}
	\end{equation}
	Using~\eqref{eq_TVL and PeL} in the definition of $\TVL{\rv{U}|\rv{Q}}$ and recalling the expression for $\PeL{\rv{U}|\rv{Q}}$, we obtain the desired inequality. 
\end{IEEEproof}
The following corollary tightens~\cite[Proposition 3.2]{sasoglu_thesis}. 
\begin{corollary}
	The non-binary Bhattacharyya parameter upper-bounds the probability of error according to
	\[ \PeL{\rv{U}|\rv{Q}} \leq \frac{L-1}{2}\BPL{\rv{U}|\rv{Q}}.\]
\end{corollary}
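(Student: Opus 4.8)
The plan is to obtain the bound by composing the two lemmas that immediately precede the corollary, so no genuinely new estimate is needed. First I would rearrange \Cref{lem_non binary TV and Pe}: from $\TVL{\rv{U}|\rv{Q}} \leq 1 - \tfrac{2}{L-1}\PeL{\rv{U}|\rv{Q}}$ we get, upon solving for $\PeL{\rv{U}|\rv{Q}}$,
\[ \PeL{\rv{U}|\rv{Q}} \leq \frac{L-1}{2}\bigl(1 - \TVL{\rv{U}|\rv{Q}}\bigr). \]
Next I would invoke the left-most inequality of~\eqref{eq_TVL BPL ENTL bounds} from \Cref{lem_non binary parameters}, namely $1 - \BPL{\rv{U}|\rv{Q}} \leq \TVL{\rv{U}|\rv{Q}}$, which rearranges to $1 - \TVL{\rv{U}|\rv{Q}} \leq \BPL{\rv{U}|\rv{Q}}$. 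Substituting this into the previous display yields
\[ \PeL{\rv{U}|\rv{Q}} \leq \frac{L-1}{2}\bigl(1 - \TVL{\rv{U}|\rv{Q}}\bigr) \leq \frac{L-1}{2}\BPL{\rv{U}|\rv{Q}}, \]
which is exactly the claimed inequality. When $L=2$ this recovers the well-known bound $\PeL{\rv{U}|\rv{Q}} \leq \BPL{\rv{U}|\rv{Q}}/2$ noted in \Cref{rem_TV and Pe}, so the statement is a consistent non-binary generalization and it sharpens the constant in~\cite[Proposition 3.2]{sasoglu_thesis}.

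There is essentially no obstacle here: the corollary is a one-line consequence of the two preceding lemmas, and all the real work — the matrix/ordering argument giving \Cref{lem_non binary TV and Pe} and the $\sqrt{ab}\geq\min\{a,b\}$ argument giving the lower bound on $\TVL{}$ in \Cref{lem_non binary parameters} — has already been carried out. The only point worth a sentence of care is that both inequalities used hold for each fixed $q$ pointwise and are then averaged against $\pr{Q}{q}$, so the chaining is legitimate without any additional convexity step. I would therefore keep the proof to the two displays above, perhaps prefaced by a single remark that it combines \Cref{lem_non binary parameters,lem_non binary TV and Pe}.
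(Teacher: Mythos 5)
Your proof is correct and is exactly the paper's argument: the paper's one-line proof likewise chains the left-most inequality of~\eqref{eq_TVL BPL ENTL bounds}, $1-\BPL{\rv{U}|\rv{Q}} \leq \TVL{\rv{U}|\rv{Q}}$, with \Cref{lem_non binary TV and Pe} and rearranges. Nothing further is needed.
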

\begin{IEEEproof}
	This is a consequence of the left-hand inequality of~\eqref{eq_TVL BPL ENTL bounds} and \Cref{lem_non binary TV and Pe}. 
\end{IEEEproof}

The non-binary distribution parameters are all natural extensions of their versions when $\rv{U}$ is binary. In particular, the non-binary parameters have the same form as their binary counterparts. As shown above, the consequences of \Cref{lem_TV distance bounds} apply to the non-binary parameters as well. 
They also satisfy \Cref{lem_effect of conditioning}; the extension of its proof is straightforward.  Thus, the non-binary distribution parameters may be used to define the relevant processes as in~\eqref{eq_defs of Kn Zn Hn} and~\eqref{eq_defs of Knhat Znhat Hnhat}. 

\subsection{Polarization of the Distribution Parameters}
In the binary case, fast polarization is obtained by \Cref{lem_simple proof}, which requires
polarization bounds on the Bhattacharyya and total variation distance processes. In the non-binary
case, the Bhattacharyya process and the total variation distance process are defined similarly to their binary
counterparts, with the relevant parameters replaced with their non-binary form presented above. 
The relevant polarization bounds for the non-binary Bhattacharyya process were obtained in~\cite[Lemma
3.5]{sasoglu_thesis}. We now establish polarization bounds for the total variation distance process
that extend~\Cref{prop_K is a supermartingale} to the non-binary case; we abuse notation and use
$\rv{K}_n$ to denote the non-binary counterpart of the total variation distance process.  \Cref{prop_K inequalities
for FAIM processes} is similarly extended; we omit the derivation.

\begin{proposition}
		Assume that $(\rv{X}_j,\rv{Y}_j)$, $j\in \mathbb{Z}$ is a memoryless process, where $\rv{X}_j \in \mathcal{U}$ such that $|\mathcal{U}| = L$, and $\rv{Y}_j \in \mathcal{Y}$. 
	Then, 
	\begin{equation}
		\rv{K}_{n+1} \leq \begin{dcases} \frac{2(L-1)}{L}\rv{K}_n^2 & \text{if } \rv{B}_{n+1} = 0 \\ 
 									  \left(1+\frac{L}{2}\right)\rv{K}_n  & \text{if } \rv{B}_{n+1} = 1. 
 									 \end{dcases}	\label{eq_polarization bounds for TV nonbinary}
	\end{equation} 
\end{proposition}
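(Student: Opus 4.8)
\emph{Proof proposal.} The plan is to follow the template of \Cref{prop_K is a supermartingale}, recasting every step around the cyclic group structure that mod-$L$ addition imposes. First I would fix $\rv{B}_1,\dots,\rv{B}_n$, put $i-1=(\rv{B}_1\cdots\rv{B}_n)_2$, and set $\rv{T}_i=\rv{U}_i\oplus\rv{V}_i$, where $\oplus$ (with inverse $\ominus$) denotes mod-$L$ addition. By memorylessness and stationarity, $\prrv{\rv{U}_i,\rv{V}_i,\rv{Q}_i,\rv{R}_i}{u,v,q,r}=P(u,q)P(v,r)$ with $P\equiv\prrv{\rv{U}_i,\rv{Q}_i}{}=\prrv{\rv{V}_i,\rv{R}_i}{}$, and $\rv{K}_n=\TVL{\rv{U}_i|\rv{Q}_i}=\TVL{\rv{V}_i|\rv{R}_i}$. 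For $q\in\mathcal{Q}$ and each $r$ I would write $A_q,B_r\in\mathbb{R}^L$ for the vectors $A_q(u)=P(u,q)$, $B_r(v)=P(v,r)$, and introduce the functional $\Delta(C)=\frac{1}{2(L-1)}\sum_{t,t'}|C(t)-C(t')|$ on $\mathbb{R}^L$. Absorbing $\prrv{\rv{Q}_i}{q}$ into the joint distribution, $\TVL{\rv{U}_i|\rv{Q}_i}=\sum_q\Delta(A_q)$, and similarly $\TVL{\rv{V}_i|\rv{R}_i}=\sum_r\Delta(B_r)$. Two elementary facts about $\Delta$ will carry the argument: it is invariant under adding a constant vector, and for a zero-sum $\hat C$ one has $\sum_t|\hat C(t)|\le\frac{2(L-1)}{L}\,\Delta(\hat C)$ --- because $\sum_{t'}|\hat C(t)-\hat C(t')|\ge\bigl|L\hat C(t)-\sum_{t'}\hat C(t')\bigr|=L|\hat C(t)|$, and summing over $t$ gives $2(L-1)\Delta(\hat C)\ge L\sum_t|\hat C(t)|$.

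For $\rv{B}_{n+1}=0$ I would use that $\prrv{\rv{T}_i,\rv{Q}_i,\rv{R}_i}{t,q,r}$ equals the cyclic convolution $(A_q\ast B_r)(t)=\sum_v A_q(t\ominus v)B_r(v)$, so that $\rv{K}_{n+1}=\sum_{q,r}\Delta(A_q\ast B_r)$. For fixed $q,r$, since $\sum_v\bigl(A_q(t\ominus v)-A_q(t'\ominus v)\bigr)=0$, I may subtract from $B_r$ its mean $\tfrac1L\prrv{\rv{R}_i}{r}$; with $\hat B_r$ the resulting zero-sum vector, the triangle inequality gives $|(A_q\ast B_r)(t)-(A_q\ast B_r)(t')|\le\sum_v|\hat B_r(v)|\,|A_q(t\ominus v)-A_q(t'\ominus v)|$, and summing over $t,t'$ and relabelling $a=t\ominus v$ yields $\Delta(A_q\ast B_r)\le\bigl(\sum_v|\hat B_r(v)|\bigr)\Delta(A_q)\le\frac{2(L-1)}{L}\,\Delta(B_r)\,\Delta(A_q)$, where I used shift-invariance $\Delta(\hat B_r)=\Delta(B_r)$. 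Summing over $q,r$ and using stationarity gives $\rv{K}_{n+1}\le\frac{2(L-1)}{L}\bigl(\sum_q\Delta(A_q)\bigr)\bigl(\sum_r\Delta(B_r)\bigr)=\frac{2(L-1)}{L}\rv{K}_n^2$.

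For $\rv{B}_{n+1}=1$ I would start from $\prrv{\rv{T}_i,\rv{V}_i,\rv{Q}_i,\rv{R}_i}{t,v,q,r}=A_q(t\ominus v)B_r(v)$, so $2(L-1)\rv{K}_{n+1}=\sum_{t,q,r}\sum_{v\ne v'}|A_q(t\ominus v)B_r(v)-A_q(t\ominus v')B_r(v')|$. Applying~\eqref{eq_abcd equality} with $a=A_q(t\ominus v)$, $b=B_r(v)$, $c=A_q(t\ominus v')$, $d=B_r(v')$ and the triangle inequality (all four quantities are nonnegative), the summand is bounded by a term carrying $|B_r(v)-B_r(v')|$ plus one carrying $|A_q(t\ominus v)-A_q(t\ominus v')|$. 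In the first term, summing over $t$ turns $A_q(t\ominus v)+A_q(t\ominus v')$ into $2\prrv{\rv{Q}_i}{q}$, and the remaining sums over $q,r,v,v'$ produce $2(L-1)\rv{K}_n$. In the second, summing over $r$ turns $B_r(v)+B_r(v')$ into $\prrv{\rv{V}_i}{v}+\prrv{\rv{V}_i}{v'}$, and the reindexing $a=t\ominus v$ (summing out $t$) converts the $v,v'$-sum into $\sum_q 2(L-1)\Delta(A_q)=2(L-1)\rv{K}_n$. Adding, $2(L-1)\rv{K}_{n+1}\le 4(L-1)\rv{K}_n$, i.e.\ $\rv{K}_{n+1}\le 2\rv{K}_n\le\bigl(1+\tfrac{L}{2}\bigr)\rv{K}_n$.

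The hard part will be the $\rv{B}_{n+1}=0$ branch. A naive triangle-inequality bound on $\Delta(A_q\ast B_r)$ that retains the \emph{total mass} $\prrv{\rv{R}_i}{r}$ of $B_r$ collapses, after summing over $r$, only to the true but useless estimate $\rv{K}_{n+1}\le\rv{K}_n$; the quadratic decay is recovered \emph{solely} by exploiting that the $A_q$-differences sum to zero, which lets me replace $\prrv{\rv{R}_i}{r}$ by $\sum_v|\hat B_r(v)|$, after which the sharp comparison $\sum_v|\hat B_r(v)|\le\frac{2(L-1)}{L}\Delta(B_r)$ yields both the correct order and the stated constant. Once this centering step is in place, the $\rv{B}_{n+1}=1$ branch is a routine transcription of the binary computation in the proof of \Cref{prop_K is a supermartingale}.
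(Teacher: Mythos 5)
Your proposal is correct, and the minus branch is in substance identical to the paper's: your centering of $B_r$ at its mean and the inequality $\sum_t|\hat C(t)|\leq\frac{2(L-1)}{L}\Delta(\hat C)$ for zero-sum vectors is exactly the paper's inequality~\eqref{eq_sum uu' vs sum u} applied after exploiting that the $A_q$-differences sum to zero, and both yield $\frac{2(L-1)}{L}\rv{K}_n^2$. Where you genuinely diverge is the plus branch. The paper, after applying~\eqref{eq_abcd equality} and the triangle inequality, discards the weight via the crude bound $P(v|r)+P(v'|r)\leq 1$ and then pays a factor of $L$ when summing the unweighted $|P(t-v|q)-P(t-v'|q)|$ over $t$ and over all ordered pairs $v\neq v'$, which is how the constant $1+\frac{L}{2}$ arises. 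You instead keep the weight $\prrv{\rv{V}_i}{v}+\prrv{\rv{V}_i}{v'}$ and observe that $\sum_t|A_q(t\ominus v)-A_q(t\ominus v')|$ depends only on the difference $d=v\ominus v'$; summing the weight over each difference class gives exactly $2$, which cancels the $\tfrac12$ and turns the second term into exactly $\sum_q\sum_{d\neq 0}\sum_a|A_q(a)-A_q(a\oplus d)|=2(L-1)\rv{K}_n$. This yields $\rv{K}_{n+1}\leq 2\rv{K}_n$, which is strictly sharper than the paper's $\bigl(1+\frac{L}{2}\bigr)\rv{K}_n$ for $L>2$ and coincides with it at $L=2$, so the stated proposition follows a fortiori. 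The improvement does not change the exponent $E=1/2$ obtained from \Cref{lem_simple proof} (only the exponents $d_0=2$, $d_1=1$ matter there), but it does give a cleaner, $L$-independent single-step constant for the plus transform, matching the binary case of \Cref{prop_K is a supermartingale}.
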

Observe that when $L=2$, the right-hand-side of~\eqref{eq_polarization bounds for TV nonbinary} coincides with that of~\eqref{eq_polarization bounds for TV}. 

\begin{IEEEproof}
	As in \Cref{prop_K is a supermartingale}, we fix $\rv{B}_1,\ldots, \rv{B}_{n}$ and let $i-1 = (\rv{B}_1 \rv{B}_{2} \cdots \rv{B}_n)_2$.  
This also fixes the value of $\rv{K}_n$. We denote $\prrv{\rv{U}_i,\rv{V}_i,\rv{Q}_i,\rv{R}_i}{u,v,q,r} = P(u,q)P(v,r)$. Slightly abusing notation, we further denote $P(u,q) = P(q)P(u|q)$. We set $\rv{T}_i = \rv{U}_i + \rv{V}_i$; this is modulo-$L$ addition, so \[\prrv{\rv{T}_i,\rv{V}_i,\rv{Q}_i,\rv{R}_i}{t,v,q,r} = P(t-v,q)P(v,r),\] where $t-v$ is computed modulo-$L$. 

We shall need the following inequality: 
\begin{equation}
\begin{split}
&\sol{\sum_{\underline{u'} \neq u}} |\pr{U|Q}{u|q} - \pr{U|Q}{u'|q}|\\
&\quad\eqann[\geq]{a} \left|\sol[r]{\sum_{\underline{u'} \neq u}} (\pr{U|Q}{u|q} - \pr{U|Q}{u'|q}) \right|	 \\ 
&\quad\eqann{b}  \left|(L-1)\pr{U|Q}{u|q} - (1-\pr{U|Q}{u|q}) \right| \\
&\quad= L \left|\pr{U|Q}{u|q} - \frac{1}{L}\right|.
\end{split}
\label{eq_sum uu' vs sum u}	
\end{equation}
Here, \eqannref{a} is by the triangle inequality and \eqannref{b} is because $\sum_u \pr{U|Q}{u|q} = 1$.

We compute $\rv{K}_{n+1}$ using~\eqref{eq_Kn single step polarization}. For the case $\rv{B}_{n+1} = 0$, note that
\begin{align*}
	&\sol{\sum_{t'\neq t}} \left|\sum_v P(v|r)(P(t-v|q)-P(t'-v|q)) \right| \\
	&\eqann{a} \sol{\sum_{t'\neq t}} \left|\sum_v \bigg(P(v|r)-\frac{1}{L}\bigg)\bigg(P(t-v|q)-P(t'-v|q)\bigg) \right| \\
	&\eqann[\leq]{b} \sol{\sum_{t'\neq t}} \sum_v \left|\bigg(P(v|r)-\frac{1}{L}\bigg)\bigg(P(t-v|q)-P(t'-v|q)\bigg) \right| \\
	&\eqann[=]{c} \sol{\sum_{t'\neq t}} \sum_v \bigg|P(v|r)-\frac{1}{L}\bigg|\cdot\bigg|P(t-v|q)-P(t'-v|q) \bigg|\\
	&=   \sum_v \bigg|P(v|r)-\frac{1}{L}\bigg|\cdot \sol{\sum_{t'\neq t}} \bigg|P(t-v|q)-P(t'-v|q) \bigg| \\ 
	&\eqann[\leq]{d}\frac{1}{L}  \sol{\sum_{v'\neq v}}\bigg|P(v|r)-P(v'|r) \bigg| \cdot \sol{\sum_{t'\neq t}} \bigg|P(t|q)-P(t'|q) \bigg|,
\end{align*}
where \eqannref{a} is because $\sum_v P(t-v|q) = \sum_v P(t'-v|q)$ for any $t,t'$, \eqannref{b} is by the triangle inequality, \eqannref{c} is because $|ab| = |a|\cdot|b|$, and \eqannref{d} is by~\eqref{eq_sum uu' vs sum u} and since the sum over $t,t'$ is unaffected by the shift in $v$. 
Thus,
\begin{align*}
&\TVL{\rv{T}_i|\rv{Q}_i,\rv{R}_i} \\ 
&= \sum_{q,r}\frac{P(q)P(r)}{2(L-1)} \sol{\sum_{t\neq t'}} \left|\sum_v P(v|r)(P(t-v|q)-P(t'-v|q)) \right| \\
&\leq \frac{2(L-1)}{L} \rv{K}_n^2. 
\end{align*}
Recalling~\eqref{eq_Kn single step polarization}, this proves the top inequality of~\eqref{eq_polarization bounds for TV nonbinary}. 

For the case $\rv{B}_{n+1} = 1$, note that by~\eqref{eq_abcd equality} and the triangle inequality, when $v'\neq v$ we have
\begin{align*}
&2\big|P(t-v|q)P(v|r)-P(t-v'|q)P(v'|r)\big| \\
&= \bigg|\Big(P(t-v|q)+P(t-v'|q)\Big)\Big(P(v|r)-P(v'|r)\Big) \\
&\quad+ \Big(P(v|r)+P(v'|r)\Big)\Big(P(t-v|q)-P(t-v'|q)\Big)\bigg| \\
&\leq \Big(P(t-v|q)+P(t-v'|q)\Big)\cdot \bigg|P(v|r)-P(v'|r)\bigg| \\
&\quad+ \Big(P(v|r)+P(v'|r)\Big)\cdot\bigg|P(t-v|q)-P(t-v'|q)\bigg| \\
&\leq \Big(P(t-v|q)+P(t-v'|q)\Big)\cdot \bigg|P(v|r)-P(v'|r)\bigg| \\
&\quad+ \bigg|P(t-v|q)-P(t-v'|q)\bigg|.
\end{align*}
The last inequality is due to the upper bound $P(v|r) + P(v'|r) \leq 1$ when $v' \neq v$. 
Hence, 
\begin{align*}
	&\sum_t \sol{\sum_{v'\neq v}} |P(t-v|q)P(v|r)-P(t-v'|q)P(v'|r)| \\
	&\leq \sol{\sum_{v'\neq v}} \big|P(v|r)-P(v'|r)\big| \\ &\quad  + 
	  \frac{1}{2}\sum_t  \sol{\sum_{v'\neq v}} \big|P(t-v|q)-P(t-v'|q)\big|\\
	&= \sol{\sum_{v'\neq v}} \big|P(v|r)-P(v'|r)\big| + \frac{L}{2} \sol{\sum_{t'\neq t}} \big|P(t|q)-P(t'|q)\big|.
\end{align*}
Consequently, 
\begin{align*}
&\TVL{\rv{V}_i|\rv{T}_i,\rv{Q}_i,\rv{R}_i} \\ 
&= \sol{\sum_{\substack{q,r,\\t}}} \sum_{v'\neq v} \frac{P(q)P(r)}{2(L-1)} |P(t-v|q)P(v|r)-P(t-v'|q)P(v'|r)|\\
&\leq \left(1 + \frac{L}{2}\right) \rv{K}_n.
\end{align*}
This proves the bottom inequality of~\eqref{eq_polarization bounds for TV nonbinary}.
\end{IEEEproof}

The bounds in~\eqref{eq_polarization bounds for TV nonbinary} are of the form required in \Cref{lem_simple proof}, allowing its use to establish fast polarization of the total variation distance process.

\section{Auxiliary Proofs for \Cref{sec_FAIM}} 
\label{app_proof of XY is psi mixing}
We denote $\rv{A}_j = (\rv{X}_j,\rv{Y}_j)$, $j \in\mathbb{Z}$, with realization $\alpha_j$, and $\rv{A}_M^N = (\rv{X}_M^N,\rv{Y}_M^N)$ with realization $\alpha_M^N$. For brevity, we denote  
$\prrv{\rv{A}_M^N}{} \equiv \prrv{\rv{A}_M^N}{\alpha_M^N}$, and similarly $\prrv{\rv{S}_N}{} \equiv \prrv{\rv{S}_N}{s_N}$. 

\begin{figure}
\begin{center}
\begin{tikzpicture}
	\node[draw, thick, minimum width = 3cm, minimum height = 0.9 cm] (box1) at (0,0) {$\rv{A}_1^L = (\rv{X}_1^L,\rv{Y}_1^L)$}; 
	\node[draw, thick, minimum width = 4cm, minimum height = 0.9 cm, right = 1 cm of box1] (box2) {$\rv{A}_{M+1}^N= (\rv{X}_{M+1}^N,\rv{Y}_{M+1}^N)$}; 
	\node[above = 0.1 of box1.north west] {$\rv{S}_0$};
	\node[below = 0.4 of box1.south west, anchor = base] {$a$};
	\node[above = 0.1 of box1.north east] {$\rv{S}_{L}$};
	\node[below = 0.4 of box1.south east, anchor = base] {$b$};
	\node[above = 0.1 of box2.north west] {$\rv{S}_M$};
	\node[below = 0.4 of box2.south west, anchor = base] {$c$};
	\node[above = 0.1 of box2.north east] {$\rv{S}_{N}$};
	\node[below = 0.4 of box2.south east, anchor = base] {$d$};
\end{tikzpicture}
\end{center}
\caption{Two blocks of a FAIM process, not necessarily of the same length. The initial state of the first block, $\rv{S}_0$, assumes value $a \in\mathcal{S}$. The final state of the first block, $\rv{S}_L$, assumes value $b \in\mathcal{S}$. The initial state of the second block, $\rv{S}_M$, assumes value $c \in\mathcal{S}$. The final state of the second block, $\rv{S}_N$, assumes value $d \in\mathcal{S}$.}\label{fig_two blocks LM}
\end{figure}
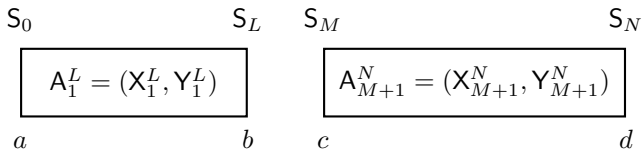

\begin{IEEEproof}[Proof of \Cref{lem_XY is psi-mixing}]
The function $\psi(N)$ was defined in~\eqref{eq_def of psi(n)}. We repeat the definition below using a notation that highlights the random variables at play. We deliberately do not use the notation~\eqref{eq_def of Q0 Qmn}, to explicitly show which random variables are being marginalized.
\[ \psi(N) = \begin{dcases} \max_{\istate,\mstate} \frac{\prrv{\rv{S}_N|\rv{S}_0}{\mstate|\istate}}{\prrv{\rv{S}_0}{\mstate}} & \text{if } N>0 \\[0.1cm]
 \max_{\istate} \frac{1}{\prrv{\rv{S}_0}{\istate}} &\text{if } N = 0. 	
 \end{dcases}
\] 
 Recall that by stationarity, $\prrv{\rv{S}_0}{} = \prrv{\rv{S}_N}{}$ for any $N$, so $\prrv{\rv{S}_N|\rv{S}_0}{\mstate|\istate}/\prrv{\rv{S}_0}{\mstate} = \prrv{\rv{S}_N|\rv{S}_0}{\mstate|\istate}/\prrv{\rv{S}_N}{\mstate}$. 

Since $\rv{S}_j$, $j=1,2,\ldots$ is an aperiodic and irreducible stationary finite-state Markov chain, $\psi(N)$ is non-increasing and $\psi(N) \to 1$ as $N\to \infty$. This is evident from the properties of such Markov chains; for a formal proof of this statement, see~\cite[Theorem 7.14]{bradley2007}. For such Markov chains $\prrv{\rv{S}_0}{\istate}>0$ for any $\istate \in \mathcal{S}$, so $\psi(0) < \infty$. 

It remains to show that $\prrv{\rv{A}_1^L,\rv{A}_{M+1}^N}{} \leq \psi(M-L) \prrv{\rv{A}_1^L}{}\prrv{\rv{A}_{M+1}^N}{}$. Consider first the case  $M>L$. Denote by $a,b,c,d$ the \emph{values} of states $\rv{S}_0,\rv{S}_L,\rv{S}_M,$ and $\rv{S}_N$, respectively (see \Cref{fig_two blocks LM}). Then,  
\begin{align*}
	&\prrv{\rv{A}_1^L,\rv{A}_{M+1}^N}{} \\
	&= \sol{\sum_{\alpha_{L+1}^M}} \prrv{\rv{A}_1^L,\rv{A}_{L+1}^M,\rv{A}_{M+1}^N}{} \\
				  &= \sol[l]{\sum_{\alpha_{L+1}^M}} \sol[r]{\sum_{d,a}} \prrv{\rv{A}_1^L,\rv{A}_{L+1}^M,\rv{A}_{M+1}^N,\rv{S}_N|\rv{S}_0}{} \prrv{\rv{S}_0}{}\\
				  &= \sol[l]{\sum_{\substack{d,c, \\ b, a}}} \sum_{\alpha_{L+1}^M} \prrv{\rv{A}_{M+1}^N,\rv{S}_N|\rv{S}_M}{}\prrv{\rv{A}_{L+1}^M,\rv{S}_M|\rv{S}_L}{} \prrv{\rv{A}_1^L,\rv{S}_L|\rv{S}_0}{} \prrv{\rv{S}_0}{}\\
				  &=\sol{\sum_{\substack{d,c, \\ b, a}}}  \prrv{\rv{A}_{M+1}^N,\rv{S}_N|\rv{S}_M}{}\left(\sum_{\alpha_{L+1}^M} \prrv{\rv{A}_{L+1}^M,\rv{S}_M|\rv{S}_L}{}\right) \prrv{\rv{A}_1^L,\rv{S}_L|\rv{S}_0}{} \prrv{\rv{S}_0}{}\\
				  &= \sol{\sum_{\substack{d,c, \\ b, a}}}  \prrv{\rv{A}_{M+1}^N,\rv{S}_N|\rv{S}_M}{} \prrv{\rv{S}_M|\rv{S}_L}{} \prrv{\rv{A}_1^L,\rv{S}_L|\rv{S}_0}{} \prrv{\rv{S}_0}{} \\
				  &= \sol{\sum_{\substack{d,c, \\ b, a}}}  \prrv{\rv{A}_{M+1}^N,\rv{S}_N|\rv{S}_M}{} \prrv{\rv{S}_M}{} \frac{\prrv{\rv{S}_M|\rv{S}_L}{}}{\prrv{\rv{S}_M}{}} \prrv{\rv{A}_1^L,\rv{S}_L|\rv{S}_0}{} \prrv{\rv{S}_0}{} \\
				  &\leq \psi(M-L) \sol{\sum_{\substack{d,c, \\ b, a}}}  \prrv{\rv{A}_{M+1}^N,\rv{S}_N|\rv{S}_M}{} \prrv{\rv{S}_M}{}  \prrv{\rv{A}_1^L,\rv{S}_L|\rv{S}_0}{} \prrv{\rv{S}_0}{} \\
				  &= \psi(M-L) \prrv{\rv{A}_1^L}{} \prrv{\rv{A}_{M+1}^N}{}. 
				  \end{align*}
				  
We proceed similarly for the case $M=L$. Again, $a$ and $d$ represent the \emph{values} of states $\rv{S}_0$ and $\rv{S}_N$. Both $b$ and $b'$ represent values of state $\rv{S}_L$; this distinction is to distinguish the summation variables of two different sums over values of $\rv{S}_L$. Thus,  
\begin{align*}
	\prrv{\rv{A}_1^L,\rv{A}_{L+1}^N}{} 
			   &= \sol{\sum_{\substack{a,b, \\ d}}} \prrv{\rv{A}_{L+1}^N,\rv{S}_N|\rv{S}_L}{} \frac{\prrv{\rv{S}_L}{}}{\prrv{\rv{S}_L}{}} \prrv{\rv{A}_1^L,\rv{S}_L|\rv{S}_0}{} \prrv{\rv{S}_0}{}\\
			   &\leq \psi(0) \sol{\sum_{d,b}} \prrv{\rv{A}_{L+1}^N,\rv{S}_N|\rv{S}_L}{} \prrv{\rv{S}_L}{} \cdot \left(\sum_{b',a}\prrv{\rv{A}_1^L,\rv{S}_L|\rv{S}_0}{} \prrv{\rv{S}_0}{}\right)\\
			   &= \psi(0) \prrv{\rv{A}_1^L}{} \prrv{\rv{A}_{L+1}^N}{};
				  \end{align*}
where the inequality is because $\prrv{\rv{A}_1^L,\rv{S}_L|\rv{S}_0}{} \leq \sum_{b'} \prrv{\rv{A}_1^L,\rv{S}_L|\rv{S}_0}{}$. 
\end{IEEEproof}

\begin{IEEEproof}[Proof of \Cref{lem_two adjacent blocks independent given shared state}]
Due to aperiodicity and irreducibility of the state sequence, $\prrv{\rv{S}_M}{\istate}>0$ for any $\istate \in \mathcal{S}$.
By the Markov Property, 
\begin{align*}
	\prrv{\rv{A}_1^M,\rv{A}_{M+1}^N|\rv{S}_M}{} &= \frac{\prrv{\rv{S}_M,\rv{A}_1^M,\rv{A}_{M+1}^N}{}}{\prrv{\rv{S}_M}{}}	\\
	&= \frac{\prrv{\rv{S}_M}{} \cdot \prrv{\rv{A}_1^M|\rv{S}_M}{} \cdot \prrv{\rv{A}_{M+1}^N|\rv{S}_M,\rv{A}_1^M}{}}{\prrv{\rv{S}_M}{}} \\
	&= \prrv{\rv{A}_1^M|\rv{S}_M}{} \cdot \prrv{\rv{A}_{M+1}^N|\rv{S}_M}{}.
\end{align*}
This proves~\eqref{eq_block independence given state N}. 

To derive~\eqref{eq_block independence given state 0 N M}, some more care is required to avoid division by $0$.  
By the Markov property, 
\begin{align*}
& \prrv{\rv{S}_0,\rv{S}_M,\rv{S}_N}{}\cdot \prrv{\rv{A}_1^M,\rv{A}_{M+1}^N|\rv{S}_0,\rv{S}_M,\rv{S}_N}{}\\
	&\quad=\prrv{\rv{S}_0,\rv{S}_M,\rv{S}_N,\rv{A}_1^M,\rv{A}_{M+1}^N}{}
\\ &\quad= \prrv{\rv{S}_0,\rv{S}_M}{} \cdot \prrv{\rv{A}_1^M|\rv{S}_0,\rv{S}_M}{} \cdot \prrv{\rv{S}_N,\rv{A}_{M+1}^N|\rv{S}_0,\rv{S}_M,\rv{A}_1^M}{} \\
	&\quad= \prrv{\rv{S}_0,\rv{S}_M}{} \cdot \prrv{\rv{A}_1^M|\rv{S}_0,\rv{S}_M}{} \cdot \prrv{\rv{S}_N,\rv{A}_{M+1}^N|\rv{S}_M}{} \\
	&\quad= \prrv{\rv{S}_0,\rv{S}_M}{} \cdot \prrv{\rv{A}_1^M|\rv{S}_0,\rv{S}_M}{} \cdot \prrv{\rv{S}_N|\rv{S}_M}{} \cdot \prrv{\rv{A}_{M+1}^N|\rv{S}_M,\rv{S}_N}{} \\
	&\quad= \prrv{\rv{S}_0,\rv{S}_M}{}\cdot \prrv{\rv{S}_N|\rv{S}_M,\rv{S}_0}{} \cdot \prrv{\rv{A}_1^M|\rv{S}_0,\rv{S}_M}{}  \cdot \prrv{\rv{A}_{M+1}^N|\rv{S}_M,\rv{S}_N}{} \\ 
	&\quad= \prrv{\rv{S}_0,\rv{S}_M,\rv{S}_N}{}  \cdot \prrv{\rv{A}_1^M|\rv{S}_0,\rv{S}_M}{}  \cdot \prrv{\rv{A}_{M+1}^N|\rv{S}_M,\rv{S}_N}{}.
\end{align*}
Recalling the definition of conditional probability~\cite[Section 33]{billingsley1995probability}, this implies~\eqref{eq_block independence given state 0 N M}. 
\end{IEEEproof}

\bibliographystyle{IEEEtran} 
\bibliography{mybib.bib} 
\end{document}